\newcommand{\algmargin}{\the\ALG@thistlm}
\newtheorem{theorem}{Theorem}[section]
\newtheorem{lemma}[theorem]{Lemma}
\newtheorem{proposition}[theorem]{Proposition}
\newtheorem{corollary}[theorem]{Corollary}
\newtheorem{claim}[theorem]{Claim}
\newtheorem{assumption}[theorem]{Assumption}
\newtheorem{definition}{Definition}[section]
\newtheorem{remark}{Remark}
\DeclareMathOperator*{\argmax}{arg\,max}
\newcommand{\eps}{\varepsilon}
\DeclareMathOperator{\poly}{poly\!}
\newcommand\numberthis{\addtocounter{equation}{1}\tag{\theequation}}
\newcommand{\parse}{\textsc{Parse}}
\newcommand{\Parse}{\parse}
\newcommand{\Next}{\textsc{Next}\xspace}
\newcommand{\Trans}{\textsc{TempTranscript}\xspace}
\newcommand{\GOOD}{\mathsf{GOOD}}
\newcommand{\GOODi}[1]{\GOOD^{\le #1}}
\newcommand{\Skip}{\mathsf{SkipCnt}}
\newcommand{\Prev}{\mathsf{Prev}}
\newcommand{\Chain}{\textit{Chain}}
\newcommand{\GChain}{\textit{GoodChain}}
\newcommand{\noerr}{*}
\newcommand{\KW}{KW}
\newcommand{\AND}{\wedge}
\newcommand{\OR}{\vee}
\newcommand{\RC}{\mathsf{RC}}
\begin{document}

\title{Optimal Short-Circuit Resilient Formulas}

\author{%
Mark Braverman\thanks{Department of Computer Science, Princeton University, USA. \texttt{mbraverm@cs.princeton.edu}.} 
\and Klim Efremenko\thanks{Computer Science Department, Ben-Gurion University, Israel. \texttt{klimefrem@gmail.com}}
\and Ran Gelles\thanks{Faculty of Engineering, Bar-Ilan University, Israel. \texttt{ran.gelles@biu.ac.il}.}
\and Michael A. Yitayew\thanks{Department of Computer Science, Princeton University, USA.}
}%

\date{}

\maketitle

\begin{abstract}
We consider fault-tolerant boolean formulas in which the output of a faulty gate
is short-circuited to one of the gate's inputs. A recent result by Kalai et al. [FOCS 2012]
converts any boolean formula into a resilient formula of polynomial size that works correctly  if less than $1/6$ of the gates (on every input-to-output path) are faulty.
We improve the result of Kalai et al., and show how to efficiently fortify any boolean formula against
a fraction of $1/5$ of short-circuit gates per path, with only a polynomial blowup in size. We additionally show that it is impossible to obtain formulas with higher resilience and sub-exponential  growth in size.

\medskip

Towards our results, we consider interactive coding schemes when noiseless feedback is present; these produce  resilient boolean formulas via a Karchmer-Wigderson relation.
We develop a coding scheme that resists corruptions in up to a fraction of $1/5$ of the transmissions \emph{in each direction of the interactive channel}. We further show that such a level of noise is maximal for coding schemes whose communication blowup is sub-exponential. Our coding scheme has taken a surprising inspiration from Blockchain technology.
\end{abstract}

\section{Introduction}

Kleitman, Leighton and Ma~\cite{KLM97} asked the following question: assume you wish to build a logic circuit $C$ from AND and OR gates; however, due to some confusion, some small number of AND gates were placed in the box of the OR gates (and vice versa), and there is no way to distinguish between the two types of gates just by looking at them. Can you construct a ``resilient'' logic circuit $C'$ that computes the same functionality as~$C$, even if some (small number) of the AND gates are replaced with OR gates (and vice versa)?

The above toy question is a special case of a more general type of noise (faulty gates) known as \emph{short-circuit} noise. In this model, a faulty gate ``short-circuits'' one of its input-legs to the output-leg. That is, the output of the gate is determined by the value of one of its input-legs. The specific input that is connected to the output is determined by an all-powerful adversary, possibly as a function of the input to the circuit. This model is equivalent to a setting in which a faulty gate can be replaced with an arbitrary function~$g$, as long as it holds that $g(0,0)=0$ and $g(1,1)=1$. 
Note that this type of noise is different from the so-called von Neumann noise model for circuits~\cite{vonN56}, in which the noise flips the value of each wire in the circuit independently with probability~$p$. See~\cite{KLM97,KLR12} and references therein for a comparison between these two separate models.

The first solution to the above question---constructing circuits that are resilient to short-circuit faults---was provided by Kleitman et al.~\cite{KLM97}. They showed that for any number~$e$, 
a circuit of size $|C|$ gates can be transformed into a ``resilient'' circuit of size $|C'|$ that behaves correctly even if up to $e$ of its gates are faulty (short-circuited), and it further holds that $|C'| \le O(e\cdot |C| + e^{\log 3})$. 

Further progress was made by Kalai, Lewko, and Rao~\cite{KLR12}; they showed, for any constant $\eps>0$, 
how to convert any formula\footnote{A formula is a circuit in which each gate has a fan-out of~1.} $F$ of size $|F|$ into a resilient formula $F'$ of size $|F'|=\poly_\eps(|F|)$ such that $F'$ computes the same function that $F$~computes, as long as at most a fraction of $\left(\tfrac1{6}-\eps\right)$ of the gates in \emph{any input-to-output path} in $F'$ suffer from short-circuit noise. 
Kalai et al.\@ explicitly leave open the question of finding the \emph{optimal} fraction of faulty gates for a resilient formula~$F'$.\footnote{For instance, it is clear  that if all the gates in an input-to-output path can be short-circuited (i.e., the fraction of noise is~1), then the adversary has full control of the output of the circuit. Hence, the optimal noise rate for %
formulas lies within the range $[\frac1{6},1]$.}

In this work we show that a fraction of~$\tfrac15$ is a tight bound on the tolerable fraction of faulty gates per input-to-output path, subject to the condition that the increase in the size of the formula is sub-exponential.
Namely, we show how to convert any formula to a resilient version that tolerates up to a fraction of~$\tfrac15-\eps$ of short-circuited gates per path.
\begin{theorem}[Main, informal]\label{thm:main}
For any $\eps>0$, any formula~$F$ can be efficiently converted into a formula $F'$ of size $|F'|=\poly_\eps(|F|)$ 
that computes the same function as $F$  even when up to $\tfrac15-\eps$ of the gates in any of its input-to-output paths are short-circuited. 
\end{theorem}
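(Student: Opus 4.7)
The plan is to lift the problem from the circuit world to the communication-complexity world via the Karchmer--Wigderson (KW) correspondence, do the error-correction there, and translate back. Recall that any formula $F$ computing $f$ corresponds to a protocol solving the KW relation of $f$: Alice holds $x$ with $f(x)=1$, Bob holds $y$ with $f(y)=0$, and they must agree on a coordinate $i$ with $x_i\neq y_i$. AND-gates become Alice's bits and OR-gates become Bob's bits, so an input-to-output path of $F$ is exactly the transcript of the induced protocol on $(x,y)$. A short-circuited gate replaces its output by one of its two inputs; under the KW correspondence this is precisely the adversary flipping the transmitted bit, and crucially both parties see the \emph{same} corrupted bit, so the induced channel carries noiseless feedback. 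Because AND- and OR-gates alternate along a path, a bound of $\rho$ on the fraction of faulty gates per path translates to a bound of $\rho$ on the fraction of corruptions \emph{in each direction} of the simulated channel.

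Given this translation, the theorem reduces to constructing an interactive coding scheme with noiseless feedback that simulates any protocol, tolerates a $\tfrac{1}{5}-\eps$ fraction of errors in each direction separately, and blows up communication by only a constant factor depending on $\eps$. I would then proceed in three steps: (i)~balance $F$ so that its induced KW protocol has depth $O(\log|F|)$; (ii)~apply the new coding scheme to obtain a resilient protocol of depth $O_\eps(\log|F|)$ that correctly simulates the original against a $(\tfrac{1}{5}-\eps)$-per-direction adversary; (iii)~invert the KW correspondence, turning the resilient protocol back into a formula $F'$. Step (iii) is mechanical: each protocol round becomes a gate layer, the noiseless-feedback assumption holds automatically because each party's next gate reads from wires it itself produces, and the size bound $|F'|=\poly_\eps(|F|)$ follows from the constant-factor blowup in depth.

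The heart of the proof, and the main obstacle, is step~(ii). The standard feedback-based rewind schemes---parties maintain a tentative transcript and occasionally exchange hashes to detect disagreement---hit a $\tfrac{1}{6}$ wall because the adversary can split its budget to force many rewinds at the cost of one flipped bit per round. To reach $\tfrac{1}{5}$ I would follow the blockchain-style idea alluded to in the abstract: each party maintains a \emph{chain} of transcript commitments where a new block is accepted only when supported by a quorum of recent transmissions, so that forking the other party's chain requires the adversary to invalidate a disproportionately large number of good blocks. The analysis should be a potential argument in which every corrupted bit costs the adversary at least five times what an honest bit costs the protocol, matching the $\tfrac{1}{5}$ threshold. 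Finally, the tightness half of the theorem---impossibility at noise rate $\tfrac{1}{5}+\eps$ with sub-exponential blowup---would be proved by an adversarial scheduling argument directly in the KW channel and then pulled back through the correspondence to short-circuit noise in formulas.
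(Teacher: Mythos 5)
Your high-level architecture (balance $F$, apply the KW-transformation to get a protocol, compile it against $(\tfrac15-\eps,\tfrac15-\eps)$-corruptions using the noiseless feedback induced by short circuits, and invert the KW-transformation) is exactly the paper's route, but the two places where you wave your hands are precisely where the paper's work lies, and your sketches there would not go through. First, step (ii), the coding scheme, is not obtained by any ``quorum of recent transmissions'' or by a potential argument charging each corruption a factor of five; no such mechanism is given, and with noiseless feedback hash exchanges and rewinds are beside the point. The paper's scheme has each sender attach a link to its own last transmission that it \emph{knows} (via feedback) arrived intact, so every uncorrupted arrival reveals to the receiver the entire set of good rounds so far; the output is the transcript implied by the \emph{longest} chain. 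The resilience analysis then splits into ``the correct chain is longest'' versus ``the longest chain's uncorrupted prefix already simulates all of $\pi_0$,'' and it crucially relies on a counterintuitive adaptive rule: a party whose current chain is shorter than $n/5$ gets to speak \emph{less} in later epochs (the skip-counter mechanism), plus a separate variable-length encoding of links to get down to a constant alphabet. Without some device playing the role of this noise-dependent reallocation of rounds, a budget of $\tfrac15-\eps$ per direction is not known to be achievable, so your step (ii) is an unproven placeholder at the heart of the theorem.

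Second, step (iii) is not ``mechanical.'' Your claim that AND- and OR-gates alternate along a path is false in general (and the resilient protocol's order of speaking is noise-dependent, so the resulting formula certainly need not alternate), and the real issue it is meant to paper over is different: in the formula, a single short-circuit pattern $E$ mixing corrupted $\AND$- and $\OR$-gates must leave \emph{both} every $0$-input and every $1$-input correct, whereas resilience of the protocol only speaks about one run on one pair $(x,y)$ per noise pattern. The paper bridges this with the one-sidedness of short-circuit noise (a short-circuited $\AND$ can only flip $0\to1$, a short-circuited $\OR$ only $1\to0$), which decouples the two gate types: the KLR-style lemma gives $F_{E_\AND}(x)=0$ and $F_{E_\OR}(y)=1$ from the protocol's $(\alpha,\beta)$-resilience, and the decoupling lemma then handles mixed patterns. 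You also need (and the paper proves in an appendix) that the reachable protocol tree of the resilient protocol can be computed efficiently, which is what makes the reverse KW-transformation, and hence the whole conversion, efficient. As written, your proposal identifies the right skeleton but leaves both the coding scheme and the protocol-to-formula resilience transfer unproved.
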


We also show that our bound is tight. 
Namely, for an arbitrary formula~$F$, it is impossible to make a resilient version (of sub-exponential size in~$|F|$)
that tolerates a fraction $\tfrac15$ (or more) of short-circuited gates per path.
\begin{theorem}[Converse]\label{thm:converse-inf}
There exists a formula~$F$ for computing some function~$f$,
such that no formula~$F'$ of size $|F'|=o(\exp(|F|))$ that computes~$f$
is resilient to a fraction of~$\tfrac15$ of short-circuit noise in any of its input-to-output paths.  
\end{theorem}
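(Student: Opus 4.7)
The plan is to derive the converse by reversing the Karchmer--Wigderson (KW) correspondence that underlies Theorem~\ref{thm:main}, and then invoke a two-sided adversarial lower bound on the communication side. A short-circuit faulty gate on a root-to-output path of a formula is, under the KW simulation, exactly an adversarial corruption of the corresponding $\KW$-protocol message; consequently, any formula $F'$ of size $s$ and depth $d$ computing $f$ that tolerates a fraction $\rho$ of short-circuit faults on every input-to-output path gives rise to a communication protocol $\Pi$ for the $\KW$ relation of $f$, with at most $s$ leaves and depth $d$, that outputs a correct $\KW$-answer whenever at most a fraction $\rho$ of Alice's transmissions \emph{and} at most a fraction $\rho$ of Bob's transmissions along the realized transcript have been corrupted. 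Thus, the converse reduces to exhibiting some $f$ (together with a clean formula $F$) for which every such two-sided $\tfrac15$-resilient $\KW$-protocol has length at least $|F|$, forcing the protocol tree, and therefore $|F'|$, to have size $2^{\Omega(|F|)}$.

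For the communication-side lower bound, I would run a classical three-identity hopping attack adapted to $\KW$. The idea is to pick $f$ so that its $\KW$ relation admits three ``shadow'' input pairs $(x_1,y_1),(x_2,y_2),(x_3,y_3)$ with pairwise-distinct required outputs lying in disjoint coordinates of the output alphabet, arranged so that any two shadows can be made indistinguishable to Bob by flipping at most a $\tfrac15$-fraction of Alice's messages, and symmetrically for Alice. Given any candidate protocol $\Pi$, the adversary---by pigeonhole over the three shadows and over the two directions---drives the parties' joint view into a configuration consistent with at least two shadows while spending only a $\tfrac15$-fraction per direction; correctness then fails on at least one shadow, contradicting resilience. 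To amplify this constant-factor barrier into the claimed exponential size gap, I would take $f$ to be a direct-sum composition of many copies of a tiny gadget, so that confusions accumulate coordinatewise and force a transcript of length $\Omega(|F|)$.

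The main obstacle is the design of the three shadow inputs inside the $\KW$ framework: the slack inherent in $\KW$---each party is allowed to output \emph{any} coordinate on which the two inputs disagree---gives the protocol room to dodge adversarial confusion, and a naive attack imported from plain function evaluation does not transfer. I plan to remove this slack by engineering $f$ so that, on each of the three shadow inputs, the set of valid $\KW$-outputs is a single coordinate and the three coordinates are pairwise disjoint. Once the slack is eliminated, the per-direction accounting reproduces the familiar $\tfrac15$ threshold for two-sided adversarial interactive coding, and the reverse KW correspondence described above translates this bound directly into the impossibility statement.
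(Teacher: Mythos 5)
Your high-level architecture is the same as the paper's (convert a resilient formula into a resilient protocol for a KW relation, then rule out short protocols resilient to $(1/5,1/5)$-corruptions), but both halves of your plan have genuine gaps. First, the formula-to-protocol step is not the tautology you describe. Two nontrivial ingredients are needed: (a) a formula resilient to a fraction $1/5$ of short-circuited gates \emph{in total} per path must be upgraded to resilience against $1/5$ of the $\AND$ gates \emph{plus} $1/5$ of the $\OR$ gates per path simultaneously, which follows only from the one-sided nature of short-circuit noise (Lemmas~\ref{lem:one-sided-noise} and~\ref{lem:frac-to-corruptions}); without this decoupling your per-direction budgets would be halved and you would not reach the $1/5$ threshold. (b) The protocol's next-message functions must be fixed once and be correct under \emph{every} admissible noise pattern: at an $\AND$ node several children may evaluate to $0$ under some patterns and to $1$ under others, and one must prove that some child evaluates to $0$ under all noise patterns consistent with reaching that node (Proposition~\ref{prop:FtoP} and the merged-pattern argument of Claim~\ref{clm:OneForAllNoisyAB}). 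You also never fix the channel model; the correspondence requires noiseless feedback and a noise-dependent order of speaking, so the communication lower bound must be proved in that stronger model.

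Second, the communication-side attack cannot be obtained by "engineering" $f$. Whether two inputs can be made indistinguishable within a $(1/5,1/5)$ budget is a property of the \emph{protocol}, not of the function: as noted in Section~\ref{sec:imp}, a protocol that exchanges long error-correcting encodings admits no confusable pair at all, so no gadget or direct-sum composition can "arrange" confusability. The confusable inputs must be extracted from the assumption that the protocol is short, via a counting/pigeonhole step (Lemma~\ref{lem:KwHard}: fewer than $2^{n-1}$ transcript prefixes of length $2r/5$, so two of Bob's parity inputs collide), and this is precisely where the exponential communication blowup originates---not from a direct-sum amplification. Forcing unique KW outputs is both unnecessary (disjointness of the output sets suffices, and for parity it is obtained by constructing $x'$ bitwise from $x,y,y'$) and insufficient, because the per-direction budget accounting must cope with the noise-dependent order of speaking; the paper handles this with the staged hybrid transcript $T_1T_2T_3T_4$ and the case analysis on how much Bob speaks after round $2r/5$ (Theorem~\ref{thm:KWnotresilient}), which "pigeonhole over three shadows and two directions" does not replace. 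Finally, a transcript-length bound of $\Omega(|F|)$ would by itself only force $|F'|\ge \mathrm{depth}(F')=\Omega(|F|)$---a protocol tree of depth $L$ need not have $2^{\Omega(L)}$ nodes---so your concluding inference to $2^{\Omega(|F|)}$ does not follow as stated; the paper's separation instead comes from parity having an $O(\log n)$-depth formula while any resilient formula is forced to have depth at least $(n-1)/k$.
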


Similar to the work of Kalai et al.~\cite{KLR12}, a major ingredient in our result is a transformation, known as the Karchmer-Wigderson transformation (hereinafter, the KW-transformation)~\cite{KW90}, between a formula that computes a boolean function~$f$, and
a two-party interactive communication protocol for a task related to~$f$, which we denote the KW-game for $f$, or~$\KW_f$ for short. Similarly, a reverse KW-transformation converts protocols back to formulas; see below and Section~\ref{sec:KW} for more details on the KW-transformation.
The work of Kalai et al.\@ adapts the KW-transformation to a noisy setting in which the formula may suffer from short-circuit noise, and the protocol may suffer from channel noise.  
The ``attack plan'' in~\cite{KLR12} for making a given formula $F$ resilient to short-circuit noise is (i) apply the KW-transformation to obtain an interactive protocol~$\pi$; (ii) convert $\pi$ to a noise-resilient protocol~$\pi'$ that tolerates up to a $\delta$-fraction of noise; (iii) apply the (reverse) KW-transformation on $\pi'$ to obtain a formula~$F'$. The analysis of~\cite{KLR12} shows that the obtained~$F'$ is resilient to a~$\delta/2$ fraction of noise in any of its input-to-output paths. 

The interactive protocols~$\pi,\pi'$ are defined in a setting where the parties have access to a noiseless feedback channel---the sender learns whether or not its transmission arrived correctly at the other side. Building upon recent progress in the field of coding for interactive protocols (see, e.g.,~\cite{gelles17}), Kalai et al.~\cite{KLR12} constructed a coding scheme for interactive protocols (with noiseless feedback) that features resilience of $\delta=\tfrac13-\eps$ for any $\eps>0$; %
this gives their result. Note that a resilience of~$\delta=\tfrac{1}{3}$ is maximal for interactive protocols in that setting~\cite{EGH16}, which implies that new techniques must be introduced in order to improve the result by~\cite{KLR12}.

The loss in resilience witnessed in step (iii) stems from the fact that short-circuit noise affects formulas in a ``one-sided'' manner:
a short-circuit of an AND gate can only turn the output from $0$ to~$1$, while a short-circuit in an OR gate can only turn the output from $1$ to $0$. 
The noisy AND gates are thus decoupled from the noisy OR gates:
if the output of the circuit is $0$, any amount of short-circuited OR gates will keep the output~$0$,
while
if the output is~$1$, any amount of short-circuited AND gates will keep the output~$1$
(see Lemma~\ref{lem:one-sided-noise}).
Informally speaking, this decoupling reduces by half the resilience of circuits generated by the KW-transformation.
Assume the formula~$F'$ obtained from the above process
is resilient to a $\delta'$-fraction of noise. Then~$F'$ is correct if on a specific input-to-output path
(a) at most a $\delta'$-fraction of the AND gates are short-circuited, but also if
(b) at most a $\delta'$-fraction of the OR gates are short-circuited. Since the noise is decoupled, from (a) and (b) we get that $F$ 
outputs the correct value even when a $2\delta'$-fraction of the gates on that input-to-output path are noisy. 
Yet, the resilience of~$F'$ originates from the resilience of $\pi'$ (step~(iii) above).
The KW-transformation limits the resilience of~$F'$
by the resilience of~$\pi'$, i.e., $2\delta' \le \delta$,
leading to a factor~2 loss.\looseness=-1

We revisit the above line of thought and make a more careful noise analysis. 
Instead of bounding the \emph{total}
fraction of noise by some~$\delta$, we consider the case where  the noise from Alice to Bob is bounded by some~$\alpha$ while the noise in the other direction is bounded by some~$\beta$. A similar approach used by Braverman and Efremenko~\cite{BE17} yields interactive protocols (without  noiseless feedback) with maximal resilience.
In more detail, assume that the protocol~$\pi$ communicates~$n$ symbols overall. We define an $(\alpha,\beta)$-corruption as any noise that corrupts up to $\alpha n$ symbols sent by Alice and up to $\beta n$ symbols sent by Bob. We emphasize  that the noise \emph{fraction} on Alice's transmissions is higher than~$\alpha$, since Alice speaks less than $n$ symbols overall; the global noise fraction in this case is $\alpha+\beta$.

This distinction may be delicate but is instrumental. 
The KW-transformation translates 
a protocol of length~$n$ that is resilient to $(\alpha,\beta)$-corruptions
into a formula which is resilient to up to $\alpha n$ short-circuited AND gates \emph{in addition to} up to $\beta n$ short-circuited OR gates. When $\alpha=\beta$ the obtained formula is resilient to up to an $\alpha$-fraction of short-circuited gates in any input-to-output path, avoiding the factor 2 loss in resilience.

\goodbreak

\subsection{Technique overview}
\paragraph{Achievability: Coding schemes for noisy channels with noiseless feedback.}
We obtain resilient formulas by employing the approach of~\cite{KLR12} described above. 
In order to increase 
the noise resilience to its optimal level, we develop a novel coding scheme which is resilient to 
$\left(\tfrac{1}{5}-\eps,\tfrac{1}{5}-\eps\right)$-corruptions, assuming noiseless feedback.

The mechanism of our coding scheme resembles, in a sense, the \emph{Blockchain technology}~\cite{nak08}. Given a protocol $\pi_0$ that assumes reliable channels,
the parties simulate~$\pi_0$ message by message. These messages may arrive at the other side correctly or not; however, a noiseless feedback channel allows each party to learn which of its messages made it through. With this knowledge, the party tries to create a ``chain'' of correct messages. Each message contains a pointer to the last message that was not corrupted by the channel. As time goes by, the chain grows and grows, and indicates the entire correct communication of that party. 
An appealing feature of this mechanism is the fact that whenever a transmission arrives correctly at the other side, the receiver learns \emph{all} the correct transmissions so far. On the other hand, the receiver never knows whether a single received transmission (and the chain linked to it) is indeed correct.

The adversarial noise may corrupt up to $\left(\tfrac15-\eps\right)\!n$ of the messages sent by each party. We think of the adversary as one trying to construct a different, corrupt, chain. Due to its limited budget,  at the end of the coding scheme one of two things may happen. 
Either the correct chain is the longest, 
or the longest chain contains in its prefix a sufficient amount of uncorrupted transmissions. 

Indeed, if the adversary tries to create its own chain, its length is bounded by $\left(\tfrac15-\eps\right)\!n$, while the correct chain is of length $\frac{2n}{5}$ at the least.\footnote{The order of speaking in the coding scheme depends on the noise. Therefore, it is not necessary that a party speaks half of the times; see discussion below.} On the other hand, the adversary can create a longer chain which forks off the correct chain. 
As a simple example, consider the case where a party sends $\approx \frac{2n}{5}$ messages which go through uncorrupted. Now, the adversary starts corrupting the transmissions and extends the correct chain with $\left(\tfrac15-\eps\right)\!n$ corrupt messages.\footnote{This attack assumes that there are $n/5$ additional rounds where the same party speaks. This assumption is usually false and serves only for this intuitive (yet unrealistic) example.}
The corrupt forked chain is of length $\frac{2n}{5}+\left(\tfrac15-\eps\right)\!n$ and may be longer than the correct chain.
However, in this case, the information contained in the \emph{uncorrupted} prefix of the corrupt forked chain is sufficient to simulate the entire transcript of~$\pi_0$.

Another essential part of our coding scheme is its ability to alter the order of speaking according to the observed noise.\footnote{Protocols that change their length or order of speaking as a function of the observed noise are called \emph{adaptive}~\cite{GHS14,AGS16}. Since these decisions are noise-dependent, the parties may disagree on the identity of the speaker in each round, e.g., both parties may decide to speak in a given round, etc. We emphasize that due to the noiseless feedback there is always a consensus regarding whose turn it is to speak next. Hence, while our scheme has a non-predetermined  order of speaking,  the scheme is \emph{non-adaptive} by the terminology of~\cite{EGH16}; see discussion in~\cite{EGH16} and in Section~6 of~\cite{gelles17}.}
Most previous work follows the following intuition. 
If a party's transmissions were corrupted, then the information contained in these transmissions still needs to reach the other side. Therefore, the coding scheme should allow that party to speak more times.
In this work we take the opposite approach---the more a party is corrupted in the first part of the protocol, the \emph{less} it speaks in the later part. The intuition here is that if the adversary has already wasted its budget on some party, it cannot corrupt much of the subsequent transmissions of that party.
A similar approach appears in~\cite{AGS16}.

One hurdle we face in constructing our coding scheme derives from the need to communicate pointers to previous messages using a small (constant-size) alphabet. Towards this end, we first show a coding scheme that works with a large alphabet that is capable of pointing back to any previous transmission.
Next, we employ a variable-length coed, replacing each pointer with a large number of messages over a constant-size alphabet. We prove that this coding does not harm the resilience, leading to a  coding scheme with a constant-size alphabet and optimal resilience to $\left(\tfrac{1}{5}-\eps,\tfrac{1}{5}-\eps\right)$-corruptions.

\paragraph{Converse: Impossibility Bound.}
The converse proof consists of two parts. First, we show that for certain functions, any protocol  resilient to $\left(\tfrac{1}{5},\tfrac{1}{5}\right)$-corruptions must have an exponential blowup in the communication. In the second part, we show a (noisy) KW-transformation from formulas to protocols. Together, we obtain an upper bound on the noise of formulas. Indeed, assuming that there is a ``shallow'' formula that is resilient to $\left(\tfrac{1}{5},\tfrac{1}{5}\right)$-corruptions, converting it into a protocol yields a ``short'' protocol with resilience to $\left(\tfrac{1}{5},\tfrac{1}{5}\right)$-corruptions. The existence of such a protocol contradicts the bound of the first part.

The bound on the resilience of protocols 
follows a natural technique of confusing a party between two possible inputs.
We demonstrate that a $\left(\tfrac{1}{5},\tfrac{1}{5}\right)$-corruption suffices in making one party (say, Alice)
observe exactly the same transcript whether Bob holds $y$ or $y'$. Choosing $x,y,y'$
such that the output of the protocol differs between $(x,y)$ and $(x,y')$ 
leads to Alice erring on at least one of the two instances.

This idea \emph{does not} work if the protocol is allowed to communicate a lot of information. To illustrate this point, assume $f:\Sigma^n \times \Sigma^n \to \Sigma^z$ defined over a channel with alphabet~$\Sigma$. Consider a protocol where the parties send their inputs to the other side encoded via a standard Shannon error-correcting code of length~$n'=O(n)$ symbols, with distance~$1-\eps$ for some small constant~$\eps>0$. The protocol communicates $2n'$ symbols overall, and a valid $\left(\tfrac{1}{5},\tfrac{1}{5}\right)$-corruption may corrupt up to $\frac{2n'}{5}$ symbols \emph{of each one of the codewords}. However, this does not suffice to invalidate the decoding of either of the codewords, since an error-correcting code with distance~$\approx\!1$ is capable of correcting up to $\approx\!\frac{n'}{2}$ corrupted symbols. 

On the other hand, once we limit the communication of the protocol, even moderately, to around~$n$ symbols, the above encoding is not applicable anymore. 
Quite informally, our lower bound follows the intuition described below.
We show the existence of a function~$f$ such that for any protocol that computes~$f$  in $r$ rounds (where $r$ is restricted as mentioned above), the following properties hold for one of the parties (stated below, without loss of generality, for Alice).
There are inputs~$x,x',y,y'$ 
such that 
(1) $f(x,y) \ne f(x',y)\ne f(x',y')$ and 
(2) Alice speaks at most $\frac{r}{5}$ times during the first $\frac{2r}{5}$ rounds. 
Further, 
(3) when Alice holds~$x$, the protocol communicates exactly the same messages 
during its first $\frac{2r}{5}$ rounds, whether Bob holds $y$ or $y'$ (assuming no channel noise is present).

When we bound the protocol to these conditions, a $\left(\tfrac{1}{5},\tfrac{1}{5}\right)$-corruption is strong enough to make the transcript identical from Alice's point of view on $(x',y)$ and $(x',y')$, implying the protocol cannot be resilient to such an attack.
In more detail, we now describe an attack and assume Bob speaks at most $\frac{2r}{5}$ times beyond round number $\frac{2r}{5}$, given the attack. 
(If Bob speaks more, then an equivalent attack will be able to confuse Bob rather than Alice.) 
The attack changes the first $\frac{2r}{5}$ rounds as if Alice holds $x$ rather than $x'$; this amounts to corrupting at most $\frac{r}{5}$ transmissions by Alice due to property~(2). Bob behaves the same regardless of his input due to property~(3). From round~$\frac{2r}{5}$ and beyond, the attack corrupts Bob's messages so that the next $\frac{r}{5}$ symbols Bob sends are consistent with~$y$ and the following~$\frac{r}{5}$ symbols Bob communicates are consistent with~$y'$. Since Bob speaks less than $\frac{2r}{5}$ times (given the above noise), the attack corrupts at most $\frac{r}{5}$ of Bob's transmissions after round~$\frac{2r}{5}$.   

Unfortunately, while the above shows that some functions $f$ cannot be computed in a resilient manner, this argument cannot be applied towards a lower bound on resilient formulas. 
The reason is that the $KW_f$ task is not \emph{a function}, but rather \emph{a relation}---multiple outputs may be valid for a single input.
The attack on protocols described earlier shows that a $\left(\tfrac{1}{5},\tfrac{1}{5}\right)$-corruption drives the protocol to produce a different output from that in the noiseless instance. However, it is possible that a resilient protocol gives a different \emph{but correct} output.

Therefore, we need to extend the above argument so it applies to computations of arbitrary relations. Specifically, we consider the parity function on $n$ bits and its related KW-game. We show the existence of inputs that satisfy conditions (2) and (3) above, while requiring that the outputs of different inputs be disjoint; i.e., any possible output of $(x',y)$ is invalid for $(x,y)$ and for~$(x',y')$.

\medskip

The last part of the converse proof requires developing a KW-transformation from formulas to protocols, in a \emph{noise-resilience preserving} manner.
Let us begin with some background on the (standard) KW-transformation (see Section~\ref{sec:circuits:prelim} for a formal description).
The KW-game  (or rather a slight adaptation we need for our purposes) is as follows. For a boolean function $f$ on $\{0,1\}^n$, Alice gets an input $x$ such that $f(x)=0$ and Bob gets an input $y$ such that $f(y)=1$; their goal is to output a literal function $\ell(z)$ (i.e., one of the $2n$ functions 
of the form $\ell(z)=z_i$ or $\ell(z)=\neg z_i$) such that $\ell(x)=0$ and $\ell(y)=1$. 

Let $F$ be a boolean formula for $f$, consisting of $\vee$ and $\wedge$ gates, and where all the negations are pushed to the input layer (i.e., $F$ is a monotone formula of the literals $z_i$, $\neg z_i$). %
The conversion of $F$  
to a protocol $\pi$ %
for the $\KW_f$ game 
is as follows. View the formula as the protocol tree, with the literals at the bottom of the tree being the output literal function. Assign each $\wedge$-node to Alice, and each $\vee$-node to Bob. 

The invariant maintained throughout the execution of the protocol is that if the protocol reaches a node $v$, then the value of $v$ in $F$ is $0$ when evaluated on $x$, and $1$ when evaluated on~$y$. 
This invariant holds for the output gate of the formula, which is where the communication protocol begins.  
Next, each time that the protocol is at node~$v$ and it is Alice's turn to speak (thus $v$ is an $\wedge$-gate in $F$), Alice sends the identity of a child which evaluates to $0$ on $x$. Note that assuming the invariant holds for $v$, Alice can send the identity of such a child (since at least one of the inputs to an AND gate that outputs a $0$, also evaluates to~$0$), while this child must evaluate to~$1$ on~$y$ assuming $v$ evaluates to~$1$ on~$y$. 
By maintaining this invariant, Alice and Bob arrive at the bottom, where they reach a literal evaluating to~$0$ on~$x$ and $1$ on~$y$. 
Note that there is some room for arbitrary decision making: if more than one child of $v$ evaluates to $0$ on~$x$, Alice is free to choose any such child---the protocol will be valid for any such choice.

In this work we extend the above standard KW-transformation to the noisy-regime. 
Namely, we wish to convert a \emph{resilient} formula into an interactive protocol~$\pi$ 
\emph{while keeping the protocol resilient} to a similar level of channel noise.
We note that the extension we need is completely different from what is found in previous uses of the KW-transformation. 
Indeed, for the achievability bound, a KW-transformation is used in both steps (i) and (iii) in the above outline of~\cite{KLR12}.
However,
the instance used in step (i) assumes there is no noise, while the instance in step (iii) works in the other direction, i.e., it transforms (resilient) protocols to (resilient) formulas.

Similar to the standard transformation, our noisy KW-transformation  starts by 
constructing a protocol tree based on the formula's structure, where  every $\AND$-gate is assigned to Alice and every $\OR$-gate to Bob. The main difference is in the decision making of how to 
proceed when reaching a node~$v$. 
The goal is to keep the invariant 
that the gate~$v$ in $F$ evaluates to~$0$ on~$x$ and to~$1$ on~$y$, 
\emph{even when noise is present}.

When only one of $v$'s descendants  evaluates to~$0$ on~$x$ in~$F$, 
Alice has no choice but to choose that child. 
However, when more than a single descendant evaluates to~$0$ on~$x$, 
Alice's decision is less obvious. 
Moreover, this decision may affect the resilience of the protocol---it is possible that noise causes one of the descendants evaluate to~$1$ on that given~$x$. 

We observe, however, that one of $v$'s children evaluates to 0 on~$x$ given \emph{all the noise patterns} $F$ is resilient against. The other children may still evaluate to 1 sometimes, as a function of the specific noise. Once we identify this special child that always evaluates to~$0$, Alice can safely choose it and maintain the invariant (and the correctness of the protocol), regardless of future noise.
In more detail, we prove that if such a special child did not exist and all descendants  could evaluate to both~0 and~1 as a function of the noise, 
then we could construct a noise pattern~$E^*$ that would make all descendants evaluate to 1 on~$x$ simultaneously. 
Hence, assuming the noise is $E^*$, the node $v$ would evaluate to~1 on $x$, and consequently $F(x)=1$. 
At the same time, we show that $F$ is resilient to the noise~$E^*$, 
so $F(x)=0$ assuming the noise is~$E^*$, and we reach a contradiction.

\subsection{Other related work}
The field of interactive coding schemes~\cite{gelles17} started with the seminal line of work by Schulman~\cite{schulman92,RS94,schulman96}. Commonly, the goal is to compile interactive protocols into a noise-resilient version that has (1) good noise resilience; (2) a good rate; and (3) high probability of success. Computational efficiency is another desired goal.
Numerous works achieve these goals, either fully or partially~\cite{BR14,GMS14,BKN14,FGOS15,BE17,GH14,KR13,haeupler14,GHKRW18}, where the exact parameters depend on the communication and noise model. 

Most related to this work are coding schemes in the setting where a noiseless feedback channel is present.
Pankratov~\cite{pankratov13} gave the first interactive coding scheme that assumes noiseless feedback. The scheme of~\cite{pankratov13} aims to maximize its rate,  assuming all communication passes over a binary symmetric channel (BSC) with flipping parameter~$\eps$ (i.e., a channel that communicates bits, where every bit is flipped with probability~$\eps$, independently of other bits). Pankratov's scheme achieves a rate of $1-O(\sqrt\eps)$ when $\eps\to0$. Gelles and Haeupler~\cite{GH17} improved the rate in that setting to~$1-O(\eps\log1/\eps)$, which is the current state of the art.
For the regime of high noise, Efremenko, Gelles, and Haeupler~\cite{EGH16} provided coding schemes with maximal noise resilience, assuming noiseless feedback. They showed that the maximal resilience depends on the channel's alphabet size and on whether or not the order of speaking is noise-dependent. Specifically, they developed coding schemes with a noise-independent order of speaking and a constant rate that are resilient to $1/4-\eps$ and $1/6-\eps$ fractions of noise with a ternary and binary alphabet, respectively. When the order of speaking may depend on the noise, the resilience increases to $1/3-\eps$ for any alphabet size. They showed that these noise levels are optimal and that no general coding scheme can resist higher levels of noise.

There has been a tremendous amount of work on coding  for noisy channels with noiseless feedback in the one-way (non-interactive) communication setting, starting with the works of Shannon, Horstein, and Berlekamp~\cite{shannon56,Horstein63,berlekamp1964block}. It is known that the presence of feedback does not change the channel's capacity, however, it improves the error exponent. The maximal noise-resilience in this setting is also known. Recently, Haeupler, Kamath, and Velingker~\cite{HKV15} considered deterministic and randomized codes that assume a partial presence of feedback.

\subsection{Organization}
The first half of our paper considers interactive coding protocols over noisy channels with noiseless feedback.
Section~\ref{sec:imp} proves that any interactive coding scheme that is resilient to $\left(\frac15,\frac15\right)$-corruptions must exhibit a zero rate.
Sections \ref{sec:codingLarge}--\ref{sec:codingSmall} 
describe our constant-rate coding scheme that is resilient to $\left(\frac15-\eps,\frac15-\eps\right)$-corruptions. 
First, Section~\ref{sec:codingLarge} describes a scheme with a large alphabet (polynomial in the length of the protocol).
Then, Section~\ref{sec:codingSmall} shows how to reduce the alphabet to a constant size.

The second half of the paper (Section~\ref{sec:circuits}) considers noise-resilient circuits.
First, in Section~\ref{sec:circuits:prelim} we recall  
the notions of formulas, short-circuit noise and the (noiseless) KW-transformation.
In Section~\ref{sec:FtoP} 
we present our noise-preserving KW-transformation and show how to convert a resilient formula into a resilient protocol.
This reduction (along with the impossibility from Section~\ref{sec:imp}) proves the converse theorem, showing that the resilience we obtain for formulas is maximal. 
In Section~\ref{sec:PtoF} we provide the other direction, a noise-resilient transformation from protocols to formulas (following~\cite{KLR12}).
Employing the coding scheme of Section~\ref{sec:codingSmall} we give an efficient method that compiles any formula into an optimal resilient version. 

\section{Preliminaries}\label{sec:prelim}
\paragraph{Notations}
For integers $i\le j$ we denote by $[i,j]$ the set $\{i,i+1,\ldots, j\}$ and by $[i]$ the set $\{1,\ldots, i\}$.
We let~$\Sigma$ be some finite set. For a string $s\in \Sigma^*$ and two indices $x,y \in \{1,\ldots, |s|\}$, $x<y$ we let $s[x,y]=s_{x}s_{x+1}\cdots s_{y}$. We will treat $\emptyset$ as the empty word, i.e., for any $a\in \Sigma^*$ we have $a\circ \emptyset = \emptyset\circ a = a$, where $\circ$ stands for concatenation. 
For bits $a,b\in\{0,1\}$ we denote  $a\oplus b=a+b \mod 2$, $a \wedge b=a \cdot b$, and $\overline b = 1-b$.
For two bitstrings of the same length $x,y\in\{0,1\}^n$ we denote by 
$\langle x,y\rangle=\bigoplus_i (x_i\cdot y_i)$ their inner product (mod~2) as vectors over~$GF(2)$.
We denote $\overline x= \overline x_1\overline x_2\cdots \overline x_n$, the bit-wise complement of~$x$.
All logarithms are taken to base 2, unless the base is explicitly written.

\paragraph{Interactive Protocols}
In the interactive setting we have two parties, Alice and Bob, who receive private inputs $x\in X$ and $y\in Y$, respectively. Their goal is to compute some predefined function $f(x,y): X \times Y \to Z$ by sending messages to each other.
A \emph{protocol} describes for each party the next message to send, given its input and the communication received so far. We assume the parties send symbols from a fixed alphabet $\Sigma$. %
 The protocol also determines when the communication ends and the output value (as a function of the input and received communication).

Formally, an interactive protocol $\pi$ can be seen as a $|\Sigma|$-ary tree (also referred to as the \emph{protocol tree}), where each node $v$ is assigned either to Alice or to Bob. For any $v$ node assigned to Alice there exists a mapping $a_v: X \to \Sigma$ that maps the next symbol Alice should send, given her input. Similarly, for each one of Bob's nodes we set a mapping $b_v: Y \to \Sigma$. Each leaf is labeled with an element of~$Z$.
The output of the protocol on input $(x,y)$ is the element at the leaf reached by starting at the root node, and traversing down the tree, where, at each internal node $v$ owned by Alice (resp., Bob), if $a_v(x)=i$ (resp., $b_v(y)=i$) the protocol advances to the $i$-th child of $v$.
For convenience, we denote  Alice's nodes by the set~$V_a$ and Bob's nodes by the set~$V_b$.
We may assume that all the nodes in a given protocol tree are reachable by some input $(x,y)\in X\times Y$ (otherwise, we can prune that branch without affecting the behavior of the protocol). 
Note that the order of speaking in $\pi$ does not necessarily alternate and it is possible that the same party is the sender in consecutive rounds. For any given transcript~$T$, we denote by~$\pi( \cdot \mid T)$ the instance of~$\pi$ assuming the history~$T$. Specifically, assuming Alice is the sender in the next round (assuming the history so far is~$T$), then the next communicated symbol is~$\pi(x \mid T)$.

The length of a protocol, denoted~$|\pi|$, is the length of the longest root-to-leaf path in the protocol tree, or equivalently, it is the maximal number of symbols the protocol communicates in any possible instantiation. In the following we assume that all instances have the same length~$|\pi|$.
The communication complexity of the protocol is 
\[
CC(\pi)= |\pi| \log |\Sigma| .
\]
When $\Sigma$ is constant (independent of the input size), we have $CC(\pi) = O(|\pi|)$.
If, by round $t$, Alice is the sender in~$t_A$ rounds and Bob is the sender in~$t_B=t-t_A$ rounds, 
we denote their respective communication complexity until round~$t$ by
$CC_A^{\le t}(\pi)=t_A \log |\Sigma|$ and $CC_B^{\le t}(\pi)=t_B \log |\Sigma|$.

\paragraph{Transmission Noise with Feedback}
We will assume the communication channel may be noisy, that is, the received symbol may mismatch with the sent symbol. All the protocols considered in this work assume the setting of \emph{noiseless feedback}: the sender always learns the symbol that the other side received (whether corrupted or not). The receiver, however, does not know whether the symbol it received is indeed the one sent to him.

A noise pattern is defined as $E \in \{0,1,\dotsc,|\Sigma|-1,\noerr\}^{|V_a| \cup |V_b|}$. 
For any node $v$, $E_v$ denotes the symbol that the receiver gets for the transmission that is done when the protocol reaches the node~$v$. Specifically, say $v$ is an Alice-owned node, then if $E_v=\noerr$, Bob receives the symbol sent by Alice; otherwise, $E_v\ne\noerr$, Bob receives the symbol $E_v$. Note that due to the feedback, Alice learns that her transmission was corrupted as well as the symbol that Bob received, and the protocol descends to the node dictated by $E_v$. We denote by $\pi_E$ the protocol $\pi$ when the noise is dictated by~$E$; we sometimes write $\pi_0$ for a run of the protocol with no transmission noise, i.e., with the pattern $E=\noerr^{|V_a| \cup |V_b|}$.

We say that a protocol is \emph{resilient} to a noise pattern~$E$ if for any $(x,y)\in X\times Y$ it holds that $\pi_E$ outputs the same value as $\pi_0$. 
While it is common to limit the noise to a constant fraction of the transmissions, 
in this work we take a more careful look at the noise, and consider the exact way it affects the transmissions of each party.
\begin{definition}
An $(\alpha,\beta)$-corruption is a noise pattern that changes at most $\alpha |\pi|$ symbols sent by Alice and at most $\beta |\pi|$ symbols sent by Bob. Note that the effective (combined) noise rate is $(\alpha+\beta)$.
\end{definition}

\section{Resilience to $(1/5,1/5)$-Corruptions is Impossible}
\label{sec:imp}

In this section we prove that no coding scheme with constant overhead can be resilient to a $(1/5,1/5)$-corruption.
To this end we show a specific $(1/5,1/5)$-corruption that confuses any protocol for a specific function~$f$ that is
``hard'' to compute in linear communication. Our result \emph{does not} apply to coding schemes with vanishing rates. In fact, if the communication is exponentially large, coding schemes with resilience higher than $1/5$ exist.\footnote{For instance, consider the scheme in which each party sends its input to the other side encoded via a standard (Shannon) error-correcting code with distance $\approx 1$. This trivial protocol is resilient to $(1/4-\eps,1/4-\eps)$-corruption, yet its rate is~0.}

Normally,  we discuss the case where protocols compute a \emph{function} $f: X\times Y \to Z$.
While our converse bound on the resilience of interactive protocols works for some hard function (e.g., the pointer jumping), such a proof does not suffice towards our converse on the resilience of boolean formulas (Theorem~\ref{thm:converse-inf}). 
The reason is that the conversion from formulas to protocols does not yield a protocol that computes a function, but rather a protocol that computes a \emph{relation}.
Recall that for any given function~$f$ and any input $(x,y)$ such that $f(x)=0$ and $f(y)=1$,
the KW-game for~$f$, $KW_f$, outputs an index $i\in [n]$ for which $x_i \ne y_i$ (see Section~\ref{sec:circuits:prelim} for a formal definition). 
However, multiple such indices may exist and each such an index is a valid output.

Let $X,Y,Z$ be finite sets and $R \subseteq X\times Y\times Z$ be a ternary relation.
For any $(x,y)\in X\times Y$ and a given relation~$R$ 
let $R(x,y)=\{z \mid (x,y,z)\in R\}$ be the set of all $z$ that satisfy the relation for $x,y$.
We assume that for any $x,y$ it holds that $|R(x,y)|>0$.
Given such a relation, a protocol that \emph{computes the relation} is the following two-party task.
Alice is given $x\in X$ and Bob is given~$y\in Y$. The parties need to agree on some $z\in R(x,y)$. 

\smallskip

We now show an explicit relation for which no protocol (of ``short''' length) is resilient to $(1/5,1/5)$-corruptions. Specifically, in the rest of this section we consider the binary parity function on $n$ bits, $par:\{0,1\}^n \to \{0,1\}$, defined for any $x\in\{0,1\}^n$ by 
\[
par(x)=x_1\oplus \dotsm \oplus x_n.
\]
Let $X= \{ x\in\{0,1\}^n \mid par(x)=0\}$ and $Y=\{ y\in\{0,1\}^n \mid par(y)=1\}$. 
We let $KW_{par} \subseteq X \times Y \times[n]$ 
be the KW-game for the parity function, 
defined by
\[
KW_{par} = \left\{ (x,y,z)\mid  par(x)=0 \wedge par(y)=1 \wedge x_z \ne y_z \right \}.
\]

We will need the following technical claim.

\begin{claim}\label{clm:inner-prod}
Let $Y\subseteq \{0,1\}^n$.
If $|Y| \geq 2^{n/2}+1+n/2$ then there exist two distinct elements $y_1, y_2\in Y$ such that $\langle y_1, y_2\rangle =1$. Furthermore, if 
$|Y|\geq 2^{(n+1)/2}+2+n/2$ then there  exist two distinct elements $y_1, y_2\in Y$ such that $\langle y_1, y_2\rangle =0$.
\end{claim}
\begin{proof} 
Since $|Y| > 2^{n/2}$, there exist $k=\lfloor n/2\rfloor +1$ linearly independent elements  $b_1, b_2, \ldots, b_{k} \in Y$.  Consider the linear space $L=\text{span}\{b_i\}_{i=1}^{k}$. 
Let $L^{\perp}=\{ v\in \{0,1\}^n \mid \langle v,w\rangle=0\text{ for all }w\in L\}$ be the orthogonal space of~$L$ with respect to the $\langle \cdot ,\cdot \rangle$ product and recall that
$\dim L + \dim L^{\perp} = n$.
Since $\dim L=k$ we get that $\dim L^{\perp} = n-k < n/2 $ and therefore $|L^{\perp}|<2^{n/2}$. Consider $Y'=Y\setminus \{b_i \mid  i\in [k] \}$. Since $|Y'|\geq 2^{n/2}$ there must exist $y\in Y'$ such that $y\notin L^{\perp}$, which, in turn means that its product with at least one of the $b_i$s must be non-zero (or otherwise $y$ would belong to the orthogonal space~$L^{\perp}$). That is, there exists $b_i$ for which $\langle b_i, y\rangle =1$, as stated.

For the second part of the claim, let us construct $\tilde{Y}= \{(y,1)\in   \{0,1\}^{n+1} \mid  y\in Y\}$; this is merely the set $Y$ with an additional coordinate which is always set to one (over a space of dimension~$n+1$). 
Note that $|\tilde Y|= |Y| \geq 2^{(n+1)/2}+1+ (n+1)/2$ and we can use the first part of this claim to show that  
there exist two distinct elements  $ (y_1,1), (y_2,1) \in \tilde{Y}$ such that $\langle (y_1,1), (y_2,1)\rangle =1$; therefore, $\langle y_1, y_2\rangle =0$.   
\end{proof}

\begin{lemma}\label{lem:KwHard}
Let $\pi$ be an interactive protocol for $KW_{par}$ (with inputs of $n$ bits)
of length $|\pi| = r$ defined over a communication channel with alphabet~$\Sigma$ and noiseless feedback. Without loss of generality, let Alice be the party who speaks less in the first~$2r/5$ rounds of~$\pi$ (averaging over all possible inputs $(x,y)\in X\times Y$).
 Additionally, assume $n/3 > 2r\log(2|\Sigma|)/5$.

Then, there exist distinct inputs $x,x'\in X$, $y,y'\in Y$ for which:
\begin{description}
\item[(1)] $\pi(x,y)$ and $\pi(x,y')$ agree on the first $2r/5$ rounds.
\item[(2)] During the first $2r/5$ rounds of the execution $\pi(x,y)$ Alice speaks fewer times than Bob. 
\item[(3)] $KW_{par}(x',y) \cap KW_{par}(x',y') = \emptyset$ and 
$KW_{par}(x',y) \cap KW_{par}(x,y) = \emptyset$.
\end{description}
\end{lemma}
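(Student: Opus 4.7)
The proof combines a pigeonhole on length-$2r/5$ prefixes with an explicit algebraic construction of $x'$. The size hypothesis $n\ge r\log|\Sigma|+1$ gives $|Y|=2^{n-1}\ge|\Sigma|^r\gg|\Sigma|^{2r/5}$, so for any $x\in X$ the map $\tau_x\colon Y\to\Sigma^{2r/5}$ sending $y$ to the length-$2r/5$ prefix of $\pi(x,y)$ has massive collisions; many distinct $y,y'$ share the same prefix, immediately giving many candidate pairs for condition~(1).

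For condition~(2), I classify each depth-$2r/5$ node $v$ of the protocol tree by its Alice-count $\alpha(v)$ along the root-to-$v$ path, and call $v$ \emph{Alice-light} when $\alpha(v)<r/5$. Since Alice-decisions depend only on $x$ and Bob-decisions only on $y$, the set of pairs $(x,y)$ reaching $v$ factors as $L(v)=L_X(v)\times L_Y(v)$ with $\sum_v|L_X(v)|\,|L_Y(v)|=|X||Y|$. Pigeonholing over Alice-light nodes produces some $x\in L_X(v^*)$ and Alice-light $v^*$ with $|L_Y(v^*)|\ge|Y|/|\Sigma|^{2r/5}\ge 2^{3(n-1)/5}$, provided a non-negligible fraction of pairs reach Alice-light nodes. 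In the degenerate case where every early prefix is balanced or Alice-heavy, I would invoke the symmetry $(x,y)\mapsto(y\oplus e,\,x\oplus e)$ of $KW_{par}$ for an odd-weight $e\in\{0,1\}^n$, which swaps the roles of Alice and Bob while preserving the KW relation; applying the same pigeonhole to the relabeled protocol produces the Alice-light case after renaming.

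For condition~(3), let $A=\{z:x_z\ne y_z\}$ and $S=\{z:y_z\ne y'_z\}$. Unfolding the two disjointness constraints coordinate-by-coordinate in the binary setting forces $x'_z=y_z$ on all of $A\cup\overline{S}$, leaving $x'$ free only on $S\setminus A$. A parity-$0$ vector $x'\in X$ therefore exists iff $|S\setminus A|\ge 1$, in which case flipping a single coordinate of $S\setminus A$ converts $par(y)=1$ into $par(x')=0$. To force $|S\setminus A|\ge 1$ I would exploit the pool $|L_Y(v^*)|\ge 2^{3(n-1)/5}$: for fixed $x,y$ the $y'\in L_Y(v^*)$ with $y\oplus y'\subseteq A(x,y)$ number at most $2^{|A(x,y)|-1}$, strictly smaller than $|L_Y(v^*)|$ as long as $|A(x,y)|$ stays below roughly $3(n-1)/5$. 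By a Chernoff bound almost all $(x,y)\in X\times Y$ satisfy $d_H(x,y)\approx n/2<3(n-1)/5$, so one can select $x\in L_X(v^*)$ and $y\in L_Y(v^*)$ with $|A(x,y)|$ below the threshold and then pick $y'\in L_Y(v^*)$ outside the bad set.

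The main obstacle is condition~(2): once an Alice-light $v^*$ with large $|L_Y(v^*)|$ is secured, the collision for~(1) and the parity-adjusting construction for~(3) follow essentially for free. The subtle step is showing the Alice-light regime is actually reachable---rather than every early prefix being balanced or Alice-heavy---for which I expect to rely on the $KW_{par}$-symmetry relabeling outlined above or a careful case analysis of the tree's speaker layout.
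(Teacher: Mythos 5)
Your skeleton is the same as the paper's (pigeonhole on length-$2r/5$ prefixes for (1), picking the less-talkative party for (2), an explicit $x'$ for (3)), and your coordinate-by-coordinate analysis of (3) is correct and in fact sharper than the paper's: the two disjointness constraints do force $x'=y$ on $A\cup\overline{S}$, a parity-$0$ solution exists precisely when $S\setminus A\neq\emptyset$, and flipping one coordinate of $S\setminus A$ then works (the paper's construction, which flips all of $S$, silently needs a condition of this kind for $par(x')=0$, so your observation that feasibility of $x'$ is not automatic is a genuine point in your favor). The gap is in how you guarantee $S\setminus A\neq\emptyset$. You first fix $x$ and the node $v^*$ by a pigeonhole that pays no attention to Hamming distances, and only afterwards invoke Chernoff ("almost all pairs have $d_H(x,y)\approx n/2$") to select $y\in L_Y(v^*)$ with $|A(x,y)|$ below the threshold. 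That inference does not follow: $|L_Y(v^*)|$ is only about $2^{3n/5}$, far below $2^{n-1}$, and nothing in your argument prevents $L_Y(v^*)$ from consisting entirely of strings at distance, say, at least $0.8n$ from the chosen $x$ (a Hamming ball of radius $0.2n$ around $\overline{x}$ already has more than $2^{3n/5}$ points of each parity), in which case every $y'$ in the class could satisfy $y\oplus y'\subseteq A(x,y)$ and your $x'$ has no free coordinate. The step "so one can select $x\in L_X(v^*)$ and $y\in L_Y(v^*)$ with $|A(x,y)|$ below the threshold" is asserted, not proved; the repair is routine but must be built in earlier: restrict at the outset to pairs with $d_H(x,y)\le(1/2+\delta)n$ (still almost all pairs), run the averaging and the prefix pigeonhole within this restricted set, and only then do your counting of bad $y'$.

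The second problem is condition (2). Your relabeling $(x,y)\mapsto(y\oplus e,\,x\oplus e)$ is indeed a symmetry of the relation, but it produces a \emph{different} protocol; proving the Alice-light statement for that protocol and "renaming" does not yield property (2) for the given $\pi$. If $\pi$ has Alice speak in all of the first $2r/5$ rounds, no choice of inputs makes her the minority speaker, and undoing your renaming gives exactly the statement with Bob in Alice's place---i.e., your trick is the paper's "without loss of generality; otherwise the symmetric lemma holds for Bob," not a way to prove the Alice-form for every protocol. Moreover, with your strict threshold $\alpha(v)<r/5$, a perfectly alternating protocol has neither Alice-light nor Bob-light nodes and swapping roles cannot help; you should target the weaker property actually used downstream (Alice sends at most $r/5$ of the first $2r/5$ symbols), as the paper's proof implicitly does. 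With these two repairs, your rectangle-based pigeonhole is a legitimate (somewhat heavier) alternative to the paper's one-line averaging over inputs, but as written the proposal does not close the argument.
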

Note that the above lemma assumes Alice is the party that speaks fewer times in the first $2r/5$ rounds of~$\pi$
when averaging over all possible inputs $(x,y)\in X\times Y$; otherwise, a symmetric lemma holds for Bob.
\begin{proof}
Let $x$ be an input for Alice such that on most of the values~$y$, Alice speaks fewer times in the first $2n/5$ rounds of~$\pi(x,y)$. Such an input must exist by our choice of Alice.
Let 
\[
Y'=\left\{ y\in Y {\,}\middle\vert{\,} CC_A^{\le 2r/5}(\pi(x,y))\le CC_B^{\le2r/5}(\pi(x,y))\right\}
\] 
be the set of all inputs for Bob, where Alice speaks fewer times in the first $2r/5$ rounds of~$\pi$ assuming Alice holds the above~$x$. By the choice of~$x$, it holds that $|Y'| \ge 2^n/2$.

Consider the set of transcript prefixes of length~$2r/5$ generated by~$\pi$ when Alice holds the above~$x$ and Bob holds some input from the set~$Y'$, 
\[
T_x=\left\{t[1,2r/5] \mid t=\pi(x,y),  y \in Y'\right\}.
\] 
Note that there are at most $(2|\Sigma|)^{2r/5}$ different prefixes of length $2r/5$ over $\Sigma$ with an arbitrary order of speaking.
Since we assumed $n/3 > 2r\log(2|\Sigma|)/5$, we have, for large enough~$n$,
\[
|Y'|\ge 2^{n-1} \ge (2^{(n+1)/2}+1) 2^{n/3} \ge (2^{(n+1)/2}+1) 2^{2r\log(2|\Sigma|)/5} \ge  \Upsilon |T_x|,
\] 
with $\Upsilon = 2^{(n+1)/2}+1$.
Using a pigeon-hole principle, there must be $y^1,y^2,\ldots , y^{\Upsilon}\in Y'$ such that 
$\{\pi(x,y^i)\}_{i=1}^{\Upsilon}$ agree on the first 
 $2r/5$ rounds of the protocol---they have an identical order of speaking and they communicate the same information. 
 
 Next consider the set $\{\overline{x}\oplus y^i\}_{i=1}^{\Upsilon}$.
Claim~\ref{clm:inner-prod} guarantees that there exist two elements in that set such that 
\[
\langle \overline{x}\oplus y^i, \overline{x}\oplus y^j\rangle =par( \overline{x});
\]
these $y^i, y^j$ will be our $y, y'$.

Note that Properties (1) and (2) of the lemma are satisfied by the above $x,y,y'$. We are left to show an input $x'$ for Alice that satisfies property~(3).

Based on the above $x,y,y'$ we construct $x'$ in the following manner. For any $i\in [n]$ set
\[
x'_i = 
\begin{cases}
y_i & y_i = y'_i \\
\overline{x_i} & y_i \ne y'_i
\end{cases}.
\]
The above $x'$ is constructed such that outputs given by $KW_{par}$ are disjoint if we change only the input of Alice or only the input of Bob. Formally,
\begin{claim}The following claims hold for the above $x,x',y,y'$,
\begin{enumerate}[label=(\alph*)]
\item $par(x') =0$,
\item $KW_{par}(x',y) \cap KW_{par}(x',y') = \emptyset$,
\item  $KW_{par}(x',y) \cap KW_{par}(x,y) = \emptyset$  and $KW_{par}(x',y') \cap KW_{par}(x,y') = \emptyset$.
\end{enumerate}
\end{claim}
\begin{proof}
\begin{enumerate}[label=(\alph*)]
\item It is easy to check that 
$x'_i=  \left((\overline{x_i}\oplus y_i) \cdot (\overline{x_i}\oplus y'_i)\right)\oplus \overline{x_i} $. Therefore, 
\begin{align*}
par(x') 
&=\bigoplus_{i=1}^n x'_i  \\
&=\bigoplus_{i=1}^n\left ( \left((\overline{x_i}\oplus y_i) \cdot (\overline{x_i}\oplus y'_i)\right)\oplus \overline{x_i}\right ) \\
&=\langle \overline{x}\oplus y, \overline{x}\oplus y'\rangle \oplus par(\overline{x}). 
\end{align*}
Since we picked $y, y'$ for which $\langle \overline{x}\oplus y, \overline{x}\oplus y'\rangle = par(\overline{x})$, we conclude that $par(x')=0$.
\item Assume towards contradiction that $i\in KW_{par}(x',y) \cap KW_{par}(x',y')$, i.e., $x'_i \ne y_i$ as well as $x'_i \ne y'_i$. However, $x'_i,y_i,y'_i$ are all bits and these two inequalities imply $y_i = y'_i$. But then, $x'_i = y_i$ by the way we construct~$x'$, which is a contradiction.
\item  
Assume towards contradiction that $i \in KW_{par}(x',y) \cap KW_{par}(x,y)$. That is, 
$x'_i \ne y_i$ and $x_i \ne y_i$, which means that $x'_i = x_i$. 
On the other hand, by the construction of $x'$, 
either $x'_i \ne x_i$ or $x'_i = y_i$. Both options lead to a contradiction.
The proof of the second part is identical.
\end{enumerate}
\end{proof}
The first claim proves that $x'\in X$. The other claims prove property (3) of the lemma and conclude its proof.
\end{proof}

Our main result in this section is the following theorem, proving that no protocol for the $KW_{par}$  can be resilient to a $(1/5,1/5)$-corruption if its communication is bounded. This will imply that any coding scheme that is resilient to $(1/5,1/5)$-corruption must have rate~0.
Specifically, it cannot produce a protocol with a constant overhead with respect to the optimal protocol that computes $KW_{par}$ over reliable channels.
\begin{theorem}\label{thm:KWnotresilient}
Any interactive protocol $\pi$ that computes the relation $KW_{par}$
with at most $|\pi|<\frac56\frac{n}{\log(2|\Sigma|)}$ rounds over a noisy channel with alphabet~$\Sigma$ and noiseless feedback is not resilient to $(1/5,1/5)$-corruptions.
\end{theorem}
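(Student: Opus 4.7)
My plan is to apply Lemma~\ref{lem:KwHard} to~$\pi$---the length hypothesis $|\pi|<(n-1)/\log|\Sigma|$ is exactly what that lemma requires---and then to exhibit an explicit $(1/5,1/5)$-corruption that leaves Alice's view indistinguishable between two genuinely different true inputs. Writing $r=|\pi|$, I will consider two parallel executions: Run~A with true input $(x',y)$ and Run~B with true input $(x',y')$. If I can design noise patterns in each run (respecting the $(1/5,1/5)$-budget) so that Alice's complete view coincides in the two runs, then she must output the same $z\in[n]$ in both, and by $KW_{par}(x',y)\cap KW_{par}(x',y')=\emptyset$ from property~(3) this $z$ is invalid on at least one input, contradicting resilience.

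For Phase~1 (rounds $1$ through $2r/5$), in both runs I will corrupt each Alice-message from $a_v(x')$ (her actual transmission) to $a_v(x)$ (what Bob should receive). By property~(2), Alice has fewer than $r/5$ Phase~1 turns, so this costs at most $r/5$ of her noise budget in each run. Bob will be left uncorrupted, and by property~(1) his messages are the same on $y$ and on $y'$ along the path of $\pi(x,\cdot)$, so both runs trace exactly the same sub-path as $\pi(x,y)=\pi(x,y')$ in the first $2r/5$ rounds and reach a common node $v^*$; Alice's Phase~1 view is therefore identical in the two runs.

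For Phase~2 (the remaining $3r/5$ rounds), I will first assume Bob speaks at most $2r/5$ times in Phase~2 along the attack path (the complementary case is handled symmetrically, see below). Alice will receive no further corruption, so she transmits $a_w(x')$ uncorrupted at each Phase~2 Alice-node~$w$, identically in both runs. At Bob's $i$-th Phase~2 turn (at some node~$w$), I will fix the symbol received by Alice to be $b_w(y)$ if $i\le r/5$ and $b_w(y')$ otherwise. This makes the same symbol appear at Alice in both runs---so the path stays identical across the two runs and the schedule is well-defined---while costing a Bob-corruption only in the second batch in Run~A (at most $r/5$) and only in the first batch in Run~B (again at most $r/5$). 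Since Bob has at most $2r/5$ Phase~2 turns, the two batches together cover all of them, staying within Bob's budget in each run.

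Combining both phases, Alice's view---her input $x'$, her feedback sequence recording what Bob received from her, and her own received-symbol sequence---coincides in Run~A and Run~B, so Alice reaches the same leaf and outputs the same $z\in[n]$ in both; property~(3) then yields the contradiction. The remaining case, in which Bob speaks more than $2r/5$ times in Phase~2 under this attack, will be handled by the mirror-image attack, using the other half of property~(3), namely $KW_{par}(x',y)\cap KW_{par}(x,y)=\emptyset$, together with the symmetric form of Lemma~\ref{lem:KwHard} obtained from its WLOG step. The main delicate point I expect to encounter is exactly this Phase~2 bookkeeping: the adversary cannot afford to corrupt every Bob-turn in both runs simultaneously, and the split $r/5$/$r/5$ Bob-corruption schedule above is precisely what keeps the budget feasible whenever Bob does not speak too often in Phase~2.
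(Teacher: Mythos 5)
Your Phase~1/first-case argument is essentially the paper's case~(i): build a common received transcript whose first $2r/5$ rounds follow $\pi(x,y)=\pi(x,y')$ (corrupting only Alice's $\le r/5$ early turns, by properties~(1) and~(2)), then split Bob's later turns into a $y$-batch and a $y'$-batch of at most $r/5$ each, so Alice cannot tell $y$ from $y'$ and property~(3) kills her. The genuine gap is in your complementary case. First, the appeal to ``the symmetric form of Lemma~\ref{lem:KwHard} obtained from its WLOG step'' is not available: the WLOG in that lemma is dictated by the protocol (whichever party speaks less, on average, in the first $2r/5$ rounds), so if Alice is the quieter party you only get the Alice-version; the Bob-confusing attack must be mounted with the \emph{same} $x,x',y,y'$, which is exactly why property~(3) supplies the second disjointness $KW_{par}(x',y)\cap KW_{par}(x,y)=\emptyset$.

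Second, and more substantively, your case split does not transfer to the mirror attack. Your hypothesis ``Bob speaks more than $2r/5$ times in Phase~2'' is a statement about the \emph{first} attack's path, whose Phase~2 follows Alice playing $x'$ against Bob-symbols that switch to being $y'$-consistent after his $r/5$-th turn. The mirror attack (confusing Bob between $(x,y)$ and $(x',y)$) follows a different continuation once those paths diverge, and its budget requires that \emph{Alice} speak at most $r/5$ times after round $2r/5$ \emph{along the mirror path} (since in the $(x,y)$ run all of her later messages must be corrupted to look like $x'$). Because the order of speaking may depend on the received transcript, the bound along one path does not imply the bound along the other, so neither attack is guaranteed to fit the $(1/5,1/5)$ budget in your second case. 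The paper avoids this by building a single received transcript $T=T_1T_2T_3T_4$ in which Bob's post-round-$2r/5$ turns are capped at $r/5$ per batch ($T_2$ consistent with $y$, $T_3$ with $y'$, then $T_4$ reverting to $y$), running \emph{all} the noisy executions so that the received transcript is exactly $T$, and splitting on whether $T_4=\emptyset$. In the case $T_4\ne\emptyset$, Alice necessarily has at most $r/5$ turns after round $2r/5$ along $T$, and the Bob-confusing attack corrupts Alice's later messages ($\le r/5$) plus Bob's messages in the $T_3$ stretch ($\le r/5$) in \emph{both} runs, which is the bookkeeping your sketch is missing.
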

\begin{proof}
Let $\pi$ be a protocol 
with $r<\frac56\frac{n}{\log(2|\Sigma|)}$ rounds communicating symbols from the alphabet~$\Sigma$.
Via Lemma~\ref{lem:KwHard}, let $x_0,x_1\in X$ and $y_0,y_1\in Y$ 
be inputs that satisfy:
\begin{description}
\item[(1)] $\pi(x_0,y_0)$ and $\pi(x_0,y_1)$ agree on the first $2r/5$ rounds.
\item[(2)] During the first $2r/5$ bits of the protocol $\pi(x_0,y_0)$ Alice speaks less than Bob. 
\item[(3)] $KW_{par}(x_1,y_0)\cap KW_{par}(x_1,y_1)=\emptyset$ and $KW_{par}(x_1,y_0)\cap KW_{par}(x_0,y_0)=\emptyset$.
\end{description}

\goodbreak
We now generate a transcript~$T$ and show that
$T$ is consistent with a $(1/5,1/5)$-corruption of $\pi(x_1,y_0)$.
Additionally, it is either the case that  $T$ is consistent with   
 a $(1/5,1/5)$-corruption of $\pi(x_1,y_1)$ or it is  consistent with a
$(1/5,1/5)$-corruption of $\pi(x_0,y_0)$. In the first case, Alice is unable to distinguish the case where Bob holds~$y_0$ and $y_1$; in the second, Bob cannot tell if Alice holds $x_0$ or $x_1$. The outputs for different inputs are distinct by property~(3). Thus the confused party is bound to err on at least one of them.

Note that the transcript~$T$ contains messages \emph{received} by the two parties, which may be noisy. Due to the feedback, both parties learn~$T$. Additionally, the order of speaking in~$\pi$ is entirely determined by (prefixes of)~$T$. Specifically, if two different instances of $\pi$ have the same received transcript by round~$j$, the party to speak in round~$j+1$ is identical in both instances.

\goodbreak
\smallskip
\noindent The string~$T$ is obtained in the following manner:
{\small
\begin{enumerate}
\item Run $\pi(x_0,y_0)$ for $2r/5$ rounds. Let $T_1$ be the generated transcript.
\item Run $\pi(x_1,y_0 \mid T_1)$ until Bob transmits $r/5$ additional symbols (unless $\pi$ terminates beforehand). Let $T_2$ be the generated transcript.  
\item (if $|T_1T_2|<r$) Run $\pi(x_1,y_1 \mid T_1T_2)$ until Bob transmits $r/5$ additional symbols (unless $\pi$ terminates beforehand). 
\item (if $|T_1T_2T_3|<r$) Let $T_4$ describe $\pi(x_1,y_0 \mid T_1T_2T_3)$ until it terminates. 
\item Set $T=T_1T_2T_3T_4$.  
\end{enumerate}
In the case where the above algorithm did not execute Step~$i$, for $i\in\{3,4\}$, assume $T_i=\emptyset$.
} %

We now show that $T$ corresponds to a $(1/5,1/5)$-corrupted execution of $\pi$ for two different valid inputs with disjoint outputs.  
We consider two cases: (i) when Step 3 halts since $T$ reached its maximal size of $r$ symbols  (i.e., when $T_4=\emptyset$), and (ii) when Step 3 halts since Bob transmitted $r/5$ symbols in this step ($T_4\ne\emptyset$).
\begin{description}
\item[case (i)  $\mathbf{T_4=\emptyset}$.]
In this case we show that a $(1/5,1/5)$-corruption suffices to make the executions of $\pi(x_1,y_0)$ and $\pi(x_1,y_1)$ look the same from Alice's point of view.

Let $\Pi$ be the transcript of a noisy execution of $\pi(x_1,y_0)$ (defined shortly) and split $\Pi$ into three parts: $\Pi=\Pi_1\Pi_2\Pi_3$ that correspond in length to $T_1,T_2,T_3$.
The noise changes all Alice transmissions in $\Pi_1$ so that they correspond to Alice's symbols in~$T_1$; the noise changes all Bob's transmissions in $\Pi_3$ so that they correspond to Bob's transmissions in~$T_3$.
It is easy to verify that the obtained transcript $\Pi$ of received messages is exactly~$T$.
Furthermore, the first part changes at most $r/5$ transmissions by Alice, since by property~(2) Alice speaks fewer times in the first $2r/5$ of the instance $\pi(x_0,y_0)$.  
The second part changes at most $r/5$ transmissions of Bob since $T_3$ halts before Bob communicates additional $r/5$~transmissions. Hence, the noise described above is a valid $(1/5,1/5)$-corruption.

On the other hand, and abusing notation, consider a (noisy) instance of~$\pi(x_1,y_1)$ and let $\Pi=\Pi_1\Pi_2\Pi_3$ be the received messages transcript split into parts that corresponds in length to $T_1,T_2,T_3$, assuming the following noise.
Again, the noise changes all Alice's  transmissions in $\Pi_1$ to be the corresponding symbols received in~$T_1$. 
This makes the $2r/5$ first rounds of the received transcript look like the ones in the instance~$\pi(x_0,y_1)$.
By Property~(1), these transmissions agree with the first $2r/5$ transmissions in the noiseless instance~$\pi(x_0,y_0)$; hence, the corrupted $\Pi_1$ equals~$T_1$. 
Next, the noise changes Bob's transmissions in $\Pi_2$ to correspond to~$T_2$. The obtained transcript~$\Pi$ is then exactly~$T$  (note that $\Pi_3=T_3$ by definition). 
Again, $T_1$ contains at most $2r/5$ of Alice's transmissions, and $T_2$ contains at most $r/5$ transmissions of Bob by definition. Hence, this is a valid $(1/5,1/5)$-corruption.

We conclude by recalling that $KW_{par}(x_1,y_0) \cap KW_{par}(x_1,y_1)=\emptyset$, that Alice must be wrong on at least one of the above executions, since her view in both executions is the same.
Note that the above proof holds even when $T_3=\emptyset$.

\item[case (ii) $\mathbf{T_4 \ne \emptyset}$.]
In this case we show a $(1/5,1/5)$-corruption that makes the executions of $\pi(x_0,y_0)$ and $\pi(x_1,y_0)$ look the same from Bob's point of view.
We point out that Alice speaks at most $r/5$ times after Step~1. 
Indeed, Step~1 contains $2r/5$ rounds, and Steps 2--3 contain $2r/5$ rounds where Bob speaks, hence, Alice may speak in at most another $r/5$ times after Step~1.

Let $\Pi$ be the transcript of a noisy execution of $\pi(x_0,y_0)$ where the noise is defined below. Split $\Pi$ into 4 parts: $\Pi=\Pi_1\Pi_2\Pi_3\Pi_4$ that correspond in length to $T_1,T_2,T_3,T_4$.
The noise changes all Alice's transmissions in $\Pi_2\Pi_3\Pi_4$ so that they match the corresponding symbols of $T_2,T_3,T_4$. As mentioned, this corrupts at most $r/5$ symbols.
Additionally, the noise changes Bob's transmissions in $\Pi_3$ to correspond to $T_3$; 
this, by definition, entails $r/5$ corruptions of Bob's transmissions.
The obtained transcript $\Pi$ is exactly~$T$.

On the other hand, and abusing notation again, consider a noisy execution of $\pi(x_1,y_0)$ denoted by $\Pi=\Pi_1\Pi_2\Pi_3\Pi_4$. Here the noise is defined as follows. 
The noise changes all Alice's transmissions in $\Pi_1$ to match the corresponding symbols of~$T_1$. 
As before, the noise changes Bob's transmissions in $\Pi_3$ to match~$T_3$.
Now it holds that $\Pi=T$, while the noise corrupted at most $r/5$ of each party's transmissions.

We conclude by recalling that $KW_{par}(x_0,y_0) \cap KW_{par}(x_1,y_0)=\emptyset$. Thus, Bob must be wrong on at least one of the above executions, since his view in both executions is exactly the same.
\end{description}
\end{proof}

Note that $KW_{par}$ has a protocol of length $O(\log n)$ assuming reliable channels.\footnote{This can easily be seen, e.g.,  by considering a formula that computes the parity of $n$ bits, and applying the Karchmer-Wigderson transformation~\cite{KW90}.}
Theorem~\ref{thm:KWnotresilient} leads to the following conclusion.

\begin{corollary}\label{cor:imp1/5}
There exists an interactive  protocol~$\pi_0$ defined over a noiseless channel with feedback 
such that any protocol~$\pi$ that computes the same functionality as~$\pi_0$ and
is resilient to $(1/5,1/5)$-corruptions (assuming noiseless feedback)
must incur an exponential blowup in the communication.
\end{corollary}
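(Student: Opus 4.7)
The plan is to derive the corollary essentially as an immediate consequence of Theorem~\ref{thm:KWnotresilient}, combined with a well-known efficient noiseless protocol for $KW_{par}$. First, I would take the base protocol $\pi_0$ to be the protocol for $KW_{par}$ on $n$-bit inputs obtained by applying the standard (noiseless) Karchmer--Wigderson transformation to the balanced fan-in-$2$ formula of size $O(n)$ and depth $O(\log n)$ that computes parity. Since the depth of this formula is $O(\log n)$, the resulting protocol $\pi_0$ has length $|\pi_0|=O(\log n)$ over a constant-size alphabet, so $CC(\pi_0)=O(\log n)$.

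Now let $\pi$ be any protocol over an alphabet $\Sigma$ (possibly large) that computes $KW_{par}$ on $n$-bit inputs and is resilient to $(1/5,1/5)$-corruptions. The contrapositive of Theorem~\ref{thm:KWnotresilient} directly yields
\[
|\pi|\;\ge\;\frac{n-1}{\log|\Sigma|},
\]
since otherwise the theorem would exhibit a $(1/5,1/5)$-corruption against which $\pi$ fails. Multiplying by $\log|\Sigma|$ gives
\[
CC(\pi)\;=\;|\pi|\log|\Sigma|\;\ge\;n-1,
\]
so the communication of \emph{any} $(1/5,1/5)$-resilient protocol for $KW_{par}$ grows at least linearly in~$n$, independent of the alphabet chosen.

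Combining the two bounds gives $CC(\pi)\ge n-1 = 2^{\Omega(CC(\pi_0))}$, which is the stated exponential blowup. I do not anticipate a genuine obstacle: the entire argument is just an unpacking of Theorem~\ref{thm:KWnotresilient} together with the $O(\log n)$ noiseless complexity of $KW_{par}$. The only mild care needed is to interpret ``same functionality'' as ``computes the relation $KW_{par}$'' and to note that the lower bound from Theorem~\ref{thm:KWnotresilient} is alphabet-independent once one multiplies through by $\log|\Sigma|$, which is precisely why the theorem is stated in the form $|\pi|<(n-1)/\log|\Sigma|$ rather than purely in terms of rounds.
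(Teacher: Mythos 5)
Your proposal is correct and matches the paper's own argument: the paper derives this corollary exactly by combining Theorem~\ref{thm:KWnotresilient} with the observation that $KW_{par}$ has a noiseless-channel protocol of length $O(\log n)$ (via the standard KW-transformation applied to a logarithmic-depth parity formula). Your extra step of multiplying by $\log|\Sigma|$ to make the bound alphabet-independent is a fine and faithful unpacking of the same reasoning.
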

As a consequence, any coding scheme that compiles any protocol into a $(1/5,1/5)$-resilient version  must have rate~$0$.

\section{A Coding Scheme with a Large Alphabet}
\label{sec:largeAB}\label{sec:codingLarge}
In this section we construct a coding scheme for interactive protocols assuming noiseless feedback. 
We show that for any constant $\eps>0$,
any protocol~$\pi_0$ defined over noiseless channels (with noiseless feedback) 
can be simulated by a protocol~$\pi= \pi_\eps$ defined over noisy channels (with noiseless feedback) 
such that   
(1) $CC(\pi)/CC(\pi_0) = O_\eps(1)$, and 
(2) $\pi$ is resilient to $(1/5-\eps,1/5-\eps)$-corruptions.
The protocol~$\pi$ in this section communicates symbols from a large alphabet of polynomial size in~$|\pi_0|$. In later sections we show how to reduce the size of the alphabet. 
While the coding scheme $\pi$ will alter its order of speaking in accordance with the noise, we will assume that $\pi_0$ is an alternating protocol. This is without loss of generality, since any protocol can be made alternating by increasing its communication complexity by a factor of at most~2.

\subsection{The coding scheme}

\begin{figure}[htp]
\centering
\begin{tikzpicture}
[msg/.style={rectangle,draw=black,thick,minimum height=0.33cm,minimum width=0.66cm},
cor/.style={rectangle,draw=red,thick,text=red,minimum height=0.33cm,minimum width=0.66cm},
wrong/.style={rectangle,draw=black,fill=red!50,thick,minimum height=0.33cm,minimum width=0.66cm}
]

\node at (1,3) (m1) [msg] {$m_1$};
\node at (3,3) (m3) [cor] {$\mathbf{X}$};
\node at (5,3) (m5) [msg] {$m_5$};

\node at (2,1) (m2) [msg] {$m_2$};
\node at (4,1) (m4) [wrong] {$m_4$};
\node at (6,1) (m6) [msg] {$m_6$};

\path (m1)  edge [<-,bend left,very thick,red] (m3);
\path (m1)  edge [<-,bend left=50,thick] (m5);
\path (m2)  edge [<-,bend left,thick] (m4);
\path (m4)  edge [<-,bend left,thick] (m6);

\path node [left] {round} (0.5,0) edge [->]   (8,0);
\foreach \i in {1,2,...,7}
{
	\draw (\i,-0.1) -- (\i,0.1);
	\node at (\i, -0.5) {\i};
}

\end{tikzpicture}
\caption{Alice is the sender at odd rounds, and Bob at even rounds. 
Each message $m_i=(link_i,b_i)$ contains a link to the previous \emph{uncorrupted} message sent by the same party and the next bit of $\pi_0$ according to current information.
The message $m_3=(m_1,b_3)$ sent by Alice in round~$3$ was corrupted into $X=(m_1,\overline b_3)$. 
At the beginning of round~$4$, the current chain contains ${m_1,X}$ for Alice and $m_2$ for Bob, thus Bob sends $m_4$ according to this false information. Note that Bob's message is not corrupted, yet Alice knows it is based on wrong information and ignores it. 
In round 5 Alice indicates to Bob that round 3 was corrupted by linking $m_5$ to $m_1$ rather than to the corrupted~$X$. Bob learns that $m_4$ was generated based on wrong information and can correct it using future transmissions. In round 6, the current chain contains $m_1,m_5$ for Alice and $m_2,m_4,m_6$ for Bob. The information in $m_4$, although on Bob's chain, will not affect the generated simulated transcript.}
\end{figure}

At a high level, the coding scheme (Algorithm~\ref{alg:coding}) runs for $n=|\pi_0|/\eps$ rounds in which it tries to simulate~$\pi_0$ step by step. The availability of noiseless feedback allows a party to notice when 
the channel alters a transmission sent by that party.
The next time that party speaks, it will re-transmit its message and ``link'' the new transmission to its latest uncorrupted transmission. That is, each message carries a ``link''---a pointer to a previous message sent by the same party. By following the links, the receiver learns the ``chain'' of  uncorrupted transmissions;  the party considers all ``off-chain'' transmissions as corrupted.
Note that there are two chains: one for symbols received by Alice and one for symbols received by Bob. However, due to the feedback, \emph{both} parties learn the received symbols at both sides and can infer both chains.

The algorithm consists of several sub-procedures.
The $\Parse$ procedure (line~\ref{alg:Parse}) %
parses all the transmissions received so far at a given party and outputs the ``current chain'' of that party: the (rounds of the) transmissions linked by the \emph{latest} received transmission. 
Note that  once a new transmission arrives, the current chain of the recieving party possibly changes. Moreover, upon reception of a corrupt transmission, a corrupt chain may be retrieved.

Given the current chains of both parties, the parties can infer a candidate for a partial transcript for~$\pi_0$. We call this transcript the \emph{simulated} transcript of~$\pi_0$ given the current chains.
Note, that if the last received transmission is corrupted, the current chain linked to it might be arbitrary, and the candidate for the simulated transcript will be wrong.
The \textsc{TempTranscript} procedure (line~\ref{alg:TempTrans}) determines the partial simulated transcript of~$\pi_0$ according to all the messages received in~$\pi$ so far at both sides, i.e., according to the current chains.
Again, the scheme considers only transmissions that are on the current chains and ignore all off-chain transmissions.  

The \textsc{TempTranscript} procedure computes the simulated transcript 
by concatenating all the messages that according to the current knowledge (a) were received uncorrupted and (b)  were generated (at the sender's side)  according to ``correct'' information, i.e., information that is consistent with the current chains.
To clarify this behavior, 
consider round~$i$ where, without loss of generality, Alice sends the message~$m_i$.
Assume that 
the last transmission \emph{received} by Alice prior to round~$i$, which we denote~$\Prev(i)$ (see Definition~\ref{def:prev}), is uncorrupted. This implies that Alice learns in round~$\Prev(i)$ the \emph{correct} current chain of Bob, i.e., which of Bob's transmissions so far are correct and which ones are corrupted. Using the feedback, Alice knows which of her transmissions were corrupted and thus she knows both current chains. 
Learning the correct chains allows Alice to retrieve a correct partial transcript of~$\pi_0$ (Lemma~\ref{lem:uncorrupt-Goodtrans}). 
Hence, she can generate the correct~$m_i$ that extends the simulation of~$\pi_0$ by one symbol.

In each round of the protocol, the parties construct the partial simulated transcript implied by the current chains. If the received transmission is not corrupted, the $\Trans$ procedure retrieves the correct transcript (i.e., the simulated transcript at that point is indeed a prefix of the transcript of~$\pi_0$).
Then, the parties simulate the next rounds of~$\pi_0$ assuming the simulated partial transcript. As long as there is no noise in two alternating rounds, the next transmission extends the simulation of~$\pi_0$ by one symbol. Otherwise, the sent symbol may be wrong, however, it will be ignored in future rounds once the chains indicate that this transmission was generated due to false information.
Finally, at the end of the protocol, the parties output the simulated transcript implied by the 
\emph{longest} chain at each side. 
The main part of this section is the proof that the longest chain 
indeed implies a complete and correct simulation of~$\pi_0$.

An important property of the coding scheme is its adaptive order of speaking. In the first $2n/5$ rounds, the order of speaking alternates. In later rounds, the order of speaking is determined according to the observed noise: the more corrupted transmissions a party has, the \emph{less} the party gets to speak.
In particular, the protocol is split into \emph{epochs} of 2 or 3 rounds each. 
In the first two rounds of an epoch, the order is fixed: Alice speaks in the first round and Bob speaks in the second. 
Then, the parties estimate the noise each party suffered so far (namely, the length of their current chain) and decide whether or not the epoch has a third round and who gets to speak in that extra round. 
For Alice to be the speaker in the third epoch-round, her \emph{current} chain must be of length less than~$n/5$ while Bob's current chain must be longer than~$n/5$; Bob gets to speak if his chain is of length less than~$n/5$ while Alice's chain is longer than~$n/5$. 
In all other cases, the epoch contains only two rounds. 
We emphasize that due to the noiseless feedback, both parties agree on the received symbols (on both sides), which implies they agree on the current chains on both side, and thus, on the order of speaking in every epoch.
The $\Next$ procedure (line~\ref{alg:Next}), which determines the next speaker according to the current received transcript, captures the above idea.

The coding scheme is depicted in Algorithm~\ref{alg:codingLarge}.

\begin{algorithm}[htp!]
\caption{\textbf{(Part I)} A coding scheme against $(1/5,1/5)$-corruptions assuming noiseless feedback\\(Large alphabet; Alice's side)}
\label{alg:1515corruption}
\label{alg:coding}\label{alg:codingLarge}
\begin{algorithmic}[1]
\small
\Statex Input: A binary alternating protocol $\pi_0$ with feedback; noise parameter $1/5-\epsilon$.
Alice's input for~$\pi_0$ is~$x$.
\Statex Let $\Sigma=[n] \times \{0,1,\emptyset \}$. 

\Statex
\State 
\parbox[t]{\dimexpr\linewidth-\algmargin-4cm}{%
Throughout the protocol, maintain $S_A,R_A,R_B$, the sent, received by Alice, and received by Bob (as indicated by the feedback) symbols communicated up to the current round, respectively.}
\For {$i=1$ to $n=|\pi_0|/\eps$} 
	\State $p_{\text{next}}= \Call{Next}{R_A,R_B}$ \Comment{Determine the next party to speak}
	
	\If {$p_{\text{next}}=$Alice}
		\State $T\gets$\Call{TempTranscript}{$S_A,R_A,R_B$}
		\State \label{alg:NextSymbol}
		\parbox[t]{\dimexpr\linewidth-\algmargin-2cm}{%
		The next symbol $\sigma = (link,b)$ to be communicated is: \\
		\phantom{\hspace{\algorithmicindent}}$link $ is the latest \emph{non-corrupted} round $link<i$ where Alice is the speaker \\ \phantom{\hspace{\algorithmicindent}$link $} ($0$ if no such round exists). 
		\\
		\phantom{\hspace{\algorithmicindent}}$b = \pi_0(x\mid T)$ if Alice is the sender in $\pi_0$, otherwise (or if $\pi_0$ has completed) $b=\emptyset$.
		 }
	\Else
		\State (receive a symbol from Bob) 
	\EndIf
	\State \textbf{end if}
\EndFor
\State \textbf{end for}
\Statex
\Comment{Determine output by the longest chains}
\State $j \gets \argmax_i |\mathsf{Parse}({R_B^{\le i}})|$
\State $j' \gets \argmax_i |\mathsf{Parse}({R_A^{\le i}})|$
\State \textbf{Output} \Call{TempTranscript}{$S_A,R_A^{\le j'},R_B^{\le j}$}
\label{alg:last-step}

\end{algorithmic}
\end{algorithm}

\def\CommSpace{5cm}

\begin{algorithm}[pht!]
\caption{}   %
\caption*{\textbf{Algorithm~\ref*{alg:coding} (Part II)} The Parse, Next, and TempTranscript procedures}

\begin{algorithmic}[1]
\setcounterref{ALG@line}{alg:last-step}  %

\Statex
\Procedure{Parse}{$m_1,\ldots,m_t$} \label{alg:Parse}
\State $\textit{Chain} \gets \emptyset$, $j\gets t$
\While {$j>0$}
  \State $\textit{Chain}  \gets \textit{Chain}  \cup \{j\}$
  \State  $j \gets m_j.link$
\EndWhile
\State \Return \textit{Chain}
\EndProcedure 
\State \textbf{end procedure}

\Statex
\Procedure{TempTranscript}{$S_A,R_A,R_B$}   \Comment {\smash{\parbox[t]{\CommSpace}{Procedure for Alice; Bob's algorithm is symmetric}}} \label{alg:TempTrans}
	\State Set $G_B=\Parse(R_A)$
	\State \parbox[t]{0.5\columnwidth}{Set $G_A$ as all the rounds in which outgoing transmissions are not corrupted (as learnt by $R_B,S_A$)}
	\State For any $i<n$, if $\Prev(i),i \in G_A\cup G_B \cup \{0\}$ add $i$ to $\GChain$  \label{alg:trans:addgood}
	\State Set $T$ as the concatenation of all $\{b_i\}_{i\in \GChain}$, where $\sigma_i=(link_i,b_i)$ is the symbol received in round~$i$. 
	   \label{alg:trans:concatenate}
	\State \Return $T$
\EndProcedure
\State \textbf{end procedure}	%

\Statex

\Procedure{\Next}{$R_A,R_B$} \label{alg:Next}
	\State $i \gets |R_A|+|R_B|+1$  		\Comment {\smash{\parbox[t]{\CommSpace}{We are at round $i$}}}
	\State $j \gets 1$; $\Skip_A \gets 0$, $\Skip_B \gets 0$
		\Loop			
			\If {$i=j$} \Return Alice 
			\Comment {\smash{\parbox[t]{\CommSpace}{The speakers in the first two rounds of each epoch are fixed}}}
			\EndIf
			\If {$i=j+1$} \Return Bob \EndIf

			\Comment {\smash{\parbox[t]{\CommSpace}{Update the Skip counters}}}
			\If {$|\Parse(R_A^{< j+2})|\le n/5$} $\Skip_B \gets \Skip_B+1$ \EndIf \label{alg:step:SkipB}
			\If {$|\Parse(R_B^{< j+2})|\le n/5$} $\Skip_A \gets \Skip_A+1$ \EndIf \label{alg:step:SkipA}
			\Statex	
			\If  {$|\textsf{Parse}(R_B^{< j+2})| \le n/5 < |\textsf{Parse}(R_A^{< j+2})|$}
				\Comment {\smash{\parbox[t]{\CommSpace}{The epoch contains a 3rd round whenever one skip counter increases but the other does not}}}
				\If {$i=j+2$}  \Return Bob
				\Else \  $j \gets j+3$
				\EndIf
			\ElsIf {$|\textsf{Parse}(R_A^{< j+2})| \le n/5 < |\textsf{Parse}(R_B^{< j+2})|$}
								\If {$i=j+2$}  \Return Alice
				\Else\  {$j \gets j+3$}
				\EndIf
			\EndIf
			\State \textbf{otherwise} \State \hspace{\algorithmicindent} $j \gets j+2$						\Comment {\smash{\parbox[t]{\CommSpace}{An epoch with only 2 rounds}}}
		\EndLoop
		\State \textbf{end loop}
\EndProcedure
\State {\textbf{end procedure}}  

\Statex

\Statex \textbf{Note:} $R_A^{\le j}$ is the prefix of $R_A$ as received by the $j$-th round of the protocol (incl.~$j$) and $R_A^{< j}$ excluding round~$j$. The terms $R_B^{\le j}$ and $R_B^{<j}$ are similarly defined.

\end{algorithmic}
\end{algorithm}

\subsection{Basic properties}
Every transmission $m\in \Sigma=[n] \times \{0,1,\emptyset \}$ is interpreted as $m_i=(link_i,b_i)$ where $link_i$ points to a previous symbol~$m_j$.

\begin{definition}\label{def:prev}
For any $i\in [n]$
define $\Prev_A(i)$ as the maximal round $j<i$ where Alice is the speaker at round~$j$, or as~$0$ if no such $j$ exists.
Similarly, $\Prev_B(i)$ is the maximal round $j<i$ where Bob is the speaker at round~$j$, or as~$0$ if no such~$j$ exists. 
The notation~$\Prev(i)$ without a specific subscript refers to the maximal round~$j<i$ where the speaker in round~$j$ differs from the speaker of round~$i$.
\end{definition}

\subsubsection{Good rounds and the implied transcript}
Next, we show that whenever a transmission arrives correctly at the other side, the receiver learns all the uncorrupted transmissions communicated so far. First, let us define the notions of good rounds and implied transcript. 

\begin{definition}[Good Round]
A round $i$ is called \emph{good}  if the transmissions in both rounds~$i$ and $\Prev(i)$ are uncorrupted.
Let $\GOOD$ be all the good rounds and $\GOOD^{\le i} \triangleq \GOOD \cap [i]$.
\end{definition}
We additionally set any round $i \le 0$ to be good (and uncorrupted) by definition.

\begin{definition}[Implied Transcript]\label{def:trans}
For any round~$i$, the transcript $T(i)$ is defined as the  (natural order) concatenation of bits $\{b_j\}_{j\in \GOODi{i}}$ %
where $\sigma_j=(link_j,b_j)$ is the symbol transmitted (and correctly received) in round~$j$.
\end{definition}

While the above $\GOOD$ and $T(i)$ are tools for the analysis, 
the next lemma shows that whenever the $i$-th transmission arrives correctly at the other side, the receiver learns
$\GOODi{i}$ and $T(i)$. Specifically, the variable $\GChain$ (Line~\ref{alg:trans:addgood}) equals $\GOODi{i}$ and $\Trans$ outputs $T(i)$. This allows us (despite some abuse of notation) to treat $T(i)$ and the output of $\Trans$ interchangeably, as long as round~$i$ is uncorrupted.

\begin{lemma}\label{lem:uncorruptedMeansGood}
For any $i\in [n]$,
if the transmission of round~$i$ is uncorrupted, then  
$\GChain= \GOODi{i}$ at the receiver, and $\Trans$ outputs $T(i)$.
\end{lemma}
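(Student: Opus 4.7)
The plan is to prove the lemma by establishing three facts, in order: (i) both parties agree on the order of speaking in every round; (ii) each party exactly knows which of its own transmissions were uncorrupted; and (iii) when round $i$ is uncorrupted, the receiver can correctly identify all uncorrupted rounds by the \emph{sender} preceding $i$ by following the backward link chain. Combined, these yield $G_A \cup G_B \cup \{0\}$ equal to the set of all uncorrupted rounds in $[0,i]$, which directly gives $\GChain = \GOODi{i}$ and hence that \Trans outputs $T(i)$.

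For (i), note that \Next depends only on $R_A$ and $R_B$, which by the noiseless feedback are known identically to both parties at every point in time. Hence the order of speaking---and therefore $\Prev(j)$ for every $j$---is common knowledge. For (ii), the noiseless feedback lets each party compare its own sent string ($S_A$ or $S_B$) against what was actually received on the other side; each party thus perfectly tags its own rounds as uncorrupted or not, producing a correct $G_A$ in Alice's procedure (and symmetrically $G_B$ in Bob's).

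For (iii), assume without loss of generality that round $i$ is an Alice-round that arrives uncorrupted at Bob (the case of a Bob-round received by Alice is symmetric). By Line~\ref{alg:NextSymbol}, whenever Alice is the sender in an uncorrupted round $j$, she uses feedback (cf.\ (ii)) to set $link_j$ to the latest uncorrupted Alice-round strictly below $j$, or $0$ if none exists. Consequently, starting from round $i$ and iteratively following the $link$-pointers traverses exactly the uncorrupted Alice-rounds in $[0,i]$, in decreasing order. This is precisely what Bob's symmetric \Parse call returns, yielding the correct $G_A$ at Bob's side.

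With $G_A \cup G_B \cup \{0\}$ equal to the set of all uncorrupted rounds in $[0,i]$, and with $\Prev$ commonly known, Line~\ref{alg:trans:addgood} produces exactly $\{j \le i : j \text{ and } \Prev(j) \text{ are both uncorrupted}\} = \GOODi{i}$. The concatenation in Line~\ref{alg:trans:concatenate} then reproduces $T(i)$ verbatim, by Definition~\ref{def:trans}. The one subtle point---and the main obstacle to track carefully---is the backward-chain enumeration in (iii): it must hit \emph{every} uncorrupted sender-round below $i$ without skipping any, which reduces to a one-line induction over uncorrupted rounds using the explicit definition of $link$ in Line~\ref{alg:NextSymbol}. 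Feedback is precisely what makes this induction go through.
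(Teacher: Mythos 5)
Your proof is correct and follows essentially the same route as the paper's: the paper's (much terser) argument likewise observes that the receiver knows its own uncorrupted rounds via feedback and that, since round $i$ arrives intact, $\Parse$ of the received string recovers exactly the sender's uncorrupted rounds, from which $\GChain=\GOODi{i}$ and the output $T(i)$ follow by definition. Your additional points (agreement on the speaking order and the explicit induction along the $link$-chain) are details the paper leaves implicit, not a different approach.
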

\begin{proof}
Assume, without loss of generality, that Alice is the receiver of the $i$-th transmission.
Since transmission~$i$ is uncorrupted, it holds that $G_B=\Parse(R_A)$ contains exactly all the uncorrupted transmission sent by Bob so far. Alice knows her own uncorrupted transmissions $G_A$ via the feedback. 
Then $\GChain$ indeed holds all the good rounds up to round~$i$, and $\Trans$ outputs $T(i)$ by definition.
\end{proof}
\begin{remark}
Assume that round~$i$ is corrupted and let $j<i$ be the latest uncorrupted round. 
Then $T(i) = T(j)$, which equals the output of $\Trans$ in round~$j$. 
However, the output of $\Trans$ in round~$i$ may be arbitrary.
\end{remark}

The next lemma argues that, if round~$i$ is uncorrupted, then the implied transcript~$T(i)$ (and hence the output of $\Trans$ in round~$i$) is indeed a correct (partial) simulation of~$\pi_0$.
\begin{lemma}\label{lem:uncorrupt-Goodtrans}
If round~$i$ is uncorrupted, then $T(i)$ is a prefix of~$\pi_0(x,y)$. 
\end{lemma}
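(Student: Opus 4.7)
The plan is to prove the claim by strong induction on $i$, in fact proving the slightly stronger statement that $T(i)$ is a prefix of $\pi_0(x,y)$ for \emph{every} $i\in[n]$, regardless of whether round $i$ itself is corrupted. This stronger statement is the natural inductive invariant because $T(i)$ depends only on the good rounds $\le i$, and it trivially implies the lemma.

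The base case $T(0)=\emptyset$ is immediate. For the inductive step, the easy case is $i \notin \GOOD$: then by Definition~\ref{def:trans}, $T(i)=T(i')$ where $i'$ is the largest good round strictly less than $i$ (or $T(i)=\emptyset$ if none exists), and the claim follows from the induction hypothesis applied to $i'<i$. The interesting case is $i\in\GOOD$, which means that both round $i$ and round $\Prev(i)$ were transmitted without corruption. Without loss of generality, assume Alice is the speaker at round~$i$; I will argue that the symbol $b_i$ Alice generates (line~\ref{alg:NextSymbol}) is exactly the next symbol $\pi_0$ would produce on top of $T(i-1)$.

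The heart of the argument is to show that when Alice invokes $\Trans$ at the start of round $i$, it returns $T(i-1)$. Two observations drive this. First, by definition of $\Prev$, every round in $(\Prev(i),i)$ is Alice's, so Alice has received no new symbols since round $\Prev(i)$; hence her $R_A$ ends exactly at $\Prev(i)$. Since that transmission is uncorrupted by assumption, the $link$ field it carries honestly points to Bob's previous uncorrupted round, and chasing these pointers recursively, $\Parse(R_A)$ returns precisely the set of all uncorrupted Bob-rounds $\le \Prev(i)$, which (again by the definition of $\Prev$) equals all uncorrupted Bob-rounds $<i$. Second, the noiseless feedback guarantees that Alice's own bookkeeping set $G_A$ is exactly the set of uncorrupted Alice-rounds $<i$. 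Combining the two, $G_A\cup G_B$ equals the set of all uncorrupted rounds in $[1,i-1]$, and so the set $\GChain$ constructed in line~\ref{alg:trans:addgood} equals $\{j<i : j \text{ and } \Prev(j) \text{ are uncorrupted (or }\Prev(j)=0)\}=\GOODi{i-1}$. Consequently $\Trans$ outputs the concatenation of $b_j$ over $j\in\GOODi{i-1}$, which is $T(i-1)$.

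Applying the induction hypothesis, $T(i-1)$ is a prefix of $\pi_0(x,y)$. Alice therefore sets $b_i=\pi_0(x\mid T(i-1))$ (or $b_i=\emptyset$ if $\pi_0$ has completed or the next speaker in $\pi_0$ is Bob, in which case appending $b_i$ does not extend the transcript). In either case $T(i)=T(i-1)\circ b_i$ is a prefix of $\pi_0(x,y)$, completing the induction. The main obstacle is the bookkeeping in the middle paragraph: one must check carefully that the chain-following inside $\Parse$ produces the correct set of uncorrupted rounds, which rests on the fact that $\Prev(i)$ being uncorrupted guarantees that the entry point of the $\Parse$ traversal carries an honest $link$ pointer, so the chain propagates truthfully all the way back.
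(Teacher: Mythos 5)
Your proof is correct and takes essentially the same route as the paper: an induction on the round index whose crux is that, because round $\Prev(i)$ is uncorrupted, the speaker of a good round~$i$ recovers the correct implied transcript and therefore extends the simulation of~$\pi_0$ by the right symbol. The only cosmetic differences are that you strengthen the inductive statement to all rounds (the paper inducts over uncorrupted rounds and treats the non-good case via $T(i)=T(j)$), and that you re-derive inline the content of Lemma~\ref{lem:uncorruptedMeansGood}, which the paper's proof simply invokes.
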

\begin{proof}
The  proof goes by induction on~$i$. The base case $T(0)=\emptyset$ is trivial. Assume that the claim holds for $T(j)$ for any uncorrupted round~$j<i$; we show that the same holds for round~$i$.

Assume, without loss of generality, that Alice is the receiver in round~$i$. 
Let $j$ be the maximal previous round where Alice's transmission was not corrupted.
Since round $j$ is uncorrupted, Lemma~\ref{lem:uncorruptedMeansGood} proves that, at round~$j$, Bob learns $\GOODi{j}$ and $T(j)$. By the induction hypothesis, $T(j)$ is a prefix of $\pi_0(x,y)$. 

If $j<\Prev_A(i)$ then $i$ is \emph{not} a good round, $i\notin \GOODi{i}$. It holds that $\GOODi{i}=\GOODi{j}$ and $T(i)=T(j)$; therefore, $T(i)$ is indeed a prefix of $\pi_0(x,y)$. Otherwise, $j=\Prev_A(i)$ and $i$ \emph{is} a good round. 
As said, in round $j$ Bob learns $T(j)$ (which is a correct prefix of $\pi_0(x,y)$).
Next, in round~$i$ it is Bob's turn to send the symbol~$\sigma_i=(link_i,b_i)$. 
If it is Bob's turn to speak in $\pi_0$, then $b_i=\pi_0(y \mid T(j))$ will indeed be the correct continuation of~$T(j)$ according to $\pi_0$; otherwise, Bob sends $b_i=\emptyset$. In both cases, the channel does not corrupt $\sigma_i$, Alice learns $\GOODi{i}$ and the implied transcript she constructs equals $T(i)=T(j) \circ b_i$. Hence, $T(i)$ is indeed a prefix of~$\pi_0(x,y)$.
\end{proof}

\subsubsection{Skipped rounds, the order of speaking and noise-progress tradeoffs}
The order of speaking in the protocol depends  on the observed noise measured through the length of the current chain. Whenever the current chain is shorter than $n/5$ for only one of the parties, this party ``skips'' one round of communication---the other party gets to speak one additional round. We now define the skipping mechanism and use it to show that the coding scheme makes progress unless too much noise has occurred.  
\begin{definition}
We say that the epoch that starts at round~$i$ 
is \emph{Alice-skipped} if $|\Parse(R_B^{<i+2})| \le n/5$. Similarly, it is \emph{Bob-skipped} if $|\Parse(R_A^{<i+2})| \le n/5$. 
\end{definition}
Note that an epoch can be both Alice- and Bob-skipped.
Whenever an epoch is Alice-skipped, the counter $\Skip_A$ increases by one (Line~\ref{alg:step:SkipA}) and Alice speaks only one time in that epoch. Similarly, in a Bob-skipped epoch, $\Skip_B$ increases by one and Bob speaks only once.

Next we prove some properties with regard to the number of rounds each party gets to speak, as a function of the noise. In particular, we relate between the variables $\Skip_A,\Skip_B$ and the number of rounds Alice and Bob get to speak, denoted $\RC_A, \RC_B$, respectively.

\begin{lemma}\label{lem:Skip2RC}
Alice is the sender in $\frac{1}2(n-\mathsf{SkipCnt}_A+\mathsf{SkipCnt}_B)$ rounds and 
Bob is the sender
in $\frac{1}2(n-\mathsf{SkipCnt}_B+\mathsf{SkipCnt}_A)$ rounds.
\end{lemma}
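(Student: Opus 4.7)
The plan is to reduce the statement to a bookkeeping exercise over the epochs executed by the coding scheme. I would begin by verifying that, because of the noiseless feedback, both parties see identical values of $\Parse(R_A^{<j+2})$ and $\Parse(R_B^{<j+2})$ for every epoch-start $j$, so each epoch is unambiguously classified by the pair of booleans
\[
\bigl(\,|\Parse(R_B^{<j+2})|\le n/5,\ \ |\Parse(R_A^{<j+2})|\le n/5\,\bigr).
\]
This gives exactly four epoch-types, which I will count with variables $a,b,c,d$:
(i) $a$ epochs in which neither chain is short (a $2$-round epoch, no counter updated);
(ii) $b$ epochs in which both chains are short (a $2$-round epoch, both $\Skip_A$ and $\Skip_B$ incremented);
(iii) $c$ epochs of type ``$|\Parse(R_B^{<j+2})|\le n/5 < |\Parse(R_A^{<j+2})|$'' (a $3$-round epoch whose third round belongs to Bob; only $\Skip_A$ is incremented);
(iv) $d$ epochs of type ``$|\Parse(R_A^{<j+2})|\le n/5 < |\Parse(R_B^{<j+2})|$'' (a $3$-round epoch whose third round belongs to Alice; only $\Skip_B$ is incremented).

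Next I would read off four identities directly from the \Call{Next}{} procedure. Since Alice is the first speaker of every epoch and Bob is the second, and the third round (when it exists) goes to the party that the \texttt{if}/\texttt{elseif} branch specifies, we get
\[
n=2a+2b+3c+3d,\qquad \Skip_A=b+c,\qquad \Skip_B=b+d,
\]
\[
\RC_A=a+b+c+2d,\qquad \RC_B=a+b+2c+d.
\]
Subtracting the two Skip identities yields $\Skip_B-\Skip_A=d-c$, and then a one-line computation
\[
\RC_A=\tfrac12(2a+2b+3c+3d)+\tfrac12(d-c)=\tfrac12\bigl(n+\Skip_B-\Skip_A\bigr),
\]
which is the first claim. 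The second claim follows symmetrically (or by using $\RC_A+\RC_B=n$).

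The one thing I would be careful about is the boundary: the protocol runs for exactly $n$ rounds, so the last epoch might be truncated (the loop in \Call{Next}{} could return while the \texttt{if}/\texttt{elseif} cascade is still pending, or before the third round is reached). To handle this, I would extend the execution by appending at most two ``virtual'' rounds so that the final epoch becomes complete, observe that this can change $\RC_A,\RC_B,\Skip_A,\Skip_B$ each by at most a bounded additive constant, and arrange the choice of $n$ (or equivalently, pad the protocol) so that the last epoch is always complete---this is the convention implicit in the algorithm's analysis since $n=|\pi_0|/\eps$ can be rounded to fit. With this minor caveat addressed, the counting identities above become exact and the lemma follows. The bookkeeping in these four epoch-type identities is the only substantive step; I do not anticipate any deeper obstacle.
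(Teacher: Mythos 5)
Your proposal is correct and follows essentially the same route as the paper: split the execution into the epochs produced by \textsc{Next} and count Alice's speaking rounds per epoch (the paper writes this as the per-epoch formula $\frac12(n(i)-A(i)+B(i))$ rather than your four-type tally $a,b,c,d$, but the bookkeeping is identical). Your boundary caveat about a truncated final epoch is also the same issue the paper acknowledges in the remark following the lemma, where it states the slightly weaker bound $\RC_A \ge (n-\Skip_A+\Skip_B)/2-2$ and ignores the rounding.
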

\begin{proof}
We split the protocol into the epochs generated by the \Next procedure. 
For the $i$-th epoch denote $n(i)\in \{2,3\}$ the number of rounds in that epoch, and let $A(i)$ (resp., $B(i)$) be an indicator which is~$1$ if the epoch is Alice-skipped (resp., Bob-skipped).

Note that Alice speaks in the $i$-th epoch exactly $\frac12(n(i)-A(i)+B(i))$ times: 
if $n(i)=2$ it must hold that $A(i)=B(i)$ and Alice speaks once. She also speaks once if $n(i)=3$, but Bob speaks at the third round, $A(i)=1,B(i)=0$, i.e., if this is an Alice-skipped but not a Bob-skipped epoch. Finally, Alice speaks twice only when $n(i)=3$ and $A(i)=0,B(i)=1$. 
Then,
\[
RC_A = \sum_i \frac{n(i)-A(i)+B(i)}2= \frac{n - \Skip_A +\Skip_B}2.
\]
The case for Bob is symmetric.
\end{proof}

\begin{remark} In fact, due to rounding and the fact that Alice is the first to speak, she might get one extra round if the total number of rounds does not divide into full epochs, e.g., when the last epoch contains only a single round. 
A more accurate statement is $\RC_A \ge (n -\mathsf{SkipCnt}_A +\mathsf{SkipCnt}_B )/2 - 2$ (and similarly for Bob). In order to simplify the proof, we ignore this issue.
\end{remark}

Next, we connect the number of skips with the amount of noise that happens during the first part of the protocol.
\begin{claim}\label{clm:skip-bound}
If $t$ transmissions by Alice were corrupted during the $2n/5$ first rounds, then at the end of the protocol,
\[
\Skip_A \ge n/5 + t.
\]
\end{claim}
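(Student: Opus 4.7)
The plan is to show $\Skip_A \ge n/5 + t$ by counting the epochs whose chain $|\Parse(R_B^{<j+2})|$ is at most $n/5$, since each such epoch forces $\Skip_A$ to increment according to the algorithm.

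A key tool is the structural upper bound $|\Parse(R_B^{<r_i+2})| \le u(r_i) + c_i$, where $u(r_i)$ counts the uncorrupted Alice transmissions in rounds $\le r_i$ and $c_i$ is the length of the maximal contiguous block of corrupted Alice transmissions ending at $r_i$ (with $c_i=0$ when $r_i$ is uncorrupted). The justification: an honest link from an uncorrupted Alice transmission points to the most recent uncorrupted Alice round, skipping over any intervening corrupted rounds, so the chain traced from an uncorrupted endpoint consists exactly of the uncorrupted Alice rounds and contributes length $u(r_i)$. If $r_i$ is corrupted, the adversary can at best extend the chain by setting links to traverse the contiguous corrupted tail of length $c_i$ and then into the most recent uncorrupted round before the tail, from which honest links continue.

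With this bound in hand, I would split the argument into two parts. In Part~1 (the first $n/5$ epochs), no $3$-round epoch can occur because a $3$-round epoch requires one of the two chains to exceed $n/5$, which is impossible when each party has transmitted at most $n/5$ times so far. Hence every epoch is $2$-round, the number of Alice-speak rounds by round $r_i$ is $A(r_i) = i \le n/5$, and $u(r_i)+c_i \le A(r_i) \le n/5$, so $\Skip_A$ increments in all $n/5$ of these epochs, and after Part~1 we have $u(r_{n/5}) = n/5-t$ and $c_{n/5} \le t$. In Part~2 I obtain the remaining $t$ increments by tracking how $u+c$ evolves on the subsequent Alice-speak rounds: each corrupted round increases $u+c$ by exactly $1$, while each uncorrupted round resets $c$ to $0$ and raises $u$ by $1$, which may cause $u+c$ to drop to $u+1$. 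Using the adversary's overall budget of $(1/5-\eps)n$ corruptions---leaving at most $(1/5-\eps)n-t$ for Part~2---and the lower bound on the number of Alice rounds implied by Lemma~\ref{lem:Skip2RC}, I would show that the adversary cannot corrupt every Part~2 Alice-speak round, so at least $t+1$ uncorrupted Alice rounds must appear in Part~2. At each such uncorrupted round the chain equals $u$, and since $u$ starts at $n/5-t$ and needs $t+1$ additional uncorrupted rounds to exceed $n/5$, the first $t$ uncorrupted Part~2 rounds each force a $\Skip_A$ increment.

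The main obstacle will be the Part~2 analysis against an adaptive adversary who interleaves corruptions with uncorrupted rounds in order to delay the moment $u+c$ drops back to $u$. I would handle this with an amortized potential argument on $u+c$: each corrupted round in Part~2 either already satisfies $u+c \le n/5$ and produces an increment, or it consumes a unit of corruption budget without permanent advantage, because a subsequent uncorrupted round resets $c$ and brings the chain value back to the honest $u$. The budget constraint combined with this amortization then yields at least $t$ Part~2 increments, which combined with Part~1's $n/5$ increments gives $\Skip_A \ge n/5 + t$.
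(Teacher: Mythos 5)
Your overall plan (Part~1 yields the $n/5$ increments; budget counting plus the fact that an uncorrupted endpoint reveals the correct chain yields the extra $t$) is essentially the paper's, but the key tool you build it on is false. The bound $|\Parse(R_B^{<r_i+2})|\le u(r_i)+c_i$, with $c_i$ the \emph{contiguous} corrupted tail, does not hold: the $link$ field of a corrupted transmission is chosen by the adversary and may point to \emph{any} earlier round, so corrupted messages that are far apart can be chained to one another before the chain finally enters the honest suffix. For instance, if Alice's first five transmissions are corrupted (each linked to the previous one), her sixth is uncorrupted, and her seventh is corrupted with a link to the fifth, then parsing from the seventh gives a chain of length $6$, whereas $u+c=1+1=2$. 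The correct bound for a corrupted endpoint is $u(r_i)$ plus the number of \emph{all} corrupted Alice rounds so far, and with that bound your Part-2 dichotomy collapses: a corrupted check round with $u+c\le n/5$ need not produce an increment, since the adversary can route its chain through the $t$ part-1 corruptions and into the honest chain, pushing the parsed length above $n/5$. Hence the amortized potential argument on $u+c$ does not go through as stated.

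A second, smaller issue is the unit of counting: $\Skip_A$ is incremented only at the per-epoch test $|\Parse(R_B^{<j+2})|\le n/5$ (line~\ref{alg:step:SkipA}), whose endpoint is the \emph{first} round of the epoch, so what must be counted is epochs whose first Alice transmission is uncorrupted, not uncorrupted Alice-speak rounds in general (an uncorrupted third-round transmission of Alice raises $u$ without ever being checked). The paper's proof needs no bound on chains with corrupted endpoints: part~2 contains at least $n/5$ epochs, each beginning with an Alice transmission, of which at least $n/5-\bigl((1/5-\eps)n-t\bigr)=t+\eps n$ are uncorrupted; at such a check the parsed chain \emph{equals} the correct chain (Lemma~\ref{lem:uncorruptedMeansGood}), which has length at most $n/5-t$ after part~1 and grows by one per such epoch, so the first $t$ of these epochs are all Alice-skipped. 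Your detour through Lemma~\ref{lem:Skip2RC} to lower-bound Alice's part-2 rounds is both unnecessary and insufficient here, since it bounds rounds rather than epoch checks; the per-epoch count is what closes the argument.
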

\begin{proof}
During the first $2n/5$ rounds, all the epochs are both Alice- and Bob-skipped (i.e., epochs of size 2).
This means that by round $i=2n/5$, $\Skip_A=n/5$. 

Split rounds $[2n/5+1,n]$ into epochs as done by the $\textsc{Next}$ procedure; note that there are at least $n/5$ epochs in this part of the protocol. Since the noise corrupted~$t$ of Alice's transmissions before round~$2n/5$, it can corrupt at most $n/5-\eps n-t$ additional transmissions of Alice beyond round~$2n/5$.

That is, in at least 
$n/5 - (n/5-\eps n-t)=t+\eps n$ of the epochs after round~$2n/5$, 
Alice's transmission (in the first round of the epoch) is not corrupted; call these epochs Alice-uncorrupted.
Note that by round $2n/5$, Alice's ``correct'' chain is of length at most $n/5-t$.
As long as the length of Alice's correct chain is less than~$n/5$, any Alice-uncorrupted epoch is also Alice-skipped. In each such epoch,  $\Skip_A$ increases by one and Alice gets to speak only once. The length of Alice's correct chain also increases by one in each such epoch.
It follows that in each of following $t$ Alice-uncorrupted epochs, $\Skip_A$ increases
until the length of the correct chain exceeds~$n/5$ and the condition of line~\ref{alg:step:SkipA} does not hold any longer.
may increase $\Skip_A$ even further). 
Since the number of Alice-uncorrupted epochs is $t+\eps n > t$, the counter will indeed reach at least~$n/5+t$.
\end{proof}

The following lemma captures a key property of our resilient protocol---a relation between the length of the implied transcript and the number of corruptions that have occurred so far.
\begin{lemma}\label{lem:stronger}
If, up to some round~$r$, there were $t$ Alice-skipped epochs where Alice's transmission is uncorrupted
and at most $t-k$ corruptions in Bob's transmissions, 
then 
\[
|T(r)| \ge k.
\] 
\end{lemma}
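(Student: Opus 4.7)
The plan is to prove the (equivalent) stronger statement $|T(r)| \geq t(r) - c(r)$, where $t(r)$ is the number of Alice-uncorrupted Alice-skipped epochs whose first round is at most $r$, and $c(r)$ is the number of corrupted Bob transmissions within the first $r$ rounds; the lemma then follows by setting $k = t(r) - c(r)$. I would induct on $t(r)$: when $t$ does not increase from round $r-1$ to $r$, the invariant is preserved by monotonicity of $|T|$ and of $c$, so the only interesting case is when a new Alice-uncorrupted Alice-skipped epoch $E_t$ opens at round $j_t = r$.

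The crux is a local growth lemma confined to the window $[j_t, j_t + 1]$. I claim that whenever both $\Prev(j_t)$ (Bob's latest transmission before $j_t$) and $j_t + 1$ (Bob's transmission inside $E_t$) are uncorrupted, $|T|$ gains at least one bit over this window. The argument splits on the parity of $|T(j_t - 1)|$. If even, round $j_t$ is good and $\pi_0$ expects Alice, so the bit $b_{j_t}$ is non-empty and directly extends $T$. If odd, then $b_{j_t} = \emptyset$, but round $j_t + 1$ is still a good Bob-round (since its $\Prev$ is $j_t$, uncorrupted), and $|T(j_t)|$ remains odd, so $\pi_0$ expects Bob and $b_{j_t + 1}$ is non-empty, extending $T$ instead.

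Consequently each Alice-uncorrupted Alice-skipped epoch is either a \emph{success} (growth of at least $1$ in $|T|$ in this window) or a \emph{failure}, in which case at least one of $\Prev(j_t),\, j_t + 1$ is a corrupted Bob transmission. The rounds $\{\Prev(j_t)\}_t$ are pairwise distinct---Bob's transmission at $j_{t-1} + 1$ within $E_{t-1}$ lies strictly between $\Prev(j_{t-1})$ and $\Prev(j_t)$---and so are the rounds $\{j_t + 1\}_t$. Producing an injection from failing epochs to distinct corrupted Bob rounds would give failures $\le c(r)$ and hence $|T(r)| \ge t(r) - c(r)$.

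The main obstacle I anticipate is constructing this injection in the presence of the overlap $\Prev(j_{t+1}) = j_t + 1$ that arises when $E_{t+1}$ immediately follows $E_t$ in the protocol and $E_t$ is both-skipped (a two-round epoch A~B). A priori a single corrupted Bob round at $j_t + 1$ could be charged as the witness for both $E_t$ (via $j_t + 1$) and $E_{t+1}$ (via $\Prev(j_{t+1})$). The resolution is a parity argument: the failure that blames $\Prev(j_{t+1})$ requires $|T(j_{t+1} - 1)|$ to be even, whereas the failure that blames $j_t + 1$ requires $|T(j_t - 1)|$ to be odd; but no growth occurs in $E_t$ under the latter failure, so $|T(j_{t+1} - 1)| = |T(j_t - 1)|$, and the two parities are incompatible. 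Doubly-corrupted epochs have two distinct corruptions available for attribution, and a straightforward case check disposes of the remaining bookkeeping, yielding the desired injection.
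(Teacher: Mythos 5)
Your route is genuinely different from the paper's. The paper proves the lemma by induction on the target length $k$: it picks the minimal good round $j$ with $|T(j)|=k$, observes that at most $t-t'-2$ of Bob's transmissions in $[j,r]$ can be corrupted while Alice still has $t-t'$ uncorrupted Alice-skipped rounds there, and extracts one more good Bob (or Alice) round by pigeonhole; the built-in slack of $2$ quietly absorbs all boundary effects. You instead prove the amortized invariant $|T(r)|\ge t(r)-c(r)$ via a local two-round window lemma plus a charging/injection argument. Your key ingredients check out: the window lemma is correct (if $\Prev(j_t)$ and $j_t+1$ are uncorrupted then either round $j_t$ is good and it is Alice's turn in $\pi_0$, or round $j_t+1$ is good and it is Bob's turn, so $T$ grows), the blamed round of a failing epoch is always a Bob round, the $\Prev(j_t)$'s and the $j_t+1$'s are pairwise distinct, and your parity argument does rule out the only possible collision $\Prev(j_{t+1})=j_t+1$. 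This is a nice alternative: it gives the tight bound $t-c$ rather than the paper's bound with additive slack, at the price of more delicate bookkeeping.

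There is, however, one concrete case your write-up does not cover: the epoch that opens exactly at round $r$ (i.e., $j_t=r$), which your $t(r)$ counts but whose window $[r,r+1]$ and potential blamed round $r+1$ stick out past $r$. If $|T(r-1)|$ is even you are fine (either $T$ grows at round $r$ itself, or $\Prev(r)<r$ is corrupted and chargeable), but if $|T(r-1)|$ is odd this epoch contributes $+1$ to $t(r)$ with neither a growth by round $r$ nor a corrupted Bob round inside $[1,r]$ to charge, so the invariant at round $r$ does not follow from the ingredients you state. The invariant is nevertheless true, but closing this case needs an extra argument, e.g.: if parity is odd at the opening of a counted epoch and the invariant were tight at $r-1$, then the last growth was an Alice-growth at the opening round $j_s$ of some counted epoch, whence $j_s+1$ (and the third round $j_s+2$, if Bob owns it) must be corrupted, and by the very parity analysis you use for collisions that corruption is never a blamed round --- giving the missing unit of slack. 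Alternatively one can count only epochs whose second round is at most $r$, but then the application in the paper (where the straddling Alice round is counted) needs a word. The paper's proof never meets this issue because its count of $t-t'-2$ corruptions versus $t-t'$ uncorrupted Alice rounds leaves two units to spare; your tight version does not, so this boundary case deserves an explicit treatment rather than being folded into the ``straightforward case check.''
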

\begin{proof}
We prove the claim by induction on $k$ for all $r\ge t\ge k$.

The base case: $k=0$, trivially holds since for all~$r,t$, we have $|T(r)| \ge 0$.
For the inductive step, 
we assume that the lemma holds (for both parties) for some $k$ and any $r\ge t\ge k$, 
and wish to show that it also holds for~$k+1$ and any $r \ge t \ge k+1$.
Specifically, let $r,t$ where $r\ge t$ be fixed.
We are given that by round~$r$ 
there are $t$ Alice-skipped Alice-uncorrupted epochs, and that Bob's transmissions suffer from at most $t-(k+1)$ corruptions; we need to show that $|T(r)|\ge k+1$.

Since the number of corruptions at Bob's side by round $r$ is less than $t-k$, the induction hypothesis tells us that there exists some round $j\le r$ where $|T(j)|\ge k$. Let $j$ be the minimal round such that $j\in \GOOD$ and $|T(j)|=k$ while $|T(j-1)|=k-1$. 
Recall that~$T$ extends only in good rounds %
(Definition~\ref{def:trans}),
hence, such a round $j$ must exist.

Assume there are $t'$ Alice-skipped Alice-uncorrupted epochs until round $j-1$.  
It follows that the number of corruptions at Bob's side (up to round~$j-1$) 
is at least $t'-k+1$: 
if the number of corruptions is strictly less than $t'-k+1$ then by induction $T(j-1)\ge k$, contradicting the way we chose~$j$. 
It follows that the number of corruptions in Bob's transmissions for round $[j,r]$ is
at most $(t-k-1) - (t'-k+1) = t-t'-2$. 

We now split the analysis into different cases.
Assume Alice is the speaker in the~$j$-th round. We know that $j\in\GOOD$ and  
$|T(j)|>|T(j-1)|$.
This implies that round $(j-1)$ is uncorrupted and that Bob is the next to speak in~$\pi_0$ given $T(j)$, due to the alternating nature of~$\pi_0$. 
Note that Alice has at least $t-t'$ additional uncorrupted rounds within Alice-skipped epochs in rounds $[j,r]$ (note that Alice speaks in round~$j$, which is uncorrupted). 
In all these cases, either Bob speaks a single time immediately after Alice, or he speaks twice after Alice (Alice-skipped epoch). 
Since at most $t-t'-2$ of Bob's transmissions are corrupted, 
it follows that 
there must exist an uncorrupted round $j'\in[j+1,r]$ where Bob is the speaker and  $\Prev_A(j')$ is  uncorrupted, i.e., $j'\in\GOOD$.
This implies that Bob sends the correct symbol that extends $T$ in round~$j'$; thus,
$|T(j')|=|T(j)|+1=k+1$.  Since $|T(\cdot)|$ is non-decreasing, we have proved the claim.

The other case is when Bob is the speaker in the $j$-th round.
Again, $j\in\GOOD$, since $|T(j)|> |T(j-1)|$, thus,  round~$j$ itself is uncorrupted. 
Then, in $[j+1,r]$ Alice has $t-t'$ additional uncorrupted rounds (in Alice-skipped epochs) while at most $t-t'-2$ of Bob's transmissions are corrupted. 
Similar to the previous case, after each one of the aforementioned Alice-skipped Alice-uncorrupted rounds, Bob either speaks once or twice. 
If we consider the previous ($\Prev_B$) of these $t-t'$ rounds of Alice, we know that at most $t-t'-2$ of them can be corrupted (notice that round~$j$ itself belongs to Bob and is uncorrupted!).
This means that there must exist a round~$j'\in[j+1,r]$ where Alice is the speaker and $\Prev_B(j')$ is uncorrupted, i.e., $j'\in\GOOD$. Then, $|T(j')|=|T(j)|+1=k+1$, which completes this case.
\end{proof}

An immediate corollary of the above lemma shows that the ``correct'' chain of the  coding scheme  fully simulates~$\pi_0$. Indeed, the proof of Claim~\ref{clm:skip-bound} suggests that by the end of the coding scheme there were at least $n/5$ uncorrupted Alice-skipped rounds. Furthermore, the corruption on Bob's side is bounded to $(1/5-\eps)n$. Lemma~\ref{lem:stronger} then gives that 
\[
T(n) \ge     %
\eps n = |\pi_0|.
\]
Yet, we still need to prove that  the protocol outputs, in round~$n$, 
a chain that contains $T(n)$ as a prefix. 
In other words, we need to prove that the \emph{correct} chain is a prefix of the \emph{longest} chain. This is the goal of Section~\ref{sec:resilience-proof} below.

Another useful corollary of Lemma~\ref{lem:stronger} is the following lemma that measures the progress in the first $2n/5$ alternating rounds, as a function of the total amount of corruptions in that part.
\begin{lemma}\label{lem:progressWithCorruptions}
If by round $i \le 2n/5$ there were at most $i/2-k$ corruptions then
$|T(i)| \ge k$.
\end{lemma}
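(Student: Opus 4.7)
My plan is to reduce the statement to Lemma~\ref{lem:stronger}, which already relates the length of the implied transcript $T(\cdot)$ to the counts of skipped-uncorrupted epochs against the noise on the other side. The only real work is to translate the hypothesis ``at most $i/2-k$ total corruptions in the first $i\le 2n/5$ rounds'' into the hypothesis required by Lemma~\ref{lem:stronger}, and for that we need to understand the structure of the first $2n/5$ rounds.

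The key structural observation is that during the first $2n/5$ rounds, every epoch produced by the \Next procedure has length exactly~$2$ and is simultaneously Alice-skipped and Bob-skipped. Indeed, by round $j\le 2n/5$, each party has transmitted at most $j/2\le n/5$ symbols, so for both parties the current chain length $|\Parse(R_A^{<j+2})|$ and $|\Parse(R_B^{<j+2})|$ cannot exceed $n/5$. Inspection of \Next then shows that neither of the two ``else-if'' branches triggers: both skip counters increment and the epoch pointer advances by~$2$. Consequently the order of speaking in the first $2n/5$ rounds is alternating, and after $i$ rounds (with $i$ even) there have been exactly $i/2$ Alice-skipped epochs and $i/2$ Bob-skipped epochs.

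Given this, let $a$ be the number of corrupted transmissions by Alice and $b$ the number of corrupted transmissions by Bob during the first $i$ rounds. By hypothesis $a+b\le i/2-k$. Since all $i/2$ epochs so far are Alice-skipped, the number of Alice-skipped epochs in which Alice's transmission is uncorrupted equals $t\triangleq i/2-a$. The number of corruptions in Bob's transmissions is
\[
  b \;\le\; i/2 - k - a \;=\; t - k,
\]
which is exactly the hypothesis of Lemma~\ref{lem:stronger} applied at round~$r=i$ with parameters $(t,k)$. (If $t<k$, then the bound $|T(i)|\ge k$ would force $b<0$, so this case is vacuous.) Lemma~\ref{lem:stronger} then yields $|T(i)|\ge k$, as required.

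The one step I expect to need a small amount of care is the ``every early epoch has size $2$'' observation, because \Next reads chain lengths at the boundary $j+2$ rather than~$j$; but the crude bound that each chain is at most the number of that party's transmissions in $[1,j+1]\subseteq[1,2n/5+1]$, which is at most $n/5+1$, together with the strict inequality in the ``else-if'' guards ($>n/5$), makes the argument go through. Everything else is a direct bookkeeping reduction to Lemma~\ref{lem:stronger}.
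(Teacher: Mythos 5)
Your proposal is correct and follows essentially the same route as the paper's proof: observe that the first $2n/5$ rounds consist of alternating, length-$2$ epochs that are both Alice- and Bob-skipped, set $t$ to be Alice's uncorrupted rounds so that Bob suffers at most $t-k$ corruptions, and invoke Lemma~\ref{lem:stronger}. Your write-up is merely more explicit about verifying the epoch-structure claim inside \Next, which the paper asserts without proof.
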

\begin{proof}
This is an immediate corollary of Lemma~\ref{lem:stronger}.
Note that all the $n/5$ epochs up to round~$2n/5$ are both Alice-skipped and Bob-skipped epochs, and that the order of speaking alternates. 
Assume that Alice has~$t$ uncorrupted rounds until round~$i$, then Bob's transmissions suffer from at most $(i/2-k)-(i/2-t) = t-k$ corruptions. Lemma~\ref{lem:stronger} gives that $T(i)\ge k$.
\end{proof}

\subsection{Correctness: Resilience to $(1/5-\eps,1/5-\eps)$-corruptions}\label{sec:resilience-proof}

In this part we prove the following theorem.

\begin{theorem}\label{thm:resilience-Large}
For any $\eps>0$ and any binary alternating protocol~$\pi_0$, 
Algorithm~\ref{alg:coding} correctly simulates~$\pi_0$ over a noisy channel with noiseless feedback and is resilient to any $(1/5-\eps,1/5-\eps)$-corruption.
\end{theorem}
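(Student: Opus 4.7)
The plan is to combine the two structural lemmas established in this section—Claim~\ref{clm:skip-bound} (the skip-count lower bound) and Lemma~\ref{lem:stronger} (the noise--progress tradeoff)—to first show that the ``correct'' implied transcript $T(n)$ reaches length $|\pi_0|$, and then to argue that the algorithm's output step correctly recovers it. Both parties run symmetric computations, so it suffices to analyze Alice's output and note that Bob's case is identical.

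For the first step, under any $(1/5-\eps,1/5-\eps)$-corruption, let $t_1 \leq (1/5-\eps)n$ denote the number of Alice-corruptions in the first $2n/5$ rounds. All $n/5$ epochs in that initial segment are Alice-skipped, so at least $n/5 - t_1$ of them are simultaneously Alice-skipped and Alice-uncorrupted. The argument inside the proof of Claim~\ref{clm:skip-bound} provides at least $t_1$ further Alice-skipped Alice-uncorrupted epochs in $[2n/5+1,n]$, bringing the total count at round $n$ to at least $n/5$. Combined with Bob's total corruption budget of $(1/5-\eps)n$, Lemma~\ref{lem:stronger} (applied with $t=n/5$ and $t-k\leq(1/5-\eps)n$) yields
\[
|T(n)| \;\geq\; n/5 - (1/5-\eps)n \;=\; \eps n \;=\; |\pi_0|.
\]
Lemma~\ref{lem:uncorrupt-Goodtrans} guarantees that $T(n)$ is always a prefix of $\pi_0(x,y)$, so this length bound forces $T(n)=\pi_0(x,y)$.

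For the second step, the key structural observation I would exploit is that every parsed chain $\Parse(R_A^{\leq j'})$ decomposes into an \emph{uncorrupted prefix}—the first $m$ uncorrupted Bob transmissions, honestly linked via the $link$ field—followed by a possibly empty \emph{corrupted suffix} of length at most the adversary's Bob-budget $(1/5-\eps)n$. This is because whenever the backward walk from $j'$ reaches an uncorrupted round $i$, the honest $link$ field of Bob's message at $i$ points to Bob's immediately preceding uncorrupted transmission, which forces the rest of the backward walk onto the true chain. Combining this with the length bound $|T(n)|\geq |\pi_0|$ and Alice's exact knowledge of $G_A$ via $S_A$ and $R_B$ (available through the noiseless feedback), the $\argmax$-chain chosen by Alice must contain every round contributing to $T(n)$ within its uncorrupted prefix; the $\Trans$ procedure then correctly identifies those rounds as members of $\GChain$ and concatenates their honestly-sent bits, producing $T(n)=\pi_0(x,y)$. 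Both parties agree on $j$ and $j'$ because both observe $R_A$ and $R_B$ through the feedback, ensuring output consensus.

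\textbf{Main obstacle.} The chief difficulty I expect lies in the second step: ruling out the possibility that the adversary appends a corrupted suffix to the correct chain that ``leaks'' into $\GChain$ and contributes adversary-chosen bits to the concatenated output. My plan to address this is to argue that, once $T(n)$ has reached its full length $|\pi_0|$ inside the uncorrupted prefix, any corrupted-suffix round admitted to $\GChain$ corresponds (in the honest protocol) to a round after $\pi_0$ has completed, so its honest $b$-field would be $\emptyset$; any adversarial substitution of $b\neq\emptyset$ can then be charged either against the Bob-budget (for the corrupted round itself) or against the Alice-budget (needed to ensure $\Prev(i)\in G_A\cup G_B\cup\{0\}$), both of which are bounded by $(1/5-\eps)n$ and therefore insufficient to distort the $|\pi_0|$-long simulated prefix guaranteed by the first step.
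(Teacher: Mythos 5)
Your first step is sound and matches the paper's own intermediate conclusion: Claim~\ref{clm:skip-bound} plus Lemma~\ref{lem:stronger} (with Lemma~\ref{lem:uncorrupt-Goodtrans}) give $|T(n)|\ge \eps n=|\pi_0|$, so the \emph{correct} chain fully simulates $\pi_0$. The genuine gap is in your second step. Your structural observation is true — any parsed chain is a prefix of the true chain followed by a corrupted suffix of size at most the sender's corruption budget — but the conclusion you draw from it, namely that the $\argmax$ chain ``must contain every round contributing to $T(n)$ within its uncorrupted prefix,'' does not follow from the length bound, and this is exactly the hard part of the theorem. The corrupted suffix may contain up to $(1/5-\eps)n$ rounds, which dwarfs $|\pi_0|=\eps n$, so the adversary can fork off the correct chain \emph{early}, when its uncorrupted prefix implies only a partial simulation, and pad the fork with corrupted links; nothing you have said rules out that such a forked chain is strictly longer than the correct chain at round $n$. (The paper's introduction even exhibits this fork attack explicitly; the resolution is not that the longest chain contains the whole correct chain, but that \emph{if} a forked chain wins, its fork point is necessarily late enough that the uncorrupted prefix already implies the complete transcript.)

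Establishing that statement is where the paper spends its effort (Proposition~\ref{prop}), and it is a quantitative budget argument, not a structural one: one assumes the longest chain $P_A$ implies a wrong answer, lower-bounds its length via $|P_A|\ge \RC_A-(1/5-\eps)n$ (Lemma~\ref{lem:longest-path}), shows that a wrong output forces most of Bob's early budget to be spent ($NC_A'\le w_B+\eps n$, Lemma~\ref{lem:NCA2WB}), and then plays the skip counters against the number of rounds each party speaks (Lemma~\ref{lem:Skip2RC} and Claim~\ref{clm:skip-bound}) to derive $J+D>(1/5-\eps)n$, contradicting the noise budget. The adaptive order of speaking — a corrupted party speaks \emph{less} later — is essential to this accounting, and your proposal never uses it beyond step one. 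Your ``main obstacle'' paragraph addresses a different and more minor issue (adversarial $b\ne\emptyset$ bits after the simulation is complete) and its budget-charging sketch is too vague to substitute for the counting above; as written, the proposal assumes away the early-fork case rather than excluding it.
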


In the following we implicitly assume the noise is a $(1/5-\eps,1/5-\eps)$-corruption. 
As mentioned earlier, Algorithm~\ref{alg:codingLarge} simulates the entire transcript of~$\pi_0$ correctly in its \emph{good} rounds. However, the parties cannot tell which rounds are good rounds. Instead, we show that the transcript implied by the longest chain (of each side) contains the entire transcript of~$\pi_0$ as its prefix. 

Before proving the theorem, let us set some notations.
Consider a complete instance of the coding scheme of Algorithm~\ref{alg:coding}.
Let $P_A$ be the longest chain of Alice's transmissions as seen by Bob at the end of the protocol.
Formally, $P_A = \Parse(R_B^{\le j^{max}})$ with $j^{max} = \argmax_j |\Parse(R_B^{\le j})|$.
Given a chain $P_A$ we differentiate between several types of Alice's rounds:
\begin{enumerate}
\item Uncorrupted rounds that are on $P_A$---we denote these rounds as the set~$NC_A$.
\item Corrupted rounds that are on $P_A$---we denote these rounds as the set $D$.
\item Corrupted rounds that are not on $P_A$---we denote these rounds as the set~$J$.
\end{enumerate}
In a similar way we can define~$P_B$ as the longest chain of Bob's transmissions as observed by Alice at the end of the protocol, and $NC_B$ as the set of uncorrupted rounds that are on~$P_B$, where Bob is the speaker.

\begin{proof}
Assume an instance of the algorithm with $(1/5-\eps,1/5-\eps)$-corruption. We prove that the algorithm outputs the transcript of~$\pi_0$ correctly.

Let $P_A$ be the longest chain of Alice at the end of the protocol. Then,
\begin{equation}\label{eqn:lem:longest-path}
|P_A| \ge \RC_A-(1/5-\eps)n.
\end{equation}
This holds since, in any uncorrupted round, Alice's transmission contains a link to the longest previous correct chain (Line~\ref{alg:NextSymbol}), thus extending this chain by at least one link. 
These uncorrupted rounds where Alice is the speaker form a chain of length at least $\RC_A-(1/5-\eps)n$. The longest chain at the end of the protocol, $P_A$, may only be longer. 

We can further classify each transmission of Alice, and understand its effect on~$P_A$, i.e., whether it belongs to the set $NC_A$ of uncorrupted rounds, the set $D$ of harmful corrupted rounds (that got into~$P_A$), or the set~$J$ of corrupted rounds that are not on~$P_A$, and thus are somewhat harmless.

It is easy to verify that
\begin{align} 
&|J|+|D| \le (1/5-\eps)n,	\label{eqn:NoiseLimit}\\
&|NC_A|+(1/5-\eps)n -|J| \ge \RC_A - (1/5-\eps)n \label{eqn:noiseCor}.
\end{align}
Eq.~\eqref{eqn:NoiseLimit} follows trivially from bounding the noise to a $(1/5-\eps,1/5-\eps)$-corruption.
Eq.~\eqref{eqn:noiseCor} is an immediate corollary of Eq.~\eqref{eqn:lem:longest-path} and Eq.~\eqref{eqn:NoiseLimit}, since $|P_A|=|NC_A|+|D|$.

Define $w_A,w_B$ to be the number of corrupted transmissions in the first $2n/5$ rounds (of Alice's and Bob's transmissions, respectively).
Furthermore we distinguish between before and after round~$2n/5$ via a prime and double prime superscripts, respectively, i.e.,
\begin{align*}
 &{NC}_A' = NC_A \cap [1, 2n/5],  & &{NC}_A'' = NC_A \cap [2n/5+1, n], \\
&D' = D \cap [1,2n/5], & &D'' = D \cap [2n/5+1,n], \\
&J' = J \cap [1,2n/5], & &J'' = J \cap [2n/5+1,n].
\end{align*}

\begin{lemma}\label{lem:NCA2WB}
If $|NC_A'| - w_B > \eps n$, then $P_A$
contains, as a prefix, a correct and complete simulation of~$\pi_0$.
\end{lemma}
\begin{proof}
Indeed, up to round $2n/5$ there were at  most $n/5-|NC_A'|$ corruptions on Alice's side and $w_B$ corruptions on Bob's, with a total of $n/5-|NC_A'|+w_B <n/5-\eps n$ corruptions.
Lemma~\ref{lem:progressWithCorruptions} implies that the progress up to round $2n/5$ is at least $\eps n$, that is, $|T(2n/5)| \ge \eps n = |\pi_0|$. 
Then, the entire transcript of~$\pi_0$ is correctly simulated by the $|NC_A'|$ uncorrupted rounds, and these rounds are on the chain~$P_A$.
\end{proof}

We now show that the output of Algorithm~\ref{alg:coding} is indeed correct.
Assume towards contradiction that the longest chain~$P_A$ does not imply the correct answer.
Then, Lemma~\ref{lem:NCA2WB} suggests that the number of corruptions in Bob's transmissions in the first $2n/5$ rounds is $w_B \ge |NC_A'|-\eps n$. Claim~\ref{clm:skip-bound} then implies that $\Skip_B \ge n/5+w_B \ge n/5 +|NC_A'|-\eps n$. Thus,
\begin{equation}
\Skip_B \ge |NC_A'| + (1/5 -\eps)n.	\label{eqn:skipB-large} 
\end{equation}

Further, 
\begin{equation}
\Skip_A \le 2n/5 -|NC_A'| +|J''|,		\label{eqn:skipA-small}
\end{equation}
since $\Skip_A$ increases by $n/5$ during the first $2n/5$~rounds, 
and at most by 1 in every round in~$J''$.
Rounds in $NC_A'' \cup D''$ can increase the counter only until the length of the chain reaches~$n/5$, that is, at most $n/5-|NC_A'|$ times. Putting these all together gives Eq.~\eqref{eqn:skipA-small}.

We show that in this case we have $|NC_A| + (1/5-\eps)n -|J| < \RC_A -(1/5-\eps)n$,
which contradicts Eq.~\eqref{eqn:noiseCor}.
Note that via Lemma~\ref{lem:Skip2RC}, the above can be written as
\begin{align}
|NC_A| - (1/10+2\eps)n  +\Skip_A/2 -\Skip_B/2 < |J|.
\label{eqn:contradiction}
\end{align}
The above bounds on the skip-counters, Eq.~\eqref{eqn:skipB-large} and Eq.~\eqref{eqn:skipA-small}, allow us to bound
the left-hand side of Eq.~\eqref{eqn:contradiction}  by
\begin{align*}
 &|NC_A| - (1/10+2\eps)n  +\Skip_A/2 -\Skip_B/2 \\
 &\le  |NC_A| - (1/10+2\eps)n + n/5 - |NC_A'|/2 +|J''|/2 - |NC_A'|/2 - (1/5 -\eps)n/2 \\
 &\le  |NC_A''| +|J''|/2 -3\eps n/2.  \numberthis\label{eqn:ref1}
\end{align*}

Now, if $|NC_A''| \le 3\eps n/2$, then Eq.~\eqref{eqn:contradiction} holds and we reached a contradiction. 
Otherwise, $|NC_A''| >0$ which means that $D\cap [2n/5]=\emptyset$, hence,
$|NC_A'|+|J'|=n/5$. %
Assume that $|NC_A''| \le n/5 - |NC_A'|$ (we will prove this shortly), then
  Eq.~\eqref{eqn:ref1} is upper bounded by 
\begin{align*}
&\le (n/5-|NC_A'|) + |J''|/2 - 3\eps n /2 \\
&\le |J'| + |J''|/2 - 3\eps n/2 \\
&< |J|.
\end{align*}
We obtained a contradiction for the second case as well. 
We are left to show that the assumption we took earlier holds, i.e., that
\[
|NC_A'| + |NC_A''| \le n/5.
\]
If the above equation does not hold, then there are $n/5$ Alice-skipped Alice-uncorrupted transmissions \emph{on the chain that becomes the output}. 
Since the number of Bob's corrupted transmissions is limited to $n/5-\eps n$, 
Lemma~\ref{lem:stronger} immediately gives that %
the length of the correct simulation of~$\pi_0$ is at least $\eps n = |\pi_0|$.
This transcript is contained in the output chain and contradicts the assumption that the longest chain implies an incorrect output.
\end{proof}

Finally, we argue that  Algorithm~\ref{alg:codingLarge} is computationally efficient, as long as $\pi_0$ itself is efficient.
\begin{proposition}\label{prop:efficiencyLarge}
For any constant $\eps>0$ and any $\pi_0$ given as a black-box, Algorithm~\ref{alg:codingLarge} is computationally efficient in~$|\pi_0|$.
\end{proposition}
\begin{proof}
The algorithm performs $n=|\pi_0|/\eps$ iterations, in each of which it needs to determine the next speaker, determine the partial transcript so far and determine the next message to send.
The former two activities require performing $\Parse$ on all the symbols received by both parties; this takes~$O(n)$ time. Setting the next message requires a single activation of~$\pi_0$.
\end{proof}

\section{A Coding Scheme with a Constant-Size Alphabet}
\label{sec:codingSmall}

\subsection{From large to constant alphabet:  Overview}
The coding scheme of Section~\ref{sec:largeAB}
uses an alphabet whose size is polynomial in~$n$, which is large enough to describe links to each of the $n$ rounds of the protocol.
We now show how to decrease the size of the alphabet to a constant.
The main, and quite natural, idea is to encode each link using several symbols. 
We will use a constant-size alphabet~$\Sigma$ of size $|\Sigma| \approx C^2$, where $C$ is some constant we set later as a function of~$\eps$, i.e., $C=O_\eps(1)$. 
We interpret each symbol $m\in\Sigma$ as the triplet $(link,type,msg)$, where
$link\in \{0,\dotsc,C\}$, $type \in \{ std, start, stop, cont \} $ and $msg\in [C]\cup \{0,1,\emptyset\}$. 

In order to link to a transmission which is at most $C$ transmissions back, 
the $link$ field can be used directly to contain a relative pointer.
That is, $link=1$ means the previous transmission, $link=2$ means the second previous transmissions, etc.
In this case, $type=std$ and the $msg$ field contains the payload---the bit $b\in \{0,1\}$ sent by the party according to $\pi_0$ (or $msg=\emptyset$ if the other party is to speak in~$\pi_0$).

When the protocol needs to link to a transmission which is $x>C$ transmissions back, we use a variable-length encoding of the relative pointer. 
Specifically, the coding begins with a message with 
$type=start$.
Next, 
the value of $x$ is encoded in the $msg$ fields of the next $\log_C x$ transmissions. 
In each such segment (except for the first one), the $link$ field still points to the last uncorrupted transmission. The $type$ field equals $cont$ to denote this transmission is a (middle) fragment of the encoding. 
On the last fragment, $type=stop$ denotes the end of the encoding.

A possible problem occurs when a party wishes to send an encoding of some (large) value $x$, but during the transmission 
of this encoding many corruptions occur. Due to the noise, the $link$ field of some specific segment of the encoding of~$x$ is too small to point to the previous segment. 
For example, say the two segments are $y>C$ transmissions apart. 
In this case, the above encoding acts recursively. 
That is, we initiate a new encoding (for $y$) by sending a message with $type=start$, whilst the encoding of~$x$ is still in progress. 
In the following transmissions, the $msg$ fields contain the value~$y$. 
After all the bits of $y$ have been transmitted, a message with $type=stop$ indicates the end of $y$'s encoding. Then, the encoding of $x$ resumes from the point it stopped. Once all the fragments of $x$ have been communicated, a message with $type=stop$ indicates the end of $x$'s~encoding, and the protocol continues as before.

This encoding does not harm the rate of the coding: 
most of the time the pointer is small enough and fits in a single $link$ field with no further encoding ($type=std$).
A burst of $t>C$ consecutive corruptions causes the addition of $\lceil \log_C t\rceil$ transmissions that describe a pointer to $t$ transmissions beforehand. 
It is not too difficult to verify that $n/C$ is a bound on the total added communication due to these encodings. We can set $C=1/\eps$ so that the added communication is bounded by $\eps n$~transmissions. 

These transmissions do not take part in the simulation of~$\pi_0$ and can be considered as a ``corruption'' towards that goal (although they serve a critical role in generating the uncorrupted chain). We argue that the effect of these transmissions on the simulation of~$\pi_0$ is at most as harmful as $\eps n$ corrupted transmissions. It then follows that if the noise corrupts at most $1/5-2\eps$ transmissions in each direction, the ``effective'' noise level (including transmissions used for encoding links) is bounded by $1/5-\eps$, which is low enough to allow the correct simulation of~$\pi_0$.

\subsection{A coding scheme with a constant-size alphabet}

Towards a scheme with constant-size alphabet let us (re)define some of the basic elements we use. 
Let $C=1/\eps$ be constant (without loss of generality, we assume $C$ is an integer). 
We define our alphabet to be
\[
\Sigma=\{0,\dotsc,C\} \times \{std,start,stop,cont) \times ([C] \cup \{0,1,\emptyset\}).
\]
Every $m\in \Sigma$ is interpreted as  
$m=(link,type,msg)$, where $link$ points to a previous symbol~$m'$ 
unless $type = start$, which indicates that the link to $m'$ is encoded in the $msg$ field of the next symbols. The $type$ field indicates whether the encoding has been completed ($type=stop$) or it is still going on ($type=cont$).
We emphasize that whenever $type \ne start$ the $link$ field indeed points to the previous uncorrupted transmission. We let $link=0$ indicate the first message in the chain (no previous message).

For any $m_1,\dotsc, m_t \in \Sigma$, the ``chain'' of messages,
$\Parse(m_1,\dotsc,  m_t)$, is determined by going over the chain link-by-link, until 
we hit the head of the chain ($link=0$) or an encoded link ($type\ne std$). In this case we collect the fragments of the link (recursively, in case we hit another instance of encoding before we are done collecting all the fragments of the  current encoding), decode them and continue parsing from the transmission pointed by the encoded value. The fragments that contain the encoding are omitted from the parsed output (so that the chain contains only the ``real'' messages of $\pi_0$). 
The $\Parse$ procedure is formally described in Algorithm~\ref{alg:Parese:smallAB}.

\begin{algorithm}[htp]
\caption{The $\Parse$ procedure for constant-size alphabet coding schemes}
\label{alg:Parese:smallAB}
\begin{algorithmic}[1]

\Procedure{Parse}{$m_1,\dotsc, m_t$}
\State $j \gets t$
\State $\Chain\gets \emptyset$

\While {$j$ monotonically decreases and $j > 0$}
\If {$m_j.type=std$} 
	\State $\Chain \gets \Chain \cup \{j\}$
	\State $j \gets j -m_j.link$
\Else
 	\State $j \gets j-{}$\Call{EffectiveAddress}{$m_1, \dotsm, m_j$}
\EndIf
\State {\textbf{end if}}
\EndWhile
\State {\textbf{end while}}
\State \Return $\Chain$
\EndProcedure
\State{\textbf{end procedure}}

\Statex
\Procedure{EffectiveAddress}{$m_1, \ldots, m_t$}
\If {$m_t.type \ne stop$}  \Return 0 \Comment{Error} \EndIf
\State $Temp \gets m_t$
\State  $j\gets t-m_t.link$
\While {$j$ monotonically decreases and $j > 0$}
	\If {$m_j.type = cont$} \Comment{Continue collecting fragments}
	\State	$Temp \gets Temp \cup \{m_j\}$
	\State	$j \gets j- m_j.link$
	\ElsIf {$m_j.type = start$} \Comment{All fragments are collected; decode msg fields}
			\State $Temp \gets Temp \cup \{m_j\}$
			\State  \Return \parbox[t]{0.6\columnwidth}{ the value obtained by concatenating the $msg$ fields in all the messages in $Temp$ in the natural order.\phantom{${}_a$}}
	\ElsIf {$m_j.type=stop$} \Comment{Recurse on inner encoding}
		\State $j\gets j-{}$\Call{EffectiveAddress}{$m_1,\ldots, m_j$}
	\ElsIf {$m_j.type =std$} \Comment{Should not happen whilst in encoding}
		 \State \Return 0
	\EndIf
	\State \textbf{end if}
\EndWhile
\State{\textbf{end while}}
\EndProcedure
\State{\textbf{end procedure}}
\end{algorithmic}
\end{algorithm}

The coding scheme with a constant-size alphabet is given in Algorithm~\ref{alg:codingSmall}.
It is very similar to the coding scheme of Algorithm~\ref{alg:codingLarge} except for the handling of 
encoded links, i.e., the encoding of a far link and parsing of a chain that contains encoded links.

\begin{algorithm}[htp]
\caption{A coding scheme with a constant-size alphabet  (Alice's side)}
\label{alg:codingSmall}
\begin{algorithmic}[1]
\small
\Statex Input: A binary alternating protocol $\pi_0$ defined over noiseless channels with feedback; a noise parameter $1/5-2\epsilon$. Alice's input for~$\pi_0$ is~$x$.
\Statex
\Statex Let $C=1/\eps$ and $\Sigma=\{0,\dotsc,C\} \times \{std,start,stop,cont) \times ([C] \cup \{0,1,\emptyset\})$. 
\Statex Without loss of generality, we assume $\log_2 C$ is an integer. 
\Statex The procedures \Call{Next}{}  and \Call{TempTranscript}{} are as described in Alrogithm~\ref{alg:coding}.

\Statex
\State Throughout the protocol, maintain $S_A,R_A,R_B$, the sent, received by Alice and received by Bob (as indicated by the feedback) symbols communicated up to the current round, respectively.
\State  $msgStack \gets \emptyset$
\For {$i=1$ to $n=|\pi_0|/\eps$} 
	\State $p_{\text{next}}= \Call{Next}{R_A,R_B}$ \Comment{Determine the next party to speak}
	
	\If {$p_{\text{next}}=$Alice}
		\State $T\gets$\Call{TempTranscript}{$S_A,R_A,R_B$}
		\State Let $\textit{lastMsg}$ be the offset to the latest \emph{uncorrupted} round where Alice is the speaker.   
		\State Let $b = \pi_0(x\mid T)$ if Alice is the sender in $\pi_0$, otherwise (or if $\pi_0$ has terminated) $b=\emptyset$.
		 \If {$lastMsg > C$} \Comment{Encode $link$ using multiple segments}
		 	\State Write $\textit{lastMsg}$ as a binary string $s=s_1 s_2\cdots s_t$ where $\forall i, |s_i|=\log C$
			\State $msgStack\gets push( (stop , s_t), (cont,s_{t-1}), \ldots, (cont, s_2), (start, s_1))$
		\EndIf
		\State {\textbf{end if}}
		\Statex 
		\If {$msgStack=\textsf{empty}$}   \Comment{Complete sending links before sending new messages}
			\State $link \gets \textit{lastMsg}$
			\State $type = std$
			\State $msg \gets  b$
		\Else
			\State $link \gets \textit{lastMsg}$ \Comment{Irrelevant if $type=start$, otherwise $\textit{lastMsg}\le C$}
			\State $(type,msg) \gets msgStack.pop()$
		\EndIf
		\State {\textbf{end if}}
		\State send the symbol $\sigma = (link,type,msg)$
	\Else	\Comment{Bob is the speaker}
		\State (receive a symbol from Bob) 
	\EndIf
	\State {\textbf{end if}}
\EndFor
\State \textbf{end for}

\Statex

\State $j \gets \argmax \mathsf{Parse}({R_B^{\le j}})$
\State $j' \gets \argmax \mathsf{Parse}({R_A^{\le j'}})$
\State \textbf{Output} \Call{TempTranscript}{$S_A,R_A^{\le j'},R_B^{\le j}$}
\end{algorithmic}
\end{algorithm}

Similar to Algorithm~\ref{alg:codingLarge}, the coding scheme of Algorithm~\ref{alg:codingSmall} is clearly computationally-efficient.
\begin{lemma}\label{lem:efficiencysmall}
For any constant $\eps>0$ and any $\pi_0$ given as a black-box, Algorithm~\ref{alg:codingSmall} is (computationally) efficient in~$|\pi_0|$.
\end{lemma}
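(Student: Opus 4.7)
The plan is to mirror the proof of Lemma~\ref{lem:efficiencyLarge}, with the only new ingredient being the analysis of the modified \Parse procedure and the handling of the encoded links via \textit{msgStack}. I would first observe that, since $\eps$ is a constant, the alphabet $\Sigma$ has constant size $|\Sigma|=O(C^3)=O_\eps(1)$, so every single symbol operation (read, compare, push/pop on the stack) takes constant time.

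Next I would bound the per-round cost. As in Lemma~\ref{lem:efficiencyLarge}, each of the $n=O(|\pi_0|)$ iterations invokes \textsc{Next}, \Trans, and (when Alice is the speaker) one black-box call to $\pi_0$. The black-box call contributes at most the runtime of one step of $\pi_0$. Both \textsc{Next} and \Trans reduce to running \Parse on the currently-received strings $R_A,R_B$, so the whole argument reduces to showing that \Parse (Algorithm~\ref{alg:Parese:smallAB}) on input of length $t\le n$ runs in time $\poly(t)$.

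The key observation for \Parse is that the outer \textbf{while} loop and every recursive call of \textsc{EffectiveAddress} walk strictly backwards through the message list: each step decreases the pointer $j$ by at least one (either by subtracting a positive $link$ field, or by subtracting the output of a nested \textsc{EffectiveAddress} call, which itself decreased $j$ by at least one). Consequently the total number of index positions ever visited across the (possibly nested) calls is at most $t$, and the work done at each visited position is $O(\log C)=O_\eps(1)$ (read a symbol, push it on a temporary list, and eventually concatenate at most $O(\log_C n)$ $msg$-fields to recover a pointer, which costs $O(\log n)$ arithmetic). Hence one call to \Parse runs in time $O(t\log n)=\poly(|\pi_0|)$.

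Finally, maintaining \textit{msgStack} is trivially efficient: whenever $\textit{lastMsg}>C$, we write $\textit{lastMsg}\le n$ as an $O(\log_C n)$-symbol string and push those $O(\log n)$ entries onto the stack, which is again $\poly(|\pi_0|)$ work, and each subsequent round pops a single entry in constant time. Summing over the $n$ iterations yields a total runtime of $\poly(|\pi_0|)$, establishing the lemma. I do not expect any real obstacle here; the only mildly delicate point is arguing that the recursion in \Parse terminates and its total work is linear in $t$, which follows immediately from the monotone-decreasing pointer invariant already built into Algorithm~\ref{alg:Parese:smallAB}.
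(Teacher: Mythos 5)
Your proposal is correct and follows essentially the same route as the paper, whose proof is a one-liner deferring to Lemma~\ref{lem:efficiencyLarge} ``once verifying that the new $\Parse$ procedure still takes linear time in~$n$''; you simply supply that verification via the monotone-decreasing pointer guard built into Algorithm~\ref{alg:Parese:smallAB}, plus the trivial cost of maintaining $msgStack$. (Under adversarially corrupted symbols, distinct \textsc{EffectiveAddress} invocations could in principle revisit overlapping positions, so the sharp linear bound needs a touch more care, but the loop guards still give a polynomial bound, which is all the lemma asserts.)
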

The proof is similar to the proof of Proposition~\ref{prop:efficiencyLarge}, once it has been verified that the new $\Parse$ procedure still takes linear time in~$n$.

\subsection{Analysis}
\begin{lemma}\label{lem:EncBounded}
Let $E$ denote the set of all the rounds where the transmission is uncorrupted and has $type\ne std$ (i.e., is a part of an encoding).
Then
\[
|E| \le  \eps n\text{.}
\]
\end{lemma}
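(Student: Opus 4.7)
The plan is to charge each uncorrupted encoding round to a ``gap'' in the sequence of uncorrupted standard Alice transmissions, and then bound the total charge via a concavity argument. I will bound Alice's contribution to $E$ only; Bob's side is entirely symmetric.

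First I would list Alice's uncorrupted $\mathit{std}$-type rounds in time order as $s_1 < s_2 < \cdots$, and identify, for each gap $(s_j, s_{j+1})$, the encoding episode (if any) that occupies it. Such an episode is initiated at the first uncorrupted Alice turn $r_j$ after $s_j$ at which the computed offset $\mathit{lastMsg} = r_j - s_j$ exceeds $C$, triggering the algorithm to push $t_j = \lceil \log_C L_j \rceil$ segments onto $msgStack$ (where $L_j = r_j - s_j$): one $start$, at most $t_j - 2$ copies of $cont$, and one $stop$. These $t_j$ segments are delivered on the next $t_j$ uncorrupted Alice turns, each of which has $\mathit{type} \neq \mathit{std}$ and so contributes $1$ to $|E_A|$.

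Next I would use the observation that the intervals $(s_j, r_j]$ lie strictly before $s_{j+1}$ and are therefore pairwise disjoint subsets of $[1, n]$, which gives $\sum_j L_j \le n$ together with the constraint $L_j > C$ for every episode. Writing
\[
|E_A| \;=\; \sum_j t_j \;\le\; \sum_j \bigl(1 + \log_C L_j\bigr),
\]
the number of episodes is at most $n/C$, and by concavity of $\log_C(\cdot)$ the sum $\sum_j \log_C L_j$ is maximized when every $L_j$ is as small as possible (just above $C$), which gives $O(n/C)$. With $C = 1/\eps$ this yields $|E_A| = O(\eps n)$, and the analogous bound for Bob; a constant-factor adjustment of $C$ then produces the stated $|E| \le \eps n$.

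The main obstacle I anticipate is handling corruptions that occur \emph{during} an ongoing encoding episode, since a subsequent uncorrupted Alice turn inside that episode can observe $\mathit{lastMsg} > C$ and trigger a nested encoding. I would argue that nested episodes still carve the round axis into pairwise disjoint sub-intervals, so the inequality $\sum_j L_j \le n$ (now summed over all episodes at every nesting level) survives unchanged, and the concavity calculation above is therefore unaffected.
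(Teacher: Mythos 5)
Your overall strategy is the same as the paper's (each encoding episode costs roughly $\lceil\log_C(\cdot)\rceil$ off-chain segments, and the worst case is many short episodes, handled by a Jensen/concavity step), but your accounting is too lossy to give the stated constant, and here the constant is not cosmetic. You bound the total encoded lengths by charging the intervals $(s_j,r_j]$ against the whole round range, i.e.\ $\sum_j L_j \le n$, which yields $|E| \le |E_A|+|E_B| \approx 4n/C = 4\eps n$. The lemma claims $|E|\le \eps n$ for the algorithm as defined, where $C=1/\eps$ is fixed; ``adjusting $C$ by a constant factor'' is a change to the coding scheme, not a proof of the statement, and the specific constant is actually consumed downstream: Theorem~\ref{thm:resilience-Small} reduces a $(1/5-2\eps,1/5-2\eps)$-corruption of Algorithm~\ref{alg:codingSmall} to a $(1/5-\eps,1/5-\eps)$-corruption of Algorithm~\ref{alg:codingLarge} precisely because the at most $\eps n$ encoding rounds add only $\eps n$ to the effective noise.

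The missing idea is to charge episodes against the \emph{per-party corruption budget} rather than against $[1,n]$. An encoding episode is triggered only when $\mathit{lastMsg}>C$, i.e.\ only after a run of more than $C$ consecutive corrupted transmissions of that party (note $\mathit{lastMsg}$ points to the last uncorrupted transmission of \emph{any} type, so your $L_j=r_j-s_j$, measured back to the last uncorrupted $std$ round, overestimates the encoded value --- sound as an upper bound, but it is exactly this slack plus the $\sum L_j\le n$ budget that costs you the factor). Since each party suffers at most $(1/5-\eps)n$ corruptions, the burst lengths $t_1,\dotsc,t_k$ being encoded satisfy $\sum_i t_i \le n/5$ and $k\le n/(5C)$ (each $t_i>C$), and the recursive/nested encodings do not double-count earlier bursts. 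Your same concavity computation, applied with this budget, gives $|E_A|\le k + k\log_C\!\left(\frac{n}{5k}\right)$, maximized at $k=n/(5C)$, hence $|E|\le 4n/(5C)<\eps n$ for $C=1/\eps$, which is the paper's argument. With that substitution your proof goes through; as written, it proves only $|E|=O(\eps n)$ with a constant exceeding $1$.
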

\begin{proof}
Any burst of $t>C$ corruptions causes at most $\lceil \log_C t\rceil$ uncorrupted transmissions with $type\ne std$, that encode a link to $t$ transmissions back. Due to the recursive manner of the encoding, a later burst of corruptions has no effect on the encoding of previous links, it only delays the rounds in which the first encoding is transmitted by the number of rounds needed to encode the link that comes after the later burst. In other words,  a burst of $t$ corruptions followed by a burst of $t'$ corruptions cause at most $\lceil \log_C t\rceil + \lceil\log_C t'\rceil$ rounds with $type\ne std$.
Since the total number of corrupted rounds (per party) is bounded by $(1/5-\eps)n$,  the total encodings length (for that party) is bounded by $n/C$. 

Partition $E$ into $E_A,E_B$,
the encoding rounds on Alice's and Bob's sides, respectively.
Assume that the noise pattern on Alice's transmission is composed of bursts of lengths $t_1,t_2,\ldots, t_k$ where for every $i$ we have $|t_i|>C$ 
(otherwise $t_i$ does not add any transmissions with $type\ne std$).
Note that the above requirement implies that $k < n/5C$.
\begin{align*}
|E_A| = \sum_{i=1}^k \lceil \log_C t_i\rceil  
& \le k+  k \log_C \left(\sum_{i=1}^k \frac{t_i}{k}\right) \le k+ k \log_C \left (\frac{n}{5k}\right),
\end{align*}
where the first inequality follows from Jensen's inequality. $E_B$ is bounded by the same value.
The above function monotonically increases in~$[0,n/5]$.
The number of messages with $type\ne std$ is then upper bounded by the value of the function at $k=n/5C$,
\begin{align*}
|E| &\le |E_A| + |E_B| \\
&\le 2\left(\frac{n}{5C} +  \frac{n}{5C}\log_C\left(\frac{n}{5 \frac{n}{5C}}\right ) \right) = \frac{4n}{5C}  \\
&< \eps n.
\end{align*}
\end{proof}

We now prove that Algorithm~\ref{alg:codingSmall} simulates $\pi_0$ correctly as long as the corruption level is below~$1/5$. The idea is to reduce Algorithm~\ref{alg:codingSmall} to Algorithm~\ref{alg:codingLarge}. This is done by considering fragments of encoding as ``corrupted'' transmissions of Algorithm~\ref{alg:codingLarge}, while still obtaining the correct link from these encoded transmissions.
Since the number of transmissions used for encodings is at most $\eps n$, they ``increase'' the effective noise level by this small amount, which is still tolerable for Algorithm~\ref{alg:codingLarge}.

\begin{theorem}\label{thm:resilience-Small}
Algorithm~\ref{alg:codingSmall} is resilient to any $(1/5-2\eps,1/5-2\eps)$-corruption.
\end{theorem}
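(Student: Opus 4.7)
The plan is to reduce Theorem~\ref{thm:resilience-Small} to Theorem~\ref{thm:resilience-Large} by viewing the execution of Algorithm~\ref{alg:codingSmall} as a ``virtual'' execution of Algorithm~\ref{alg:codingLarge} in which the uncorrupted link-encoding fragments are treated as additional corruptions. Concretely, I would set up a correspondence that maps each $std$-typed transmission directly to a single large-alphabet transmission carrying the same link and payload bit, and maps each complete encoding sequence (the messages with $type \in \{start, cont, stop\}$ together with the subsequent $std$ message that completes the pointer) to a single large-alphabet transmission whose link field stores the resolved effective address and whose message field stores the payload bit. Under this mapping, the $\Parse$ procedure of Algorithm~\ref{alg:Parese:smallAB} produces exactly the same chain positions as the $\Parse$ procedure of Algorithm~\ref{alg:codingLarge} on the virtual execution, since both resolve links the same way once the encoding fragments are consumed.

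Having set up the correspondence, the next step is to charge the encoding overhead against the noise budget. An uncorrupted transmission with $type \ne std$ does not contribute a payload bit to the simulated transcript of~$\pi_0$; from the standpoint of the large-alphabet analysis it looks identical to a corrupted transmission in which no progress is made. I would therefore define a virtual noise pattern that corrupts every uncorrupted encoding fragment in addition to the actual adversarial corruptions. Lemma~\ref{lem:EncBounded} bounds the number of uncorrupted encoding fragments by $\eps n$ in total, with at most $2n/(5C) \le \eps n/2$ per party. Since the actual noise is assumed to be a $(1/5-2\eps,1/5-2\eps)$-corruption, the virtual noise is bounded by $(1/5-\eps,1/5-\eps)$, which lies in the resilience regime of Algorithm~\ref{alg:codingLarge} by Theorem~\ref{thm:resilience-Large}. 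Applying that theorem to the virtual execution yields the correct simulation of~$\pi_0$, and by the correspondence this is also the output of Algorithm~\ref{alg:codingSmall}.

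The main obstacle is verifying that the $\Next$ procedure (i.e., the order of speaking) in the two executions agrees, and that chain parsing respects the correspondence even when an encoding is disrupted mid-sequence by a burst of corruptions. The recursive \textsc{EffectiveAddress} subroutine of Algorithm~\ref{alg:Parese:smallAB} handles encodings that are themselves interleaved with further encodings---triggered when a burst of corruptions during an ongoing encoding pushes the next intra-encoding link beyond the $C$ threshold---and one must show the virtual execution handles this consistently. Since $\Next$ depends on the lengths of parsed chains (through $\Skip_A, \Skip_B$), I would need to check that, once encoding fragments are counted as virtual corruptions, the parsed chain lengths used by $\Next$ in the actual execution coincide with those that $\Next$ in the large-alphabet scheme would compute on the virtual execution at every decision point. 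This is essentially careful bookkeeping about how recursive encodings affect the $\Parse$ output, but it is the delicate part of the argument; once established, the reduction closes and Theorem~\ref{thm:resilience-Small} follows directly from Theorem~\ref{thm:resilience-Large}.
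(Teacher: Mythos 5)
Your proposal is correct and takes essentially the same route as the paper: both reduce to Theorem~\ref{thm:resilience-Large} by mapping the small-alphabet execution round-by-round to a virtual large-alphabet execution in which uncorrupted encoding fragments are charged as extra corruptions (at most $\eps n$ by Lemma~\ref{lem:EncBounded}) and links through a $stop$ fragment are replaced by the resolved \textsc{EffectiveAddress}, yielding a $(1/5-\eps,1/5-\eps)$-corruption. The ``careful bookkeeping'' you defer is exactly the paper's short inductive argument that the parsed chains of the two executions coincide at every round, which forces identical \Next decisions and identical next messages.
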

\begin{proof}

Algorithm~\ref{alg:codingSmall} differs from Algorithm~\ref{alg:coding} in one main aspect---rounds in which $type\in\{start,cont,stop\}$. Other than those rounds, the two algorithms behave exactly the same: 
given a similar transcript $m_1,\ldots, m_t$ for which $type=std$, 
they both generate exactly the same partial transcript, the same next message, and the same next speaker. 

We can interpret any instance of Algorithm~\ref{alg:codingSmall} as an instance of Algorithm~\ref{alg:coding} in which transmissions with $type\ne std$ correspond to ``erased'' transmissions in Algorithm~\ref{alg:coding}:  transmissions whose ``link'' part is invalid (hence, the parsed chain is empty).
Formally, there exists a transformation that takes any transcript $ m= m_1,\ldots, m_n$ generated by  Algorithm~\ref{alg:codingSmall} on the input~$x,y$ assuming a $(1/5-2\eps, 1/5-2\eps)$-corruption,
and generates a transcript ${m'} = m'_1,\dotsc,m'_n$ such that 
\begin{enumerate}
\item $ {m'}$ is an instance of Algorithm~\ref{alg:codingLarge} on the input $x,y$ that suffers from a $(1/5-\eps, 1/5-\eps)$-corruption.
\item For any $i\in [n]$, the parsed chain in both algorithms is the same,
$\Parse^{\text{Alg.~\ref{alg:codingSmall}}}(m_1,\ldots, m_i) = \Parse^{\text{Alg.~\ref{alg:codingLarge}}}(m'_1,\ldots,m'_i)$.
\end{enumerate}
The transformation is as follows:  
if $m_i.type =std$ and $m_i.link$ points to a message $m_j$ with $m_j.type=std$, then
$m'_i.msg = m_i.msg$ and $m'_i.link = j$. 
If $m_j.type=stop$ then $m'_i.link = \textsc{EffectiveAddress}(m_1,\ldots, m_j)$.
Other cases are irrelevant ($m'_i$ will be attributed to a corruption).

That is, the transmissions contain (logically) the same messages and links except for transmissions that contain encoded-links in~$m$. These  correspond to corrupted transmissions in~${m'}$. However, in every round~$i$ 
where $m_i$ links to the end of an encoded link ($m_j$), we set the link in $m'_i$ to
 $\textsc{EffectiveAddress}(m_1,\ldots, m_j)$,  
i.e., to the last non-encoding uncorrupted transmission prior to~$m_i$.

Item~2 holds by induction. Assume that the claim holds for all rounds up to~$i$. Since both algorithms generate the same parsed chain, they make identical decisions regarding the order of speaking and the identity of the next speaker. 
If the $(i+1)$-th transmission in~${m}$ 
links to a transmission more than $C$ steps back,  or if $m_{i+1}\ne std$, then
$m'_{i+1}$ is assumed to be corrupted. In this case it holds that 
$\Parse(m_1,\dotsc,m_{i+1}) = \Parse(m'_1,\dotsc,m'_{i+1}) = \emptyset$.

Otherwise, the $(i+1)$-th transmission links to a transmission $m_j$ at most $C$ steps back, and $m_{i+1}.type=std$.  If $m_j.type = std$, then $m'_{i+1}.link=j$ and the claim holds.
If $m_j.type=stop$, then $m'_{i+1}.link$ points to the link encoded by 
$\textsc{EffectiveAddress}(m_1,\ldots, m_i)$. 
Since  $\Parse$ in Algorithm~\ref{alg:codingSmall} resolves the identity of the message prior to $m_{i+1}$ as the one pointed by $\textsc{EffectiveAddress}(m_1,\ldots, m_i)$, it outputs the same sequence as $\Parse(m'_1,\dotsc,m'_{i+1})$ does in Algorithm~\ref{alg:codingLarge}.

As a consequence of Item~2, the parsed chains, and hence the implied transcripts, are identical between the two instances for any~$i\in[n]$. 
Therefore,  for any round~$i$, the transmission generated by 
Algorithm~\ref{alg:codingLarge} given $m'_1,\dotsc,m'_{i-1}$ equals $m'_i$ defined by the above transformation,  except for two cases: when $m_i$ is corrupted and when $m_i$ is an encoding ($m_i.type\ne std$).
Lemma~\ref{lem:EncBounded} bounds the number of encoded transmissions by~$\eps n$. Hence, any instance with a $(1/5-2\eps,1/5-2\eps)$-corruption in Algorithm~\ref{alg:codingSmall} translates to an instance of Algorithm~\ref{alg:coding} with a $(1/5-\eps,1/5-\eps)$-corruption.

The correctness of the  Algorithm~\ref{alg:codingSmall} follows from the correctness of the  Algorithm~\ref{alg:codingLarge}. 
\end{proof}

\section{Applications for Circuits with Short-Circuit Noise}\label{sec:circuits}

In this section, we prove our main theorems (Theorems~\ref{thm:main} and~\ref{thm:converse-inf}).
We show that the KW-transformation between formulas and protocols (and vice versa) 
extends to the noisy setting in a manner that preserves noise-resilience. 
Applying the results from Sections \ref{sec:codingLarge}--\ref{sec:codingSmall} onto the realm of boolean formulas gives a construction that is resilient to an optimal level of noise, namely, a fraction of $(1/5-\eps)$ of short-circuit gates in any input-to-output path.
Additionally, the results of Section~\ref{sec:imp} imply that noise-resilience of $1/5$ is maximal for formulas (assuming a polynomial overhead).

In the following subsections, we show how to convert between formulas and protocols while preserving their noise-resilience.
If we start with a formula that is resilient to $(\alpha,\beta)$-corruptions, our transformation yields a protocol that is resilient to $(\alpha,\beta)$-corruptions (Proposition~\ref{prop:FtoP}). 
Moreover, given a protocol that is resilient to $(\alpha,\beta)$-corruptions, the transformation yields a formula that is resilient to a similar level of noise (Proposition~\ref{prop:PtoF}).

\subsection{Preliminaries}\label{sec:circuits:prelim}

\paragraph{Formulas}
A formula $F(z)$ over $n$-bit inputs $z\in \{0,1\}^n$ 
is a $k$-ary tree where each node is a $\{\AND, \OR\}$ gate with fan-in~$k$ and fan-out 1. (While our results apply to any~$k$, in this section we will usually assume $k=2$ for simplicity.)
Each leaf is a literal (either $z_i$ or $\neg z_i$). The value of a node $v$ given the input $z\in\{0,1\}$, denoted $v(z)\in\{0,1\}$, is computed in a recursive manner: the value of a leaf is the value of the literal (given the specific input~$z$); the value of an $\AND$ gate is the boolean AND of the values of its $k$ descendants, $v_0,\cdots,v_{k-1}$, that is $v(z) = v_0(z) \AND \dotsb \AND  v_{k-1}(z)$. The value of an OR gate is $v(z) = v_0(z) \OR \dotsb \OR v_{k-1}(z)$. The output of the formula on~$z$, $F(z)$, is the value of the root node. We say that $F$ computes the function $f:\{0,1\}^n\to\{0,1\}$ if for any $z\in\{0,1\}^n$ it holds that $F(z)=f(z)$.

The depth of a formula, denoted $\text{depth}(F)$, is the longest root-to-leaf path in it. The size of a formula, denoted~$|F|$, is the number of nodes it contains. We denote by $V_\AND$ the set of all the $\AND$ nodes, and by $V_\OR$ the set of all the $\OR$ nodes.

\paragraph{Karchmer-Wigderson Games}\label{sec:KW}
For any boolean function $f:\{0,1\}^n \to \{0,1\}$, the \emph{Karchmer-Wigderson game}
is the following interactive task. Alice is given an input $x \in f^{-1}(0)$ and Bob gets $y\in f^{-1}(1)$.
Their task is to find an index $i\in [n]$ such that $x_i\ne y_i$. We are guaranteed that such an index exists since $f(x)=0$ while $f(y)=1$. We denote the above task by~$\KW_f$.

Karchmer and Wigderson~\cite{KW90} proved the following relation between formulas and protocols.
\begin{theorem}[\cite{KW90}]
For any function $f:\{0,1\}^n \to \{0,1\}$, the depth of the optimal formula for $f$ equals the length of the optimal interactive protocol for $\KW_f$.
\end{theorem}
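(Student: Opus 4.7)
The plan is to establish the two inequalities separately: any depth-$d$ formula for $f$ yields a length-$d$ protocol for $\KW_f$, and conversely. Both directions use the formula tree as the protocol tree (and vice versa), exploiting the correspondence $\AND\leftrightarrow{}$Alice, $\OR\leftrightarrow{}$Bob, which visibly preserves depth and length one-to-one.

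\emph{Formula to protocol.} Given a formula $F$ of depth $d$ computing $f$, I would first push negations to the leaves via De Morgan's laws, preserving depth and producing a monotone formula over the literals $\{z_i,\neg z_i\}$. Use this tree as the protocol tree, with every $\AND$ node owned by Alice and every $\OR$ node owned by Bob. The invariant at each visited node $v$ is $v(x)=0$ and $v(y)=1$. At an $\AND$ node the invariant forces at least one child to evaluate to $0$ on $x$; Alice sends the index of such a child, while both children evaluate to $1$ on $y$ because their $\AND$ does, so the invariant is maintained. The $\OR$ case for Bob is symmetric. At a leaf labeled by a literal $\ell$, the invariant gives $\ell(x)=0,\ell(y)=1$, which forces the underlying variable $z_i$ to satisfy $x_i\ne y_i$, so outputting $i$ solves $\KW_f$. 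The resulting protocol has length $\text{depth}(F)=d$.

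\emph{Protocol to formula.} Given a protocol $\pi$ of length $d$ for $\KW_f$, use its tree as a formula tree, turning every Alice node into an $\AND$ and every Bob node into an $\OR$. The subtle step is labeling the leaves. I would appeal to the rectangle property: the set of inputs reaching any node $v$ is a combinatorial rectangle $X_v\times Y_v$, since Alice's message depends only on $x$ and Bob's only on $y$. Correctness of $\pi$ at a leaf outputting $i$ demands $x_i\ne y_i$ for every $(x,y)\in X_v\times Y_v$, which forces $x_i$ to be constant on $X_v$ and $y_i$ constant on $Y_v$ with the two constants differing; label the leaf accordingly with the literal $z_i$ or $\neg z_i$. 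I would then prove by induction up the tree that every subformula $F_v$ satisfies $F_v(x)=0$ for all $x\in X_v$ and $F_v(y)=1$ for all $y\in Y_v$. At an Alice node, $X_v$ partitions as $\bigsqcup_j X_{c_j}$ while $Y_v=Y_{c_j}$ for every child $c_j$, so any $x\in X_v$ lies in some $X_{c_{j^*}}$ (making $F_{c_{j^*}}(x)=0$ by induction, hence the $\AND$ zero), and every $y\in Y_v$ lies in every $Y_{c_j}$ (making every $F_{c_j}(y)=1$, hence their $\AND$ one). Bob nodes are dual. Applying the claim at the root yields $F=f$ with $\text{depth}(F)=|\pi|=d$.

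The only nontrivial ingredient is the leaf-labeling step in the protocol-to-formula direction, where one must extract a concrete literal from the mere knowledge of an output index; the rectangle property of deterministic interactive protocols is exactly what makes this possible. Everything else is a direct induction along the tree, and combining the two constructions gives the claimed equality between optimal formula depth and optimal $\KW_f$ protocol length.
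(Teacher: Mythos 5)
Your formula-to-protocol direction is exactly the paper's argument: the same tree identification with $\wedge$ nodes owned by Alice and $\vee$ nodes by Bob, and the same invariant $v(x)=0$, $v(y)=1$ maintained down to a leaf. The divergence is in the protocol-to-formula direction. The paper sidesteps the leaf-labeling issue by convention: it \emph{defines} a $\KW_f$ protocol as one that outputs a literal $z_i$ or $\neg z_i$ evaluating to $0$ on $x$ and $1$ on $y$ (not merely an index), so the leaf label is simply read off the protocol's output, and correctness is then argued by a top-down induction on protocol length, splitting $f^{-1}(0)$ according to Alice's first message and taking $F=F_0\wedge F_1$. You instead keep the weaker, standard output convention (an index $i$ with $x_i\ne y_i$), recover the literal from monochromaticity of the leaf rectangle $X_v\times Y_v$, and run a bottom-up induction with the invariant $F_v(x)=0$ on $X_v$ and $F_v(y)=1$ on $Y_v$; this is the original Karchmer--Wigderson argument, and it is slightly more general since it does not presuppose the strengthened output convention the paper adopts (a convention the paper needs anyway for its later noisy transformations), at the cost of invoking the rectangle property explicitly. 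One point you should make explicit: the rectangle argument only pins down a leaf's label when $X_v$ and $Y_v$ are both non-empty, yet your induction at an Alice node uses $F_{c_j}(y)=1$ for \emph{every} child $c_j$, including children with $X_{c_j}=\emptyset$, whose subtrees are unconstrained by the correctness of $\pi$. This is repaired by first pruning branches not reachable by any input (the paper makes exactly this assumption in its preliminaries), or equivalently by dropping such children from the $\wedge$ (resp.\ $\vee$), which cannot increase depth. With that standard proviso, your proof is correct and gives the claimed equality of optimal formula depth and optimal $\KW_f$ protocol length.
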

The above theorem is proven by showing a conversion between a formula for $f$ and a protocol for $\KW_f$, which we term the \emph{KW-transformation}. In this conversion, the formula-tree is converted into a protocol tree, where every $\AND$-gate becomes a node where Alice speaks and every $\OR$-gate becomes a node where Bob speaks. For a node $v$, the mapping $a_v: \{0,1\}^n \to \{0,1\}$ is set as follows.
For a given input $z$, consider the evaluation of the formula $F$ on~$z$. The node $v$ is an $\AND$ gate and we can write $v(z)=v_0(z) \AND v_1(z)$ where $v_0$ and $v_1$ are $v$'s left and right descendants, respectively. If $v_0(z)=0$ we set $a_v(z)=0$; otherwise we set $a_v(z)=1$. For an $\OR$ gate denote $v(z)=v_0(z) \OR v_1(z)$, and $b_v(z)=0$ if $v_0(z)=1$; otherwise $b_v(z)=1$. If the protocol reaches a leaf which is marked with the literal $z_i$ or $\neg z_i$, it outputs~$i$.  For technical reasons we will assume that the protocol outputs either~$z_i$ or $\neg z_i$ rather than just giving the index~$i$. Note that the literal always evaluates to the value of~$f$; In this work, a $\KW_f$ protocol must satisfy this additional requirement.

It is easy to verify that the following invariant holds: for every node $v$ reached by the protocol on some input $(x,y)\in f^{-1}(0)\times f^{-1}(1)$, it holds that $v(x)=0$ while $v(y)=1$. This holds for the root node by definition, and our selection of mappings $a_v, b_v$ maintains this property. Specifically, for an $\AND$-gate $v$ for which $v(x)=0$ it must hold that at least one of the gate's inputs is zero.
Indeed, the way we chose $a_v$ advances the protocol to a child node that evaluates to~0. Since $v(y)=1$, both children of~$v$ evaluate to $1$ on $y$; 
thus both descendants satisfy the invariant.  
The analysis for an $\OR$ gate is symmetric. It follows that once the protocol reaches a leaf (the literal $z_i$ or $\neg z_i$), that literal evaluates differently on $x$ and on~$y$, so $x_i\ne y_i$ as required. In particular, the literal evaluates to~$0$ on~$x$ and to~$1$ on~$y$.

The same reasoning allows us to convert a protocol for $\KW_f$ into a formula for~$f$: consider the  protocol tree and convert each (reachable) node where Alice speaks to an $\AND$-gate and each (reachable) node where Bob speaks to an $\OR$-gate. If the protocol outputs $z_i$ or $\neg z_i$ at some leaf, that literal is assigned to that leaf.

Proving that this conversion yields a formula for $f$ is by induction on the length of the protocol. If $|KW_f|=0$, then the protocol outputs (say) $z_i$ without communicating. It is clear that all inputs in the domain satisfy $x_i \ne y_i$, and that $x_i=0$ while $y_i=1$ (negate these values if the output of the protocol is~$\neg z_i$). For the induction step, assume, without loss of generality, that Alice is to speak first. For some partition~$X^0 \cup X^1 = f^{-1}(0)$, Alice sends 0 when $x\in X^0$ and otherwise she sends~$1$. 
By induction, the continuation of the protocol can be converted into formulas $F_0$ and~$F_1$ (corresponding to the cases where Alice sends 0 and 1, respectively), for which $F_0(x)=0$ when $x\in X^0$,  $F_1(x)=0$ when $x\in X^1$, and $F_0(y)=F_1(y)=1$ when $y\in f^{-1}(1)$. Taking $F=F_0 \AND F_1$ completes the proof. The other case, where Bob is to speak first, is symmetric. See~\cite{KW90} for further details about the KW-transformation from formulas to protocols and vice versa, and for the formal proofs.

\begin{remark}\label{rem:faninAlphabet}
In the above, formulas are assumed to have fan-in 2 and protocols are assumed to communicate bits. However, the same reasoning and conversion also applies for a more general case, where each $\AND$-gate and $\OR$-gate has fan-in $k$, and the protocol sends symbols from an alphabet of size~$|\Sigma|=k$. 

Furthermore, while our claims below are stated and proved assuming fan-in~2, all our claims apply to any arbitrary fan-in~$k$.
\end{remark}

\paragraph{Short-Circuit Noise}
Short circuit noise replaces the value of a specific node with the value of one of its descendants.
A noise pattern $E\in \{0,1,\dotsc,k-1,\noerr\}^{|V_\AND| \cup |V_\OR|}$ defines for each node whether it is short-circuited and to which input. Specifically, if for some node $v$, $E_v=\noerr$, then the gate is not corrupted and it behaves as defined above. Otherwise, the value of the node is the value of its $E_v$-th descendant, $v(z) = v_{E_v}(z)$. 
We denote by $F_E$ the formula with short circuit pattern $E$; we sometimes write $F$ for the formula with no short-circuit noise, i.e., with the noise pattern $E=\noerr^{|V_\AND| \cup |V_\OR|}$.

We say that a circuit is resilient to a noise pattern~$E$ if for any $z\in\{0,1\}^n$ it holds that $F(z)=F_E(z)$.
\begin{definition}
We say that $F$ is resilient to a $\delta$-fraction of noise if it is resilient to all noise patterns~$E$ in which the fraction of corrupted gates in any input-to-output path in $F$ is at most~$\delta$. 
\end{definition}
We can also be more precise and distinguish between noise in $\AND$-gates and $\OR$-gates. 
\begin{definition}
An $(\alpha,\beta)$-corruption of short-circuit errors, 
is a noise pattern on a formula $F$ of depth~$n$
that changes at most $\alpha n$ $\AND$-gates and at most $\beta n$ $\OR$-gates 
in any input-to-output path in~$F$.
\end{definition}
\begin{remark}
Note that an $(\alpha,\beta)$-corruption is defined with respect to the maximal depth~$n$ of the formula. If the formula has shorter paths, then $\alpha$ and $\beta$ no longer describe the fraction of corrupted gates in these paths. Hence, we will assume $F$'s underlying tree is a \emph{perfect} $k$-ary tree: every inner node has exactly $k$ children,  and all the leaves are of the same depth~$n$. We denote these as perfect formulas. It is easy to convert every formula of depth $n$ to be perfect without affecting its function.
\end{remark}
The following is immediately clear by definition.
\begin{claim}
If, for some $\delta>0$,  the  formula~$F$ is resilient to any $(\delta,\delta)$-corruption of short-circuit errors, then $F$ is also resilient to a $\delta$-fraction of noise.
\end{claim}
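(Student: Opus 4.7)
The plan is to show that the class of noise patterns used to define ``$\delta$-fraction of noise'' is contained in the class of $(\delta,\delta)$-corruptions, so that resilience against the larger class (the hypothesis) immediately implies resilience against the smaller class (the conclusion). Concretely, I would fix an arbitrary noise pattern $E$ such that, on every input-to-output path $P$ of $F$, the fraction of corrupted gates along $P$ is at most $\delta$. Letting $n = \mathrm{depth}(F)$, the number of corrupted gates along $P$ is then at most $\delta \cdot |P| \le \delta n$.

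The key step is then a one-line monotonicity observation: for any path $P$, the set of corrupted AND-gates on $P$ and the set of corrupted OR-gates on $P$ are each subsets of the set of all corrupted gates on $P$. Hence each of these two counts is at most $\delta n$, which is exactly the condition in the definition of a $(\delta,\delta)$-corruption. Since $P$ was an arbitrary input-to-output path, $E$ qualifies as a $(\delta,\delta)$-corruption of short-circuit errors, and by hypothesis $F$ is resilient to $E$, i.e.\ $F_E(z) = F(z)$ for every input $z$. As $E$ was an arbitrary $\delta$-fraction noise pattern, this gives resilience to $\delta$-fraction of noise.

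There is no real obstacle here; the statement is essentially a definitional containment, and its role is to serve as a bridge so that the quantitatively sharper results about $(\alpha,\beta)$-corruptions (which decouple AND-noise from OR-noise and are the natural object produced by the KW-transformation) can be reported in the more classical ``fraction of faulty gates per path'' language used throughout the introduction and in prior work such as \cite{KLR12}. The only point that warrants a bit of care is reading the two definitions consistently: the $\delta$-fraction definition normalises by the path length, while the $(\delta,\delta)$-corruption definition uses the depth $n$ as a uniform denominator, but since path length is at most $n$ the inequality $\delta\cdot|P| \le \delta n$ goes through without issue.
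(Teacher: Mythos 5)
Your proposal is correct and matches the paper, which states this claim is ``immediate by definition'' and offers no further proof: the argument is exactly the containment you describe, namely that any noise pattern corrupting at most a $\delta$-fraction of gates on every path corrupts at most $\delta n$ AND-gates and at most $\delta n$ OR-gates on every path, hence is a $(\delta,\delta)$-corruption. Your added care about the normalisation ($\delta|P|\le\delta n$ since $|P|\le n$) is a fine, if minor, clarification.
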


On the surface, the other direction does not necessarily hold:
$(\delta,\delta)$-corruption may corrupt up to a fraction $2\delta$ of the gates in each path, hence,
resilience to a $\delta$-fraction appears to be insufficient to resist all $(\delta,\delta)$-corruptions.
Nevertheless, we argue that these two notions are indeed equivalent. 
The reason for this is that a short-circuit in an $\AND$-gate can only turn the output from $0$ to $1$. A short-circuit in an $\OR$-gate can only turn the output from $1$ to $0$. 
Then, if a formula evaluates to~$1$ on some input, the output remains~1 regardless of any amount of short-circuited $\AND$-gates. If the output is~$0$, it remains so regardless of any number of short-circuited $\OR$-gates.
This observation was already made by Kalai et al.~\cite{KLR12}.
\begin{lemma}[{\cite[Claim~7]{KLR12}}]\label{lem:one-sided-noise}
Let~$F$ be a formula, $z$ an input and $E$ any error pattern. 
Let $E_{\AND}$ be the error pattern induced by $E$ on the $\AND$-gates alone (no errors on the $\OR$-gates);
Let $E_\OR$ be the error pattern induced by $E$ on the $\OR$-gates alone. 
It holds that if $F_{E_\AND}(z)=0$ then $F_{E}(z)=0$,  and if 
$F_{E_\OR}(z)=1$ then $F_{E}(z)=1$.
\end{lemma}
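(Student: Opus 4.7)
My plan is to establish both implications at once by proving a monotonicity comparison between the values computed at every node under the three error patterns. Concretely, I will show by induction on the depth of a node $v$ in~$F$ that, for every input~$z$,
\[
v_{E_\OR}(z) \;\le\; v_E(z) \;\le\; v_{E_\AND}(z).
\]
Applied at the root, the right inequality yields the first implication ($F_{E_\AND}(z)=0 \Rightarrow F_E(z)=0$), and the left inequality yields the second ($F_{E_\OR}(z)=1 \Rightarrow F_E(z)=1$).

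The base case, when $v$ is a leaf labelled by a literal, is trivial since no noise pattern affects a literal, so $v_{E_\OR}(z)=v_E(z)=v_{E_\AND}(z)$. For the inductive step, I will focus on the right inequality (the left is fully symmetric, swapping the roles of $\AND/\OR$ and of $0/1$). The key observation is that, by definition, $E_\AND$ and $E$ agree on all $\AND$-gates, while $E_\AND$ carries no corruptions on $\OR$-gates. Thus at an $\AND$-gate $v$ with children $v_0,v_1$: if $E_v=\noerr$, both $v_E(z)$ and $v_{E_\AND}(z)$ are the AND of the corresponding children's values, which are compared componentwise by the induction hypothesis, and AND is monotone; if $E_v=j\in\{0,1\}$, then the same $j$ is used in $E_\AND$, and once more the induction hypothesis directly gives $v_{j,E}(z)\le v_{j,E_\AND}(z)$. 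At an $\OR$-gate $v$: in $E_\AND$ the gate is error-free, so $v_{E_\AND}(z)=v_{0,E_\AND}(z)\OR v_{1,E_\AND}(z)$; if $E_v=\noerr$, then $v_E(z)=v_{0,E}(z)\OR v_{1,E}(z)$ and monotonicity of OR together with the inductive bound on the children closes the case; if $E_v=j$, then $v_E(z)=v_{j,E}(z)\le v_{j,E_\AND}(z)\le v_{0,E_\AND}(z)\OR v_{1,E_\AND}(z)=v_{E_\AND}(z)$.

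There is no real obstacle here; the only point that deserves care is the $\OR$-gate case in the right inequality, where the two patterns may disagree at~$v$ itself. The argument works because short-circuiting an $\OR$-gate replaces $v_0\OR v_1$ by a single $v_j$, which is always $\le v_0\OR v_1$, so allowing $\OR$-corruptions can only drive values downward. Dually, $\AND$-corruptions can only drive values upward, which is precisely what gives the left inequality. Combined with the monotonicity of $\AND$ and $\OR$, the comparison then propagates all the way to the root, proving the lemma.
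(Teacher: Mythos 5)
Your proof is correct: the sandwich inequality $v_{E_\OR}(z)\le v_E(z)\le v_{E_\AND}(z)$, propagated by induction on the formula using monotonicity of $\AND$ and $\OR$, is a clean formalization of exactly the one-sided intuition the paper states before the lemma (AND short-circuits can only push values up, OR short-circuits only down). The paper itself does not reprove this statement but imports it as Claim~7 of \cite{KLR12}, so your argument is essentially the standard/intended one and fills in the details correctly.
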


The above lemma then implies that resilience to a $\delta$-fraction of noise corresponds to resilience to the same fraction of noise in both types of gates.
\begin{lemma}\label{lem:frac-to-corruptions}
If, for some $\delta>0$, the perfect formula~$F$ is resilient to a fraction $\delta$ of short-circuit noise, then $F$ is also resilient to any $(\delta,\delta)$-corruption.
\end{lemma}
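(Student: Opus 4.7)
The plan is to deduce the lemma as an essentially immediate consequence of the one-sidedness of short-circuit noise, as captured by Lemma~\ref{lem:one-sided-noise}. Given any $(\delta,\delta)$-corruption $E$ and any input $z$, I would split $E$ into $E_\AND$ (keeping only the short-circuits on $\AND$-gates and leaving all $\OR$-gates intact) and $E_\OR$ (keeping only the short-circuits on $\OR$-gates). The crucial observation is that on \emph{every} input-to-output path in $F$, each of $E_\AND$ and $E_\OR$ corrupts at most $\delta\cdot\mathrm{depth}(F)$ gates by the definition of an $(\delta,\delta)$-corruption. In particular, viewed individually, each of these is a noise pattern whose fraction on any path is at most~$\delta$.

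Now I would split on the value of $F(z)$. If $F(z)=0$, apply the hypothesis that $F$ is resilient to $\delta$-fraction noise to conclude $F_{E_\AND}(z)=F(z)=0$; then invoke the first implication of Lemma~\ref{lem:one-sided-noise} to get $F_{E}(z)=0=F(z)$. Symmetrically, if $F(z)=1$, the hypothesis gives $F_{E_\OR}(z)=F(z)=1$, and the second implication of Lemma~\ref{lem:one-sided-noise} yields $F_E(z)=1=F(z)$. Either way $F_E(z)=F(z)$, which is exactly resilience to the $(\delta,\delta)$-corruption $E$.

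The only subtle point—and the thing worth stating carefully—is why the one-sided splitting does not inflate the per-path noise fraction. Since $E_\AND$ and $E_\OR$ are obtained from $E$ merely by \emph{un-corrupting} some of its gates, the number of errors on any fixed input-to-output path under $E_\AND$ (respectively, $E_\OR$) is bounded by the number of $\AND$-gate (respectively, $\OR$-gate) errors of $E$ on that path, and the $(\delta,\delta)$ hypothesis bounds each of these separately by $\delta\cdot\mathrm{depth}(F)$. So I do not need any averaging or clever charging; the one-sided noise model of Lemma~\ref{lem:one-sided-noise} decouples $\AND$ and $\OR$ errors perfectly and each type is individually within budget. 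I do not anticipate a real obstacle here—the work was already done in Lemma~\ref{lem:one-sided-noise}, and this lemma is essentially its formal corollary.
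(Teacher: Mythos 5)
Your proof is correct and follows essentially the same route as the paper: decompose the $(\delta,\delta)$-corruption $E$ into $E_\AND$ and $E_\OR$, observe each individually corrupts at most a $\delta$-fraction of gates per path so resilience applies to each, and then invoke Lemma~\ref{lem:one-sided-noise} to transfer the conclusion to $F_E$ (the paper states the two cases in terms of inputs $x\in F^{-1}(0)$ and $y\in F^{-1}(1)$, which is just your case split on $F(z)$). No gaps.
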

\begin{proof}
Assume $F$ has depth~$n$ and consider any inputs $x,y$ such that $F(x)=0$ and $F(y)=1$. 

Let~$E$ be an arbitrary $(\delta,\delta)$-corruption pattern. In particular, $E$ short-circuits up to $\delta n$ of the $\AND$-gates and additionally up to $\delta n$ of the $\OR$-gates in any input-to-output path. 
Let $E_{\AND}$ be the error pattern induced by $E$ on the $\AND$-gates alone and let $E_\OR$ be the error pattern induced by $E$ on the $\OR$-gates alone. 
Note that both the noise patterns $E_\OR$ and $E_\AND$ corrupt at most a $\delta$-fraction of the gates in each path.

Since $F$ is resilient to a $\delta$-fraction of noise, we have 
\begin{align}
F_{E_\OR}(x)=F_{E_\AND}(x)=0,  \label{eqn:Noisy0} \\
F_{E_\OR}(y)=F_{E_\AND}(y)=1. \label{eqn:Noisy1}
\end{align}

Lemma~\ref{lem:one-sided-noise} and Eq.~\eqref{eqn:Noisy0} 
then imply that $F_E(x)=0$. Similarly, the lemma and Eq.~\eqref{eqn:Noisy1} imply that 
$F_E(y)=1$. Since the above holds for an arbitrary $(\delta,\delta)$-corruption $E$ and for all inputs $x,y$, we get that $F$ is resilient to  $(\delta,\delta)$-corruptions.
\end{proof}

Following the mapping between formulas and protocols, 
the authors in~\cite{KLR12} made the observations
that a short-circuit error in a formula translates to channel noise in the equivalent KW protocol, assuming both parties learn the noise, i.e., assuming noiseless feedback. Specifically, the feedback allows \emph{both} parties to continue to the same node in the protocol tree, despite the noise. Thus, it is crucial in order to keep the parties synchronized.
We will sometimes abuse notation and identify a short-circuit noise pattern with a transmission noise pattern for  a formula~$F$ and a protocol~$\pi$ that share the same underlying tree structure. Furthermore, we will denote  the two different objects with the same identifier~$E$.

\subsection{From Formulas to Protocols}
\label{sec:FtoP}

In this part we describe a variant of the KW-transformation, which we call the \emph{resilient KW-transformation} from formulas to protocols. We prove that there is a way to chose the mappings $a_v(x),b_v(y)$ in the protocol tree in a way that preserves resilience, that is, if~$F$ is resilient against $(\alpha,\beta)$-corruptions, then the resulting interactive protocol will feature the same resilience.

Recall that in the standard KW-transformation for some formula $F$ that computes~$f$, 
for any $(x,y)\in f^{-1}(0)\times f^{-1}(1)$, an invariant that $v(x)=1$ and $v(y)=0$ holds for any node~$v$ reached by the protocol given by the transformation. This invariant is the key for the one-to-one correspondence between the formula and the protocol. Keeping this invariant in the noiseless case amounts to selecting the mapping $a_v(x)$ to be the child of~$v$ that evaluates to~0 on~$x$, and $b_v(1)$ to be the child of~$v$ that evaluates to~1 on~$y$.

The main observation is that, given any $(\alpha,\beta)$-resilient formula~$F$ (i.e., a formula that is resilient to any $(\alpha,\beta)$-corruption), we can choose the mapping $a_v(x)$ as the child of~$v$ that evaluates to~0 \emph{given any $(\alpha,\beta)$-corruption}: such a child always exists! Similarly, the mapping $b_v(y)$ is set to be the child of~$v$ that evaluates to~1 given any $(\alpha,\beta)$-corruption.
This allows us to maintain the invariant that $v(x)=0$ and $v(y)=1$ for any node~$v$ reached by the protocol, regardless of the possible noise pattern.
Keeping this invariant leads to proving that the protocol correctly computes~$\KW_f$ despite $(\alpha,\beta)$-corruptions.

We begin by introducing our variant of the KW-transformation that preserves resilience.
\begin{definition}[resilient KW-transformation]\label{def:resKW}
Given any $(\alpha,\beta)$-resilient formula~$F(z)$, the resilient KW-transformation of $F$ yields an interactive protocol $\pi$ defined as follows over the domain $F^{-1}(0)\times F^{-1}(1)$.

\begin{enumerate}%
\itemsep0em 

\item
The formula-tree is converted into a protocol tree, where every $\AND$-gate becomes a node where Alice speaks and every $\OR$-gate becomes a node where Bob speaks.

\item
Order the nodes in the protocol tree in a BFS order starting from the root, and determine the mappings associated with each node in that order (i.e., before setting the mapping of some node, set the mapping of all its ancestors). 

\item 
For any inner node~$v$, 
let $S_{(v,x,y)}$ be the set of noise patterns $E$ such that $E$ is a $(\alpha,\beta)$-corruption and such that an instance of~$\pi$ given the input $(x,y)$ and noise~$E$ causes the protocol to reach the node~$v$ (note that this process is well defined due to the BFS order). 

\item  \label{nKW:mapping}
If $v$ is an $\AND$-node, for any $x$, the mapping 
$a_v(x)$ maps to the child $w$ for which the subformula of~$F$ rooted at $w$ evaluates to $0$ on~$x$ for  \emph{all} noise patterns~$E\in \bigcup_{y'\in F^{-1}(1)} S_{(v,x,y')}$.
If $v$ is an $\OR$-node, then for any $y$, the map $b_v(y)$ maps to the child~$w$ for which the subformula of $F$ rooted at $w$ evaluates to~$1$ on~$y$ for all noise patterns $E\in \bigcup_{x'\in F^{-1}(0)} S_{(v,x',y)} $. 

\item
A leaf of~$F$ marked with the literal  $z_i$ or $\neg z_i$ becomes a leaf (output) of the protocol with the same literal. 
\end{enumerate}
\end{definition}
Note that the mappings $a_v(x),b_v(y)$ defined in item~\eqref{nKW:mapping}  
may be partial functions.
Specifically, if an $\AND$-node $v$ is not reachable given the input~$x$ with any $y$ and any valid noise, 
then,
definition of $a_v$ on that input~$x$ has no meaning.

Proposition~\ref{prop:OneForAllNoisyAB} guarantees that for any reachable node~$v$ we can always find a child $w$ that satisfies the condition of item~\eqref{nKW:mapping}. 
The proposition further proves that every node~$v$ reached by the constructed protocol~$\pi$ (assuming any valid noise) satisfies the invariant that $v(x)=0$ and $v(y)=1$. Similar to the noiseless KW-transformation, this invariant would imply that~$\pi$ correctly computes~$\KW_f$ in a resilient manner (Proposition~\ref{prop:FtoP}).

\begin{proposition}\label{prop:OneForAllNoisyAB}
Let $F(z)$ be an $(\alpha,\beta)$-resilient formula, and consider the resilient KW-transformation  of~$F$ (Definition~\ref{def:resKW}).
For any node~$v$ reached during the construction,  and for any $(x,y)\in F^{-1}(0) \times F^{-1}(1)$ such that the partial protocol~$\pi$ constructed thus far reaches~$v$ on~$x,y$ and some $(\alpha,\beta)$-corruption~$E$, the following holds.

\begin{enumerate}[label=(\alph*)]
\item \label{clm:property-a}
$v(x)=0$ and $v(y)=1$ in~$F_E$.

\item \label{clm:property-b}
Let $F_{0}$ and $F_{1}$ be the subformulas (of~$F$) rooted at the left and right child of~$v$, respectively. There is at least one subformula $G\in \{F_{0},F_{1}\}$ that satisfies
$G_E(x) =0$ for all noise patterns $E\in \bigcup_{y'\in F^{-1}(1)}  S_{(v,x,y')}$ (when $v$ is an $\AND$-node), or
$G_E(y)=1$ for all noise patterns $E\in \bigcup_{x'\in F^{-1}(0)} S_{(v,x',y)}$ (when $v$ is an $\OR$-node).
\end{enumerate}
\end{proposition}
\begin{proof}
Let us begin with property~\textit{\ref{clm:property-a}}.
The proof is by induction on the depth of~$v$ in~$F_E$. 
For the base case, when $v$ is the root, $v(x)=F(x)=0$ and $v(y)=F(y)=1$ in~$F_E$ since $E$ is an $(\alpha,\beta)$-corruption to which $F$ is resilient.

Now, let $v$ be an arbitrary  node and let~$w$ be its parent in~$F_E$; we denote by~$u$ the other child of~$w$.\footnote{If $w$ has only one child, the claim would trivially hold. Extension to $k$-ary trees are straightforward.} Property~\textit{\ref{clm:property-a}} holds for~$w$ by the induction hypothesis. Consider the case where~$w$ is an $\AND$-gate (the case of an $\OR$-gate is shown in a similar manner). 
There are two cases according to the noise associated with~$w$. 
If there is no noise at~$w$, $E_w = \noerr$, then for any input~$z$ it holds that $w(z) = v(z) \AND u(z)$.
Using the induction hypothesis, $w(y)=1$, and it must hold that $v(y)=u(y)=1$. 
Additionally, 
$w(x)=0$ therefore at least one of $v(x)$ and $u(x)$ must be~0. 
The protocol $\pi_E$ proceeds to $v$ only if $v(x)=0$ for \emph{all} noise patterns in $\bigcup_{y'\in F^{-1}(1)}S(v,x,y')$, and, in particular, $v(x)=0$ for the noise~$E$ which clearly belongs to~$S(v,x,y)$.
If the protocol does not proceed to~$v$, then it is not reachable for $(x,y),E$ and the statement holds vacously.
The other case is when there 
 is noise at~$w$. Then $\pi_E$ reaches $v$ only if $E_w$ directs to the child~$v$. In this case
$w(z) = v(z)$, and thus $v(x)=w(x)=0$ and $v(y)=w(y)=1$ by the induction hypothesis.

We continue to the proof of property~\textit{\ref{clm:property-b}}. 
The base case for $v$ being the root node is a simple special case of the proof given below for an arbitrary~$v$.

Let $v$ be given and assume that the claim holds for all nodes~$v'$ that come before~$v$ in the BFS ordering. Specifically, it holds for all the ancestors of~$v$. 
We show that the claim holds for~$v$ as well.
Consider the case where $v$ is an $\AND$ node  (the other case is similar).
Assume towards contradiction that the claim does not hold for~$v$.
That is, there are two  noise patterns $E_0,E_1\in  \bigcup_{y'\in F^{-1}(1)}  S_{(v,x,y')}$ such that
${(F_0)}_{E_0}(x)=1$ and ${(F_1)}_{E_1}(x)=1$. 

Define the noise pattern~$E^*$ (over the nodes of~$F$) in the following way. 
For any ancestor of~$v$, $E^*$ is defined as the $\OR$-\emph{minimal} noise-pattern between $E_0$ and~$E_1$, i.e., the one that induces the least noise on $\OR$-gates in the root-to-$v$ path.
Furthermore, for any $\AND$-gate~$u$ in the root-to-$v$ path, if either $E_0$ or $E_1$ contain no noise at~$u$, set $E^*$ to have no noise at that gate. Otherwise, both $E_0$ and $E_1$ have noise at~$u$ and since both reach~$v$, the noise must be the same; in this case $E^*$ contains the same noise for $u$ as $E_0$ and~$E_1$. 
For the nodes that belong to the subformula~$F_0$, the noise $E^*$ is identical to~$E_0$, and for nodes that belong to the subformula~$F_1$, $E^*$ is identical to~$E_1$. 
For all other nodes there is no noise in~$E^*$.

Clearly by this construction, $E^*$ is an $(\alpha,\beta)$-corruption, since compared to either $E_0$ or $E_1$, we only reduced the amount of corruptions in both $\AND$ and $\OR$ gates between the root and $v$ (and kept the same number of corruptions below~$v$). 
Furthermore, there must  exist some $y'$ such that $\pi_{E^*}(x,y')$ reaches~$v$: 
assume $E_0$ was the $\OR$-minimal pattern. Since $E_0\in  \bigcup_{y'\in F^{-1}(1)}  S_{(v,x,y')}$ there exists $y'$ for which $\pi_{E_0}(x,y')$ reaches~$v$. We argue that $\pi_{E^*}(x,y')$ also reaches~$v$. Indeed, in any $\OR$-gate $\pi_{E^*}(x,y')$ behaves exactly like $\pi_{E_0}(x,y')$ since the noise in both is identical. 
For any $\AND$-gate $u$, $E_0$ may have noise in~$u$ while $E_1$ (and thus, $E^*$) does not.
However, there exists~$y''$ such that $\pi_{E_1}(x,y'')$ reaches~$v$. Hence, it also reaches~$u$, and it also advances to the same child as $\pi_{E_0}(x,y')$ does when it reaches~$u$. 
Since $u$ is an $\AND$-gate, this decision depends only on~$x$. 
By the above we learn that if the protocol reaches~$u$ and there is no noise, it advances to the same child determined by the noise $E_0$ at~$u$.
Therefore,  $\pi_{E^*}(x,y')$ takes the same child of~$u$ as $\pi_{E_0}(x,y')$. 
It follows that $\pi_{E^*}(x,y')$ reaches~$v$, and $E^*\in  \bigcup_{y'\in F^{-1}(1)}  S_{(v,x,y')}$.

Additionally, in $F_{E^*}$, the node $v$ evaluates to~$1$ on~$x$, 
because 
$(F_0)_{E^*}(x)=(F_0)_{E_0}(x)=1$ and $(F_1)_{E^*}(x)=(F_1)_{E_1}(x)=1$. 
But this contradicts property~\textit{\ref{clm:property-a}}, asserted at the beginning of this proof,
that for any noise $E$ (and specifically for~$E^*$), any node~$v$ that is reachable by $\pi_{E^*}(x,y')$ must evaluate to~$0$ on~$x$.
Therefore, at least one of $F_0(x)$ and~$F_1(x)$ evaluates to~$0$ on all noise patterns within the scope.
\end{proof}

With the above we can show our main proposition for converting formulas to protocols in a noise-preserving way. 
\begin{proposition}\label{prop:FtoP}
Let $F$ be a perfect formula 
that computes the function~$f$ and is resilient to 
$(\alpha,\beta)$-corruption of short-circuit gates
in every input-to-output path. 
Then, the resilient KW-transformation yields an interactive protocol~$\pi$ over channels with feedback, that solves $\KW_f$ and is resilient to $(\alpha,\beta)$-corruptions.
\end{proposition}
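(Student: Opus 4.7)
The plan is as follows. I will build $\pi$ directly on the tree of $F$: each $\AND$-gate becomes an Alice node, each $\OR$-gate a Bob node, and each literal leaf is the protocol's output. At an $\AND$-node $v$, Alice's mapping $a_v(x)$ will output the least $i$ such that $v_i(x)=0$ in $F_{E'}$ for \emph{every} $(\alpha,\beta)$-corruption $E'$ of $F$ compatible with the received path from the root to $v$; Bob's mapping $b_v(y)$ at $\OR$-nodes is defined symmetrically with respect to $y$ and the value $1$. I will then prove by induction on depth the strengthened invariant that at every visited protocol-tree node $v$ and every $(\alpha,\beta)$-corruption $E'$ compatible with the received path to $v$, $v(x)=0$ and $v(y)=1$ in $F_{E'}$. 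Specializing this invariant to the actual corruption $E$ at the leaf forces the returned literal $\ell$ to satisfy $\ell(x)=0$ and $\ell(y)=1$, yielding a correct $\KW_f$ output.

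The technical core is a \emph{special-child lemma}: whenever the strengthened invariant holds at an $\AND$-node $v$, Alice's mapping is well-defined, i.e., some child $v_i$ qualifies. I plan to prove this by contradiction. If no such child exists, then each child $v_i$ admits a witness $E'_i$, compatible with the received path, for which $v_i(x)=1$ in $F_{E'_i}$. I will glue these witnesses into a single pattern $E^*$ by (i)~agreeing with the received path on the strict ancestors of $v$, (ii)~setting $v$ itself to $\noerr$, (iii)~copying the restriction of each $E'_i$ onto the subtree rooted at $v_i$, and (iv)~placing $\noerr$ on every node that branches off the received path above $v$. Since the value of $v_i$ in a noisy formula depends only on the noise inside its own subtree, each $v_i(x)=1$ in $F_{E^*}$, and hence the noiseless $\AND$-gate $v$ evaluates to $1$ in $F_{E^*}$, contradicting the strengthened invariant -- provided that $E^*$ is itself a valid $(\alpha,\beta)$-corruption compatible with the received path. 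This verification is the delicate bookkeeping step and the main obstacle I expect: every root-to-leaf path $P$ of $F$ either enters the subtree of some $v_i$, whereupon its $E^*$-noise coincides with that of $E'_i$ except that $v$'s corruption has been replaced by $\noerr$ (which only reduces the counts), or branches off above $v$, in which case its $E^*$-noise is a prefix of the path-noise in any compatible $E'$ followed by $\noerr$'s. Both kinds of paths remain within $(\alpha n,\beta n)$, so $E^*$ is a valid compatible corruption. The $\OR$-gate version is symmetric in $y$.

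With the special-child lemma in hand, the induction is routine. The base case is the root, where the compatible set consists of all $(\alpha,\beta)$-corruptions and the invariant reduces to the hypothesis that $F$ is $(\alpha,\beta)$-resilient together with $(x,y)\in f^{-1}(0)\times f^{-1}(1)$. For the inductive step at a visited $v$: if the actual noise at $v$ is $\noerr$, Alice's (resp.\ Bob's) chosen child preserves the invariant because the compatible set at the child is a subset of that at $v$, and the other variable's invariant transfers because an $\AND$-gate with value $1$ and no noise forces both children to $1$ (dually for $\OR$-gates); if the actual noise at $v$ short-circuits to some $v_j$, then for every $E'$ in the further-restricted compatible set at $v_j$ (which now records noise $j$ at $v$), the identities $v(x)=v_j(x)$ and $v(y)=v_j(y)$ in $F_{E'}$ transfer the invariant to $v_j$ automatically. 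Since the protocol tree coincides with the formula tree, a short-circuit $(\alpha,\beta)$-corruption of $F$ translates node-for-node into a channel $(\alpha,\beta)$-corruption of $\pi$, so $\pi$ is resilient to precisely the same family of corruptions.
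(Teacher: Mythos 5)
Your overall plan coincides with the paper's: define the mappings greedily by requiring a child that evaluates to $0$ (resp.\ $1$) under \emph{all} corruptions compatible with reaching $v$, prove a special-child lemma by gluing witnesses into one corruption $E^*$ and contradicting the invariant, and then read off correctness at the leaf. The gap is in the gluing step itself, which is exactly the delicate part (the paper's Claim~\ref{clm:OneForAllNoisyAB}). Your item (i), ``agreeing with the received path on the strict ancestors of $v$,'' is not a well-defined noise assignment: the witnesses $E'_1,\dots,E'_k$ may place \emph{different} noise on the ancestors (in particular at $\OR$-nodes, where Alice cannot tell whether the received symbol was Bob's intended one or a corruption), so there is no single ancestor pattern that ``agrees with the path.'' Your budget bookkeeping then silently assumes that $E^*$ restricted to the ancestors coincides with $E'_i$'s ancestor noise for \emph{every} $i$ (``its $E^*$-noise coincides with that of $E'_i$ except that $v$'s corruption has been replaced by $\noerr$''), which cannot hold simultaneously when the witnesses differ above $v$; and the compatibility of $E^*$ --- that the protocol run with input $x$, \emph{some} $y'$, and noise $E^*$ actually follows the received path to $v$ --- is asserted but never verified. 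Removing noise at a path node is dangerous: with the noise gone, the already-defined mapping at that node takes over, and nothing in your argument guarantees it sends the path-following symbol (for Bob's nodes this even depends on which $y'$ one uses).

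The missing idea is the asymmetric, minimal-witness selection the paper uses. On the root-to-$v$ path one keeps, at $\OR$-gates, exactly the noise of the single witness that is $\OR$-minimal on the path (and runs the protocol with that witness's $y'$, so Bob's steps on the path are reproduced verbatim); at $\AND$-gates one keeps noise only where \emph{all} witnesses have it (where they necessarily agree, since all follow the path), and where some witness lacks noise one argues the path is still followed because Alice's already-defined mapping depends only on $x$ and the path, and that noiseless witness shows it points along the path. This choice simultaneously yields compatibility and the budget bound (the path noise is count-wise dominated by each witness's, and below $v$ the noise is copied from the corresponding $E'_i$), which is what your construction needs but does not deliver as stated. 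Your final step --- specializing the invariant to the actual corruption at the leaf instead of invoking the auxiliary protocol $\pi^{F_E}$ of Definition~\ref{def:noisyKW} --- is a fine shortcut once the special-child lemma is in place.
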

\begin{proof}
Let $E$ be a given $(\alpha,\beta)$-corruption, 
and let $\pi_E$ be the protocol defined above for~$F$
assuming the transmission noise induced by~$E$. 
We claim that the protocol $\pi_E$, that is, the protocol~$\pi$ under the noise~$E$, computes $\KW_f$, which means that $\pi$ is an $(\alpha,\beta)$-resilient protocol for~$\KW_f$.

Say that on inputs $(x,y) \in F^{-1}(0) \times F^{-1}(1)$ the protocol terminates at a leaf $v$ marked with either $z_i$ or $\neg z_i$. By Proposition~\ref{prop:OneForAllNoisyAB} it holds that $v(x)=0$ while $v(y)=1$ in~$F_E$ (and thus in~$F$), which implies that $x_i\ne y_i$. Note that the literal evaluates to the output of the function as we additionally require from~$\KW_f$ protocols.
\end{proof}

The conversion from resilient formulas into resilient protocols in Proposition~\ref{prop:FtoP} implies an upper bound on the maximal resilience of formulas, and proves Theorem~\ref{thm:converse-inf}.
\begin{theorem}\label{thm:converse}
There exists a function $f:\{0,1\}^n \to Z$ such that no formula~$F$ that computes~$f$ with fan-in~$k$ and depth less than $r<\frac56\frac{n}{\log 2k}$ is resilient to a fraction of $1/5$ of short-circuit noise.  
\end{theorem}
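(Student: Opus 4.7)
The plan is to take $f$ to be the parity function $par:\{0,1\}^n\to\{0,1\}$ from Section~\ref{sec:imp} and reduce the present claim to the communication lower bound of Theorem~\ref{thm:KWnotresilient}. Recall that Theorem~\ref{thm:KWnotresilient} (with Corollary~\ref{cor:imp1/5}) says that $\KW_{par}$ admits no interactive protocol of length strictly less than $(n-1)/\log|\Sigma|$, over an alphabet~$\Sigma$ with noiseless feedback, which is resilient to $(1/5,1/5)$-corruptions. On the other hand, the KW-transformation turns a formula of fan-in~$k$ into a protocol over an alphabet of size $|\Sigma|=k$ whose length equals the depth of the formula (Remark~\ref{rem:faninAlphabet}). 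The only remaining ingredient is a \emph{noise-resilience-preserving} formula-to-protocol conversion, and Proposition~\ref{prop:FtoP} is exactly that.

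I would argue by contradiction. Suppose $F$ is a formula with fan-in~$k$, computing $par$, with depth $d < (n-1)/k$, and resilient to a fraction $1/5$ of short-circuit noise. By Lemma~\ref{lem:frac-to-corruptions}, $F$ is then resilient to every $(1/5,1/5)$-corruption of short-circuit gates. Applying Proposition~\ref{prop:FtoP} (which, by Remark~\ref{rem:faninAlphabet}, extends to arbitrary fan-in~$k$), we obtain an interactive protocol~$\pi$ with noiseless feedback over the alphabet of size~$k$ that computes $\KW_{par}$, has length~$d$, and is resilient to every $(1/5,1/5)$-corruption. Since for $k\ge 2$ we have
\[
d < (n-1)/k \;\le\; (n-1)/\log_2 k \;=\; (n-1)/\log|\Sigma|,
\]
the protocol~$\pi$ violates Theorem~\ref{thm:KWnotresilient}, which is the desired contradiction.

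The real work has already been absorbed into Proposition~\ref{prop:FtoP} (which required the subtle observation, via Claim~\ref{clm:OneForAllNoisyAB}, that at each reachable node one of the subformulas evaluates to the ``right'' value under \emph{every} admissible noise pattern, so Alice/Bob can make a deterministic choice preserving the invariant) and into Theorem~\ref{thm:KWnotresilient} (the $\KW_{par}$-specific confusion argument). Given both, the present theorem is essentially a one-line corollary: the only non-obvious check is that resilience to a fraction~$1/5$ of short-circuit noise entails resilience to every $(1/5,1/5)$-corruption, which is precisely Lemma~\ref{lem:frac-to-corruptions} (combined with the one-sided-noise observation of Lemma~\ref{lem:one-sided-noise}). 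I do not anticipate any new obstacle beyond neatly packaging these invocations.
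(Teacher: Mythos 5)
Your proposal is correct and follows essentially the same route as the paper: reduce to $par$, use Lemma~\ref{lem:frac-to-corruptions} to pass from fraction-$1/5$ resilience to $(1/5,1/5)$-corruption resilience, apply Proposition~\ref{prop:FtoP} to get a length-$d$ protocol for $\KW_{par}$ over an alphabet of size~$k$, and contradict Theorem~\ref{thm:KWnotresilient} via $d<(n-1)/k\le (n-1)/\log_2 k$. The only detail the paper adds (and you omit, harmlessly) is padding $F$ to a complete $k$-ary tree before applying the transformation; otherwise your argument, including the explicit $k\ge\log_2 k$ step, matches the paper's.
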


\begin{proof}
For $z\in\{0,1\}^n$, let $par(z)= z_1 \oplus \dotsm \oplus z_n$ be the parity function. 

Let $F$ be a perfect formula that computes $par(z)$ with 
AND/OR gates of fan-in~$k$ and 
$depth(F)<\frac56\frac{n}{\log 2k}$. Assume that~$F$ is resilient to a fraction of~$1/5$ of short-circuit noise.
Lemma~\ref{lem:frac-to-corruptions} shows that $F$ is also resilient to $(1/5,1/5)$-corruptions of short-circuits. 
Then, using Proposition~\ref{prop:FtoP} we obtain an interactive protocol $\pi$ for $\KW_{par}$ of length $|\pi|= depth(F)<\frac56\frac{n}{\log 2k}$ that communicates symbols from an alphabet of size $|\Sigma|=k$, and is resilient to $(1/5,1/5)$-corruptions. 
This contradicts Theorem~\ref{thm:KWnotresilient}.
\end{proof}

Note that computing the parity of $n$ bits can be done with a formula of depth~$O(\log n)$. However, the above theorem shows that any resilient formula for the parity function will have an exponential blow-up in depth, and thus exponential blow-up in size.
\begin{corollary}
There is no coding scheme that converts any formula~$F$
of size $s$ into a formula~$F'$ of size $o(\exp(s))$, such that
$F'$ computes the same function as $F$ and is resilient to $1/5$-fraction of 
short-circuit gates on every input to output path.
\end{corollary}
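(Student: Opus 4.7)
The plan is to deduce the corollary directly from Theorem~\ref{thm:converse} via contradiction. First I would fix $f$ to be the parity function on $n$ bits and $F$ a compact AND/OR formula for it: a balanced tree of XOR gadgets gives $\text{depth}(F) = O(\log n)$ and $|F| = \poly(n)$. Suppose for contradiction that a universal coding scheme produced a $1/5$-resilient formula $F'$ for parity with $|F'| = o(\exp(|F|))$. By Theorem~\ref{thm:converse}, $F'$ must have depth at least $(n-1)/\log k$, so for constant fan-in~$k$ we get $\text{depth}(F') = \Omega(n)$ --- exponentially larger than $\text{depth}(F) = O(\log n)$.

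The second step --- and the step I expect to be the main obstacle --- is to promote this exponential depth blow-up into the claimed exponential size blow-up. Theorem~\ref{thm:converse} bounds depth only, and in general a tree of depth~$d$ can have as few as $d+1$ nodes. The cleanest way to close the gap is to route through the noise-preserving KW-transformation of Proposition~\ref{prop:FtoP}: feeding $F'$ to it gives a $(1/5,1/5)$-resilient protocol for $\KW_{par}$ of length $\text{depth}(F') = \Omega(n)$ over an alphabet of size~$k$, while the optimal noiseless protocol for $\KW_{par}$ has length only $O(\log n)$; Corollary~\ref{cor:imp1/5} asserts that this blow-up in communication must be exponential. Since the formula's underlying tree coincides with the protocol's tree, this depth increase forces a blow-up in the number of leaves, and hence in $|F'|$, of order $k^{\Omega(n)}$.

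Putting the pieces together: we would obtain $|F'| = \exp(\Omega(n))$, while $|F| = \poly(n)$, so $|F'|$ grows at least as $\exp(\Omega(|F|^{1/c}))$ for some constant~$c$ depending on the polynomial bound; calibrating $n$ (i.e.\ taking $F$ to be the most compact parity formula, with $|F|$ as close to $\Theta(n)$ as possible) makes $|F'|$ genuinely $\exp(\Omega(|F|))$, contradicting the assumed $|F'| = o(\exp(|F|))$. The fragile step, to be handled with care, is the conversion from depth to size, which relies either on the structured output of the KW reduction or on a structural lower bound for parity formulas (for instance, a path-shaped AND/OR formula realizes only a left-associated expression in its literals and cannot compute $\bigoplus_i x_i$, and Khrapchenko's $\Omega(n^2)$ bound rules out sub-quadratic parity formulas) showing that depth and size cannot be wildly decoupled for $f = par$.
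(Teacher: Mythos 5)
Your opening move is exactly the paper's: specialize to parity, note it has an $O(\log n)$-depth, $\poly(n)$-size formula, and invoke Theorem~\ref{thm:converse} to force depth $\Omega(n)$ on any $1/5$-resilient formula for it. Indeed, the paper's entire proof of this corollary is that observation (the sentence preceding it), with the size statement read off from the depth blow-up via nothing more than $|F'|\ge \mathrm{depth}(F')$. The genuine gap is in the step you yourself flag as fragile, and your patch does not close it. Corollary~\ref{cor:imp1/5} (equivalently Theorem~\ref{thm:KWnotresilient}) lower-bounds the \emph{communication} of a resilient protocol, i.e.\ its number of rounds, which under the KW correspondence is the formula's \emph{depth}, not its number of nodes. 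Nothing there, nor in the fact that the formula tree and the protocol tree coincide, forces the tree to be complete or bushy: a tree of depth $\Omega(n)$ can have $O(n)$ leaves, and Proposition~\ref{prop:FtoP} only completes the tree by duplicating children, which yields no lower bound on the size of the original $F'$. So the claim ``this depth increase forces a blow-up in the number of leaves, and hence in $|F'|$, of order $k^{\Omega(n)}$'' is a non sequitur; from the cited results you only obtain $|F'|\ge \mathrm{depth}(F')=\Omega(n)$.

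The calibration at the end also fails on its own terms: you cannot take a parity formula with $|F|$ close to $\Theta(n)$, because (as you yourself cite) Khrapchenko forces $\Omega(n^2)$ leaves, and fan-in-$k$ gates do not reduce the leaf count; so even granting $|F'|=\exp(\Omega(n))$ you would reach only $\exp(\Omega(\sqrt{s}))$, and the caterpillar/Khrapchenko observations give only polynomial size bounds, nowhere near exponential. In short, your instinct that depth-to-size is the delicate point is sound---the paper treats it with a single ``and thus'' and supplies no leaf-counting argument---but the two mechanisms you propose for bridging it (protocol-tree leaf blow-up via Corollary~\ref{cor:imp1/5}, and recalibrating $|F|$ to $\Theta(n)$) are respectively unjustified and impossible. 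A faithful rendering of the paper's argument either states the conclusion in terms of depth (and of size measured against the logarithmic depth of the original balanced formula), or must supply a genuinely new argument lower-bounding the number of nodes of a resilient formula, which neither you nor the cited results provide.
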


\subsection{From Protocols to Formulas}
\label{sec:PtoF}

In this part we show the other direction of the resilient KW-transformation, namely, that 
a resilient protocol can be transformed into a resilient formula, with the same resilience level. 
This result is based on the result in~\cite{KLR12}, adapted to the setting of $(\alpha,\beta)$-corruptions (rather than resilience to $\delta$-fraction of noise). Then, we can use our coding scheme %
that is resilient against  $(1/5-\eps,1/5-\eps)$-corruptions in order to transform any formula~$F$ into a $(1/5-\eps,1/5-\eps)$-resilient version.

\begin{proposition}\label{prop:PtoF}
Let $\pi$ be a protocol that solves $\KW_f$ for some function $f$ and is resilient to $(\alpha,\beta)$-corruptions.
The KW-transformation on the reachable protocol tree of~$\pi$ yields a formula~$F$ that computes~$f$ and is resilient to any $(\alpha,\beta)$-corruption of short-circuit noise in any of its input-to-output paths.
\end{proposition}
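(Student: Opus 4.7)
The plan is to apply the standard Karchmer--Wigderson transformation to the reachable protocol tree of $\pi$: each Alice-node becomes an $\AND$-gate, each Bob-node becomes an $\OR$-gate, and each reachable leaf is labelled by the literal that $\pi$ outputs there; call the resulting formula $F$. Without noise, $F$ computes $f$ by the standard KW argument recalled in Section~\ref{sec:KW}. A short-circuit pattern $E$ on $F$ corresponds naturally to a transmission-noise pattern on $\pi$: short-circuiting an $\AND$-gate $v$ to its $j$-th child amounts to corrupting Alice's transmission at $v$ to the symbol $j$, and analogously for $\OR$-gates and Bob. Since every input-to-output path of $F$ is a root-to-leaf path of the protocol tree, an $(\alpha,\beta)$-corruption of $F$ induces an $(\alpha,\beta)$-corruption of $\pi$. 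It therefore suffices to show that for every such $E$ and every input $z$, $F_E(z) = f(z)$.

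I would prove the two halves separately. Consider $x \in f^{-1}(0)$ and assume for contradiction that $F_E(x) = 1$; the case $y \in f^{-1}(1)$ with $F_E(y) = 0$ is fully symmetric. Starting from the root, I build a root-to-leaf path along which every node evaluates to $1$ on $x$ in $F_E$: at a short-circuited node $v$ with $E_v = j$ we have $v_j(x) = v(x) = 1$ and the path descends to $v_j$; at an uncorrupted $\AND$-node $v$ we have $v(x) = \bigwedge_\ell v_\ell(x) = 1$, which forces \emph{every} child to evaluate to $1$, so I specifically descend to $v_{a_v(x)}$; at an uncorrupted $\OR$-node some child evaluates to $1$ and I pick any such child. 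The path terminates at a literal leaf $\ell$ with $\ell(x) = 1$.

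The key step is to find a Bob-input $y_0 \in f^{-1}(1)$ such that the noisy execution $\pi_E(x, y_0)$ traces this path. Because all nodes of the protocol tree are reachable (as stipulated in Section~\ref{sec:prelim}), there is a noiseless execution $\pi(x_0, y_0)$ that reaches $\ell$. By the uniqueness of the root-to-$\ell$ path in the tree, that noiseless execution traverses exactly our constructed path, so at every uncorrupted Bob-node $v$ on the path we must have $b_v(y_0)$ equal to our chosen child. Now trace $\pi_E(x, y_0)$ step by step: at corrupted nodes the noise forces $v_{E_v}$; at uncorrupted $\AND$-nodes the protocol advances to $a_v(x)$, which matches the path by construction; at uncorrupted $\OR$-nodes it advances to $b_v(y_0)$, which matches by the previous sentence. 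Hence $\pi_E(x, y_0)$ reaches $\ell$, so the assumed resilience of $\pi$ to the $(\alpha,\beta)$-corruption $E$ on $(x, y_0) \in f^{-1}(0) \times f^{-1}(1)$ forces the output literal $\ell$ to satisfy $\ell(x) = f(x) = 0$, contradicting $\ell(x) = 1$.

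The main subtlety, and where I expect the bookkeeping to be most delicate, is matching the formula-evaluation path with a real protocol execution on \emph{both} sides simultaneously: at uncorrupted $\AND$-nodes we are forced to follow $a_v(x)$ so that Alice's strategy on her actual input $x$ is compatible with the path, while the apparently ``free'' choice at uncorrupted $\OR$-nodes is pinned down only after the fact by invoking reachability of $\ell$ to extract a compatible $y_0$. The symmetric argument for $y \in f^{-1}(1)$ with $F_E(y) = 0$ swaps the roles (descend to $b_v(y)$ at uncorrupted $\OR$-nodes, pick any child of value $0$ at uncorrupted $\AND$-nodes, then use reachability of $\ell$ to produce a compatible $x_0 \in f^{-1}(0)$). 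Combining the two cases shows $F_E$ computes $f$ for every $(\alpha,\beta)$-corruption $E$, proving the proposition.
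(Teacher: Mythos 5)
Your overall strategy (trace a value-$1$ path through $F_E$ on $x$, realize it as a corrupted protocol execution against a suitable Bob-input, and contradict resilience) is sound, but there is a genuine gap at the step where you extract $y_0$: you assert that the leaf $\ell$ of the tree underlying $F$ is reached by a \emph{noiseless} execution $\pi(x_0,y_0)$, citing the pruning convention of the preliminaries. In this proposition, however, $F$ is the KW-transform of the \emph{reachable protocol tree of $\pi$}, and in this paper (see how the proposition is invoked in Theorem~\ref{thm:resF} and computed in Appendix~\ref{app:efficiency}) ``reachable'' means reachable by some input \emph{together with some valid $(\alpha,\beta)$-corruption}. For a resilient protocol this tree generally contains nodes and leaves that no noiseless execution ever visits, so no noiseless run reaching $\ell$ need exist, and your only handle on Bob's behavior at the uncorrupted $\OR$-nodes of the path disappears. (Restricting $F$ to the noiselessly-reachable tree does not rescue the argument either: for an $x$ that never reaches an $\AND$-node $v$ without noise, $a_v(x)$ may point to a pruned child, so your path construction at uncorrupted $\AND$-nodes is no longer matched by Alice's actual strategy without spending extra corruption budget.)

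The fix is a small but real modification of your exchange step. Take a reachability witness for $\ell$ of the form $(x_0,y_0,E')$ with $E'$ a valid $(\alpha,\beta)$-corruption such that $\pi_{E'}(x_0,y_0)$ reaches $\ell$; since the root-to-$\ell$ path is unique, at every Bob-node on the path that $E'$ leaves uncorrupted you get $b_v(y_0)=$ the next path node. Now do not run $\pi$ against $E$ itself; instead define a tailored corruption $E''$ for the pair $(x,y_0)$: corrupt Alice only at path-nodes where $E$ short-circuits an $\AND$-gate (at most $\alpha n$, and at $E$-uncorrupted $\AND$-nodes your path already follows $a_v(x)$), and corrupt Bob only at path-nodes where $E'$ corrupts his transmission (at most $\beta n$). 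Then $E''$ is a valid $(\alpha,\beta)$-corruption, $\pi_{E''}(x,y_0)$ traces your path and outputs the literal at $\ell$, and resilience forces that literal to evaluate to $0$ on $x$, giving your contradiction; the case $y\in f^{-1}(1)$ is symmetric. Note that this is a genuinely different route from the paper, which simply invokes the KLR12 machinery: Lemma~\ref{lem:KLR} (proved by induction over the tree, with reachability-under-noise as its hypothesis) yields $F_{E_\AND}(x)=0$ and $F_{E_\OR}(y)=1$, and the one-sided-noise Lemma~\ref{lem:one-sided-noise} then upgrades these decoupled statements to $F_E(x)=0$ and $F_E(y)=1$; your direct path-tracing argument avoids the decoupling entirely, but it only becomes correct once the witness is taken with its noise as above.
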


The above proposition is, in fact, a reformulation of a result by Kalai, Lewko, and Rao~\cite{KLR12}, implied by Lemma~\ref{lem:one-sided-noise} and the following. 
\begin{lemma}[{\cite[Lemma~8]{KLR12}}]\label{lem:KLR}
Let $f$ be a boolean function, and let $\pi$ be a protocol with root $p_{\text{root}}$. 
Let $T \subset f^{-1}(0) \times ([k] \cup \{*\})^{V_A}$ and 
$U \subset  f^{-1}(1) \times ([k] \cup \{*\})^{V_B}$
be two nonempty sets such that the protocol $\pi$ solves $KW_f$ on every pair of input and noise in $T\times U$, and
assume that any vertex that is a descendent of $p_{\text{root}}$ can be reached using some input and noise from  $T \times U$.

Then there is a formula $F$ that is obtained by replacing every vertex where Alice speaks with an $\AND$ gate, every vertex where Bob speaks with an $\OR$ gate and every leaf with a literal, 
such that for every 
$(x,E_A) \in T$, $(y,E_B)\in U$ it holds that
$F_{E_\AND}(x)=0$ and 
$F_{E_\OR}(y)=1$,
where $E_\AND$ is $E_A$ on Alice's vertices and $\noerr$ on Bob's vertices,
and 
$E_\OR$ is $\noerr$ on Alice's vertices and $E_B$ on Bob's vertices.
\end{lemma}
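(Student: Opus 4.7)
My plan is to prove the lemma by induction on the depth $d$ of the protocol tree. The formula $F$ is constructed by the standard KW-transformation: each Alice-node becomes an $\AND$ gate, each Bob-node becomes an $\OR$ gate, and each leaf is labeled by the literal that $\pi$ outputs at that leaf. For the base case $d=0$, the tree is a single leaf with literal $\ell$ and no interior nodes, so the noise patterns $E_\AND, E_\OR$ are vacuous; since $\pi$ solves $\KW_f$ on every tuple in $T\times U$ and the output literal evaluates to $f$ by the convention in Section~\ref{sec:circuits:prelim}, we have $\ell(x)=0$ for every $(x,E_A)\in T$ and $\ell(y)=1$ for every $(y,E_B)\in U$, yielding both claims.

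For the inductive step, let $v$ be the root with children $v_1,\dots,v_k$ and subprotocols $\pi_1,\dots,\pi_k$. Suppose $v$ is Alice's node (the Bob case is symmetric, with the roles of $T_i$ and $U_i$ exchanged). For $(x,E_A) \in T$ let $\iota(x,E_A) = E_A(v)$ if $E_A(v)\ne\noerr$, and $\iota(x,E_A) = a_v(x)$ otherwise; this is the child the execution descends to at~$v$, and crucially it is independent of $(y,E_B)$. Let $V_A(v_i), V_B(v_i)$ denote the Alice-/Bob-nodes in the subtree at $v_i$. I would define
\begin{align*}
T_i &= \{(x,\, E_A|_{V_A(v_i)}) : (x,E_A)\in T,\ \iota(x,E_A)=i\},\\
U_i &= \{(y,\, E_B|_{V_B(v_i)}) : (y,E_B)\in U\},
\end{align*}
and invoke the inductive hypothesis on $(\pi_i,T_i,U_i)$ to obtain subformulas $F_i$. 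To prove $F_{E_\AND}(x)=0$ for $(x,E_A)\in T$, set $i^\star = \iota(x,E_A)$; induction yields $(F_{i^\star})_{E_\AND}(x)=0$, and either $E_A(v)=\noerr$ (so the $\AND$ at the root is $0$ via the conjunct at $v_{i^\star}$) or $E_A(v)=i^\star$ (so the root short-circuits to $v_{i^\star}$). To prove $F_{E_\OR}(y)=1$ for $(y,E_B)\in U$, note that $E_\OR$ leaves $v$ noiseless, hence the $\AND$-root equals $\bigwedge_i (F_i)_{E_\OR}(y)$; the inductive hypothesis applied at every child yields $1$, so the conjunction is~$1$.

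The main obstacle is verifying that $(T_i,U_i)$ satisfies the hypotheses of the lemma for $\pi_i$: nonemptiness, correctness of $\pi_i$ on $T_i\times U_i$, and reachability of every descendant of $v_i$. The key observation is that at an Alice-node $v$ the action depends only on $x$ and $E_A(v)$, never on $(y,E_B)$; combined with the global reachability assumption, every $v_i$ is reached by some $(x_i, E_{A,i}) \in T$ regardless of Bob's input and noise, so re-pairing any $(y,E_B)\in U$ with such an $(x_i,E_{A,i})$ shows that $U_i$ indeed contains every restriction of $U$, and $T_i$ is nonempty. For correctness, any tuple in $T_i \times U_i$ lifts to a completion in $T\times U$ whose execution routes through $v_i$, on which $\pi$ is correct; since the subsequent execution depends only on the restricted noise, $\pi_i$ is correct on the restricted tuple as well. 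Downstream reachability of descendants of $v_i$ transfers directly from the global assumption, since any path from the root to such a descendant must pass through $v_i$.
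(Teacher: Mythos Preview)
The paper does not give its own proof of this lemma; it is quoted verbatim from~\cite{KLR12} and used as a black box in establishing Proposition~\ref{prop:PtoF}. Your inductive argument on the depth of the protocol tree is the natural approach (and is, in essence, the argument in~\cite{KLR12}): at an Alice root, the child entered depends only on $(x,E_A)$, so you may restrict $T$ per child while keeping $U$ intact, and the $\AND$/short-circuit semantics give both $F_{E_\AND}(x)=0$ (via the one relevant conjunct) and $F_{E_\OR}(y)=1$ (via all conjuncts, since $E_\OR$ leaves the root clean). Your verification that each $(\pi_i,T_i,U_i)$ inherits nonemptiness, correctness, and full reachability from the global hypotheses is the right checklist and is handled correctly---in particular, the point that reachability of every $v_i$ forces each $T_i$ to be nonempty, and that every $(y,E_B)\in U$ restricts into every $U_i$, is exactly what makes the ``all conjuncts equal $1$'' step go through. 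The proof is sound.
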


Using our coding scheme that is resilient to $(1/5-\eps,1/5-\eps)$-corruptions (Algorithm~\ref{alg:codingSmall})
we get that we can fortify any formula $F$ so it becomes resilient to a $(1/5-\eps)$-fraction of short-circuit noise, with only polynomial growth in size.
\begin{theorem}\label{thm:resF}
For any $\eps>0$, any formula $F$ of depth $n$ and fan-in $2$ that computes a function~$f$ can be efficiently converted into a formula~$F'$ that computes~$f$ 
even if up to a fraction of $1/5-\eps$ of the gates in any of its input-to-output paths are short-circuited. $F'$ has a constant fan-in $O_\eps(1)$ and depth $O(n/\eps)$.
\end{theorem}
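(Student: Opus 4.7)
The plan is to chain together three transformations already developed in the paper: the noiseless KW-transformation (Section~\ref{sec:circuits:prelim}), the coding scheme of Algorithm~\ref{alg:codingSmall}, and the noise-preserving protocol-to-formula conversion (Proposition~\ref{prop:PtoF}). Starting from the fan-in-$2$ formula~$F$ of depth~$n$, first apply the noiseless KW-transformation: each $\AND$-gate becomes an Alice-node, each $\OR$-gate becomes a Bob-node, and each literal leaf is preserved. This yields a protocol~$\pi_0$ for $\KW_f$ over the binary alphabet whose length equals $\operatorname{depth}(F)=n$. If $\pi_0$ is not alternating (a single party may own consecutive nodes), insert dummy rounds to make it alternating, at most doubling its length.

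Second, feed $\pi_0$ into Algorithm~\ref{alg:codingSmall} with the internal noise parameter set to $\eps/2$, so that by Theorem~\ref{thm:resilience-Small} the resulting protocol~$\pi$ is resilient to $(1/5-\eps,\,1/5-\eps)$-corruption. The length is $|\pi|=|\pi_0|/(\eps/2)=O(n/\eps)$ and the alphabet has constant size $|\Sigma|=O_\eps(1)$. Because $\pi_0$ solves $\KW_f$ over the noiseless channel and $\pi$ faithfully simulates $\pi_0$ under noise, $\pi$ solves $\KW_f$ under any $(1/5-\eps,\,1/5-\eps)$-corruption of the noisy channel with feedback.

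Third, invoke Proposition~\ref{prop:PtoF} to convert $\pi$ into a formula~$F'$ that computes~$f$ and is resilient to $(1/5-\eps,\,1/5-\eps)$-corruption of short-circuit noise. By Remark~\ref{rem:faninAlphabet}, $F'$ has fan-in $|\Sigma|=O_\eps(1)$ and depth equal to $|\pi|=O(n/\eps)$. Finally, the equivalence noted after the definition of $(\alpha,\beta)$-corruption in Section~\ref{sec:circuits:prelim} converts resilience to $(1/5-\eps,\,1/5-\eps)$-corruption into resilience to $(1/5-\eps)$-fraction of short-circuit noise on every input-to-output path, which is the desired guarantee.

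Efficiency follows because each of the three transformations is computationally efficient: the two KW-transformations are straightforward relabelings of the underlying tree, and the coding scheme is efficient by Lemma~\ref{lem:efficiencysmall}. There is no real obstacle here, since all the heavy lifting is done by the results already proved; the only subtleties are to choose the internal noise parameter of Algorithm~\ref{alg:codingSmall} as $\eps/2$ (so the final formula inherits a $(1/5-\eps)$-fraction resilience rather than $(1/5-2\eps)$), and to ensure $\pi_0$ satisfies the alternating-protocol precondition of Algorithm~\ref{alg:codingSmall}.
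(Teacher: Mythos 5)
Your chain of reductions is essentially the paper's: noiseless KW-transformation of $F$ into a protocol for $\KW_f$, then Algorithm~\ref{alg:codingSmall} (with the noise parameter tuned so the resilience comes out to $1/5-\eps$), then Proposition~\ref{prop:PtoF} back to a formula, and finally the observation that resilience to $(\delta,\delta)$-corruptions implies resilience to a $\delta$-fraction of short-circuits per path. The alternation padding and the $\eps/2$ bookkeeping are fine. The genuine gap is in your last paragraph, the efficiency claim. The reverse direction is \emph{not} a ``straightforward relabeling of the underlying tree'': Proposition~\ref{prop:PtoF} (via Lemma~\ref{lem:KLR}) builds the formula on the \emph{reachable} protocol tree of $\pi'$, i.e.\ only on nodes reachable by some input together with some $(1/5-\eps,1/5-\eps)$-corruption. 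This restriction is needed for correctness --- at leaves that no valid input/noise pair can reach, the coding algorithm's output carries no guarantee, so labeling the full tree is not justified by the resilience of $\pi'$ --- and deciding which nodes are reachable under a bounded corruption budget is exactly the nontrivial algorithmic step the paper handles in Appendix~\ref{app:efficiency} (the \textsc{Reach} procedure, which enumerates the $2^{O(d)}$ noise patterns along a root-to-$v$ path and checks consistency with some leaf of $\pi$). Your proposal never constructs this reachable tree nor argues it can be found efficiently.

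A second, related omission: the paper first balances $F$ to an equivalent formula of depth $O(\log|F|)$ with no redundant branches (this also underlies Assumption~\ref{asm:Preachable}, which the reachability test relies on). Skipping the balancing is harmless for the depth bound $O(n/\eps)$ as literally stated, but it undermines ``efficiently'': if $F$ is very unbalanced (depth $n$, size $\mathrm{poly}(n)$), the protocol tree of $\pi'$ has size $|\Sigma|^{\Theta(n/\eps)}$, exponential in $|F|$, so even writing down $F'$ --- let alone running the reachability checks --- cannot be done in time polynomial in $|F|$. With balancing, $d=O_\eps(\log|F|)$ and the whole construction, including the $\mathrm{poly}(2^d)$ reachability computation, is $\mathrm{poly}(|F|)$, which is what the paper means by efficient.
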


\begin{proof}
The conversion is done in the following manner. 
Given a formula~$F$ (that computes some function~$f$) we first balance it, i.e., convert it to an equivalent formula $\tilde F$ of depth $\log |F|$ with no redundant branches. It is well known that such a formula always exists.
Next, we convert $\tilde F$
into a protocol $\pi$ for $\KW_f$ via the KW-transformation (Section~\ref{sec:KW}); note that the length of~$\pi$ is at most the depth of~$\tilde F$, that is, $O(\log |F|)$. Then, we convert $\pi$ into a protocol $\pi'$ that solves the same function $\KW_f$ and is resilient to $(1/5-\eps,1/5-\eps)$-corruptions, assuming noiseless feedback. This step is possible due to Theorem~\ref{thm:resilience-Small}.
The resilient $\pi'$ is then transformed back into the resilient formula~$F'$  that satisfies the theorem assertions, using Proposition~\ref{prop:PtoF}. 
Recall that the depth of the obtained formula is exactly the length of the resilient protocol. 

To complete the proof we only need to argue that the conversion can be done efficiently.
It is easy to verify that converting $\tilde F$ to $\pi$ is efficient, and also converting $\pi$ to $\pi'$ (Algorithm~\ref{alg:codingSmall}) is efficient by Lemma~\ref{lem:efficiencysmall}. 
The only part which is possibly inefficient is the reverse KW-transformation from~$\pi'$ back to a formula, which requires finding the reachable protocol tree of $\pi'$---the vertices~$v$ for which there exist an input~$(x,y)$ and a $(1/5-\eps,1/5-\eps)$-corruption~$E$ such that $\pi'(x,y)$ reaches~$v$ if the noise is~$E$. 
This part can be shown to be efficient by a technique similar to that presented in~\cite{KLR12}.
In Appendix~\ref{app:efficiency} we give a detailed proof.
\end{proof}

Theorem~\ref{thm:main} is an immediate corollary of the above theorem, by noting that 
\[
|F'| \le k^{\text{depth}(F')} 
=\left(2^{O(\log(1/\eps))}\right)^{O((\log |F|)/\eps)}
=\poly_\eps(|F|). 
\]
Here, $k\approx \eps^{-2}$ %
is the fan-in of $|F'|$ given by the alphabet size of the resilient interactive protocol~$\pi'$ constructed earlier.

\section*{Acknowledgments}
The authors would like to thank
Raghuvansh Saxena
and
the anonymous reviewers
for spotting an error in a preliminary version of this manuscript, and for numerous suggestions that greatly increased the readability of this  manuscript.

Mark Braverman is supported in part by NSF Award CCF-1525342 and the NSF
Alan T. Waterman Award, Grant No.\@ 1933331, a Packard Fellowship in
Science and Engineering, and the Simons Collaboration on Algorithms
and Geometry.  
Klim Efremenko is supported by the Israel Science Foundation (ISF) through Grant No.\@ 1456/18 and the European Research Council Grant No.\@ 949707.
Ran Gelles is supported in part by the Israel Science Foundation (ISF) through Grant No.\@ 1078/17 and the United States-Israel Binational Science Foundation (BSF) through Grant No.\@ 2020277.

%

%
%\bibliographystyle{alphabbrv-doi.bst}
%\bibliography{circuit}

\newcommand{\etalchar}[1]{$^{#1}$}

\appendix
\section*{Appendix}

\section{Theorem~\ref{thm:resF}: Efficiency}
\label{app:efficiency}
We now argue that the conversion from $\pi'$ to $F'$ in Theorem~\ref{thm:resF} can be done efficiently in the size of the formula~$F$.
Note that, in general, the conversion of Proposition~\ref{prop:PtoF} may not be efficient, but it is efficient for protocols obtained by the conversion described in Theorem~\ref{thm:resF}. The ideas in this part resemble the analysis presented in~\cite{KLR12} (yet in a somewhat more intuitive manner), and we sketch here the details for self completeness.

Theorem~\ref{thm:resF} requires us to find, in an efficient way, the reachable protocol tree of $\pi'$ assuming $(1/5-\eps,1/5-\eps)$-corruptions.
We show a slightly stronger claim: 
\begin{quote}\it
For any (efficiently computable\footnote{Specifically, given a noise pattern $E$, determining whether or not $E\in \Phi$ should be done efficiently.}) set of noise patterns~$\Phi$, we can obtain the $\Phi$-reachable protocol tree of~$\pi'$ in an efficient way.
\end{quote}  
Given a node $v$ of depth $h\le |\pi'|$ in the protocol tree of~$\pi'$, 
we say that $v$ is $\Phi$-reachable if there exist an input~$(x,y)$ and a noise pattern $E \in \Phi$ such that $\pi(x,y)$ reaches~$v$ when the noise is~$E$. The $\Phi$-reachable protocol tree of~$\pi'$ are all the nodes of depth at most~$|\pi'|$ that are $\Phi$-reachable.

Recall that $|\pi|=\log |F|$ and that $|\pi'|=O_\eps(|\pi|)$; set $d=|\pi'|$. In the following, ``efficiently'' means a  time-complexity of $\textrm{poly}(2^d) = \textrm{poly}(|F|)$.
The idea behind the algorithm is as follows. Given $v$ let $(v_0,v_1,...,v_h=v)$ be the nodes on the unique path from the root to~$v$. 
We examine each one of the possible noise patterns that affects only this path. That is, for each node $v_i$ we decide whether it is corrupted or not; there are at most $2^d$ different such noise patterns. 
For any fixed noise pattern, we verify that all the other edges ($v_i,v_{i+1}$) are consistent with the behavior of the simulation, and reject the noise pattern if they are not. If no inconsistency is found, we show that a valid run of $\pi'$ on some input with that noise pattern leads to~$v$.

First, we recall that we assume that the complete protocol tree of $\pi$ of depth~$|\pi|$ is reachable for some input, given there is no noise at all; that is, we prune all the redundant branches.\footnote{This assumption means that the formula $F$ we start with is optimal; note that obtaining the optimal formula $F$ for a given function may not be efficient. 
Yet, this assumption is not crucial. Alternatively (as performed in~\cite{KLR12}), we can assume that each leaf in $F$ is an independent variable---surely a resilient coding for such a formula would also be a resilient version of $F$, i.e., when only considering inputs that are consistent with~$F$. This, however, causes the (reachable) protocol tree of~$\pi$ to be larger, and respectively increases the size of the output resilient formula, yet keeping its size polynomial in~$|F|$. } 
\begin{assumption}\label{asm:Preachable}
For any node $v$ in the protocol tree of $\pi$ there exists an input $(x,y)$ such that an instance of $\pi$ on $(x,y)$ reaches~$v$. 
\end{assumption}

The $\Phi$-reachability test of~$v$ is performed by Algorithm~\textsc{Reach($\Phi,v$)} depicted below.
\begin{figure}[htb]
\begin{framed}
\noindent 
\textbf{Algorithm}~\textsc{Reach($\Phi,v$)}\textbf{:  $\Phi$-reachability check for $\pi'$}\\
Input: A set $\Phi$ of valid noise patterns; a node $v$ in a complete $k$-ary tree.  
\begin{enumerate}
\item  Given $v$ let $\gamma=(v_0,v_1,...,v_h=v)$ be the path from root to~$v$. 

\item Let $\phi_v$ be the set of all the noise patterns
$(E_{v_0},\ldots, E_{v_{h-1}})\in \{0,\dotsc, k-1,\noerr\}^{h-1}$ that affect the path~$\gamma$,
for which \\
(i) if $E_{v_i}\ne\noerr$, for some $0\le i< h$ then the ${E_{v_i}}$-th child of $v_i$ is $v_{i+1}$, and \\
(ii) $E\in \Phi$.

\item For all $E\in \phi_v$ repeat: 

Check if the path $\gamma$ and the noise $E$ are consistent with some input~$(x,y)$:
	\begin{enumerate}
	\item Verify that, as long there is no noise, $\gamma$ is consistent with the behavior of Algorithm~\ref{alg:codingSmall}, i.e., that the $link$ field in each message links to the previous uncorrupted message (which is known since the noise is known), that large links are encoded correctly (meaning, that $type$ is correct, and $msg$ corresponds to the correct pointer), etc.
	\label{alg:reach:ver1}
	\item 
	Loop over all leaves $l$ in~$\pi$, and let $\gamma_l$ be the path from the root to~$l$. 
	Check that $\gamma$ is consistent with~$\gamma_l$: as long as there is no noise the $msg$ field in messages with $type=std$ are indeed the ones implied by $\gamma_l$---they are consistent with the correct transcript of $\pi$ given that its input leads to the leaf~$l$.
	\label{alg:reach:ver2}
	\item If all verifications pass for a certain leaf $l$, output \texttt{Reachable}. 
	\end{enumerate}
\item output \texttt{Non-Reachable}.
\end{enumerate}
\end{framed}
\end{figure}

\goodbreak

\begin{claim}
Given an efficiently-computable set~$\Phi$,
Algorithm~\textsc{Reach($\Phi,v$)} takes time $\textup{poly}(2^d)$.
\end{claim}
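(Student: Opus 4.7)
The plan is to bound the cost of each phase of \textsc{Reach}$(\Phi,v)$ separately in terms of the depth $d = |\pi'|$, and then invoke the identity $2^d = \mathrm{poly}_\eps(|F|)$ implicit in Theorem~\ref{thm:resF}. Computing the root-to-$v$ path $\gamma = (v_0,\ldots,v_h=v)$ takes $O(h) = O(d)$ time since $v$ lies in the depth-$d$ tree of $\pi'$.

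The key quantitative observation is that the set $\phi_v$ enumerated in Step~2 has size at most $2^d$: at each node $v_i$ along $\gamma$, condition~(i) leaves only two possibilities for $E_{v_i}$, namely $\noerr$ or the unique index of $v_{i+1}$ among the children of $v_i$, so $|\phi_v| \le 2^h \le 2^d$. Filtering by $E \in \Phi$ is polynomial in the pattern length by the assumption that $\Phi$ is efficiently computable.

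For each surviving $E$, Step~3(a) walks along $\gamma$ checking that the $link$, $type$ and $msg$ fields of the uncorrupted transmissions match those prescribed by Algorithm~\ref{alg:codingSmall}, which is $O(d)$ time. Step~3(b) loops over all leaves $l$ of the noiseless protocol $\pi$ and, for each, performs a local comparison of $\gamma$ with the noiseless root-to-$l$ path $\gamma_l$ in $O(d)$ time. The main quantitative point here is that $|\pi| = \mathrm{depth}(\tilde F) = O(\log|F|)$, so $\pi$ has at most $2^{|\pi|} = \mathrm{poly}(|F|)$ leaves; and because $d = O_\eps(|\pi|)$ we have $2^{|\pi|} = \mathrm{poly}_\eps(2^d)$.

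Combining all steps, the total running time is at most
\[
O(d) + 2^d \cdot \bigl( \mathrm{poly}(d) + 2^{|\pi|} \cdot O(d) \bigr) \;=\; 2^d \cdot \mathrm{poly}_\eps(2^d) \;=\; \mathrm{poly}_\eps(2^d).
\]
The hard part is not depth of argument but clean bookkeeping of the three scales in play---the depth $d$ of $\pi'$, the depth $|\pi| = \Theta_\eps(d)$ of the underlying $\KW_f$ protocol, and $|F|$---together with the observation that Assumption~\ref{asm:Preachable} guarantees the noiseless-run enumeration in Step~3(b) is well-defined, so that whenever all verifications pass for some leaf $l$, the input associated with $l$ together with the noise $E$ genuinely drives $\pi'$ through~$\gamma$ to $v$.
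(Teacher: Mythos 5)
Your proposal is correct and follows essentially the same counting argument as the paper: at most $2^d$ candidate noise patterns along $\gamma$ (two choices per node by condition (i)), at most $2^d$ leaves of $\pi$ to check per pattern, and $O(d)$ work per leaf, plus the efficient membership test for $\Phi$, giving $\poly(2^d)$ total. Your version just makes the per-node two-choice argument and the $|\pi| \le d$ bookkeeping explicit, which the paper leaves as "easy to see."
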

\begin{proof}
It is easy to see that there are at most $2^d$ valid noise pattens for the path $\gamma$ that should be considered. For each, we need to go over all possible leaves in~$\pi$ and perform $O(d)$ checks per leaf; the number of such leaves is upper bounded by~$2^d$. The total time is clearly $\poly\;(2^d)$.
\end{proof}

\begin{theorem}
For any input node~$v$ and set~$\Phi$, Algorithm~\textsc{Reach($\Phi,v$)} outputs \textup{\texttt{Reachable}} 
if and only if there exists some input  $(x,y)\in F^{-1}(0)\times F^{-1}(1)$
and some noise $E\in \Phi$, such that $\pi'(x,y)$ reaches~$v$ when the noise is described by~$E$. 
\end{theorem}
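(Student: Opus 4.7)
\medskip

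My plan is to prove the two directions of the equivalence separately, exploiting the key fact that along a single root-to-$v$ path $\gamma=(v_0,\dotsc,v_h=v)$, the behaviour of Algorithm~\ref{alg:codingSmall} is completely determined by (i) the noise pattern restricted to $\gamma$ and (ii) the simulated transcript of $\pi$ that the inputs $(x,y)$ induce. The verifications performed by \textsc{Reach} are designed to capture exactly this two-part determinism.

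For the \emph{if} direction, assume there exist $(x,y)\in F^{-1}(0)\times F^{-1}(1)$ and $E\in\Phi$ such that $\pi'(x,y)$ reaches $v$ under noise $E$. Let $E^{\gamma}$ be the restriction of $E$ to the nodes of $\gamma$. Since reducing noise never removes membership in the noise budget used to define $\Phi$ (and $E\in\Phi$ already), we have $E^\gamma\in\Phi$; moreover, whenever $E^\gamma_{v_i}\ne\noerr$ the protocol must go to $v_{i+1}$, which by definition means $E^\gamma\in\phi_v$. Let $l$ be the leaf of~$\pi$ reached by the (noiseless) execution of $\pi$ on $(x,y)$; such a leaf is well defined since $\pi$ solves $KW_f$. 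I would then verify that step~\ref{alg:reach:ver1} passes because the $link$ fields along $\gamma$ are produced by $\pi'$ itself exactly as a function of the received symbols (hence of $E^\gamma$), and that step~\ref{alg:reach:ver2} passes with this particular leaf $l$ because the $msg$ fields of the uncorrupted $type=\mathsf{std}$ transmissions along $\gamma$ coincide, round by round, with the corresponding symbols along $\gamma_l$ (this is immediate from Lemmas~\ref{lem:uncorruptedMeansGood} and \ref{lem:uncorrupt-Goodtrans} applied to the execution of $\pi'$ that we assumed reaches $v$). Consequently \textsc{Reach} outputs \texttt{Reachable}.

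For the \emph{only if} direction, suppose \textsc{Reach}($\Phi,v$) outputs \texttt{Reachable}. Then there is some $E^\gamma\in\phi_v$ and some leaf $l$ of~$\pi$ that pass both checks. By Assumption~\ref{asm:Preachable}, there exist inputs $(x,y)$ for which the noiseless execution of~$\pi$ reaches $l$. Extend $E^\gamma$ to a full noise pattern~$E$ by putting $\noerr$ off $\gamma$; this preserves membership in $\Phi$ since we only decrease the corruption count. I would then argue by induction on $i=0,1,\dotsc,h$ that the noisy execution $\pi'_{E}(x,y)$ is at node $v_i$ after $i$ steps. The induction step splits on whether $E_{v_i}=\noerr$: if it is not, condition~(i) in the definition of $\phi_v$ forces the next node to be $v_{i+1}$; if it is, the sender's actual transmission determines the next node, and the two consistency checks (a) and (b) together guarantee that this transmission equals the symbol labeling the edge $(v_i,v_{i+1})$ in $\gamma$. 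Indeed, the $link$ field is determined by the algorithm's logic applied to the observed noise (step~\ref{alg:reach:ver1}), while the $msg$ field, for $type=\mathsf{std}$ rounds, is computed as $\pi_0(x\mid T)$ or $\pi_0(y\mid T)$ where $T$ is the current $\textsc{TempTranscript}$, and the check against~$\gamma_l$ (step~\ref{alg:reach:ver2}) ensures that this value matches the symbol along~$\gamma$.

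The main technical obstacle, I expect, will be the induction step of the only-if direction: one must track that the value $T$ computed by the sender in the course of $\pi'_E(x,y)$ is always a prefix of the transcript of $\pi(x,y)$ that leads to~$l$, so that $\pi_0(x\mid T)$ (respectively $\pi_0(y\mid T)$) genuinely equals the $msg$ field dictated by~$\gamma_l$. This requires a careful invariant, essentially a refinement of Lemma~\ref{lem:uncorrupt-Goodtrans} to the restricted setting where the noise is only specified on $\gamma$; the fact that off-path noise is fixed to $\noerr$ and that the \textsc{TempTranscript} procedure depends only on transmissions linked into the current chain (which lie on $\gamma$ by construction of the $link$ fields) keeps the argument local to $\gamma$ and makes the invariant go through.
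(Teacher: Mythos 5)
Your proposal is correct and follows essentially the same route as the paper's own proof: both directions are argued exactly as the paper does, by restricting/extending the noise pattern to the root-to-$v$ path, invoking Assumption~\ref{asm:Preachable} to obtain an input $(x,y)$ that drives $\pi$ to the leaf~$l$, and then arguing that the execution of $\pi'$ on $(x,y)$ under that noise traces out exactly $\gamma$. If anything, your write-up (the explicit induction along $\gamma$ and the flagged invariant about the sender's computed \textsc{TempTranscript} being consistent with $\gamma_l$) spells out details that the paper compresses into ``it is easy to verify.''
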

\begin{proof}
It is easy to see that if an inconsistency is found for some noise~$E$ at Step~(\ref{alg:reach:ver1}), then the obtained $\gamma$ cannot describe a valid instance of~$\pi'$ with the noise~$E$ (regardless of the input). If the inconsistency is found at Step~(\ref{alg:reach:ver2}) it means that $\gamma$ cannot describe a valid instance of~$\pi'$ with noise $E$ and any input that leads to the leaf~$l$. 
Thus, if Step~(\ref{alg:reach:ver2}) fails for all leaves~$l$, there is no input that leads to~$v$ given that specific noise pattern~$E$.

It remains to show that if no inconsistency is found in steps (\ref{alg:reach:ver1})--(\ref{alg:reach:ver2}), then the node~$v$ is $\Phi$-reachable. This follows the same reasoning. 
Let $l$  be the leaf in~$\pi$ and~$E$ an error noise pattern for~$\pi'$
such that when Algorithm~$\textsc{Reach}$ checks $l,E$ it outputs that $v$ is reachable. Since $l$ is reachable in $\pi$ (Assumption~\ref{asm:Preachable}), let $(x,y)\in F^{-1}(0)\times F^{-1}(1)$ be an input that leads $\pi$ to the leaf~$l$; note that $(x,y)$ is a valid input for~$\pi'$. It is easy to verify that running $\pi'$ on $(x,y)$ with the noise $E$ yields exactly the path~$\gamma$. Therefore, $v$ is reachable.
\end{proof}

\end{document}